\newtheorem{Example}{Example}[section]
\newtheorem{Definition}{Definition}[section]
\newtheorem{Theorem}{Theorem}[section]
\newtheorem{Theorem/Definition}{Theorem/Definition}[section]
\newtheorem{Lemma}{Lemma}[section]
\newtheorem{Corollary}{Corollary}[section]
\newcommand{\pd}{\partial}
\newcommand{\bC}{{\mathbb C}}
\newcommand{\bN}{{\mathbb N}}
\newcommand{\bP}{{\mathbb P}}
\newcommand{\bQ}{{\mathbb Q}}
\newcommand{\bR}{{\mathbb R}}
\newcommand{\bZ}{{\mathbb Z}}
\newcommand{\cD}{{\mathcal D}}
\newcommand{\cE}{{\mathcal E}}
\newcommand{\cF}{{\mathcal F}}
\newcommand{\cG}{{\mathcal G}}
\newcommand{\cM}{{\mathcal M}}
\newcommand{\half}{\frac{1}{2}}
\newcommand{\cV}{{\mathcal V}}
\newcommand{\dbar}{\bar{\partial}}
\newcommand{\Mbar}{\overline{\cM}}
\newcommand{\wF}{{\widehat F}}
\newcommand{\wcF}{{\widehat{\mathcal{F}}}}
\newcommand{\be}{\begin{equation}}
\newcommand{\ee}{\end{equation}}
\newcommand{\bea}{\begin{eqnarray}}
\newcommand{\ben}{\begin{eqnarray*}}
\newcommand{\een}{\end{eqnarray*}}
\newcommand{\eea}{\end{eqnarray}}
\DeclareMathOperator{\Aut}{Aut}
\DeclareMathOperator{\ext}{ext}
\DeclareMathOperator{\val}{val}
\definecolor{yellow}{rgb}{1,1,0}
\definecolor{orange}{rgb}{1,.7,0}
\definecolor{red}{rgb}{1,0,0} \definecolor{green}{rgb}{0,1,1}
\definecolor{white}{rgb}{1,1,1}
\definecolor{A}{rgb}{.75,1,.75}
\theoremstyle{remark}
\begin{document}

\newtheorem{myDef}{Definition}
\newtheorem{thm}{Theorem}
\newtheorem{eqn}{equation}

\title[Holomorphic Anomaly Equations and Quantum Spectral Curves]
{A Unified Approach to Holomorphic Anomaly Equations and Quantum Spectral Curves}

\author{Zhiyuan Wang}
\address{Department of Mathematical Sciences\\
Tsinghua University\\Beijing, 100084, China}
\email{zhiyuan-14@mails.tsinghua.edu.cn}

\author{Jian Zhou}
\address{Department of Mathematical Sciences\\
Tsinghua University\\Beijing, 100084, China}
\email{jzhou@math.tsinghua.edu.cn}

\begin{abstract}
We present a unified approach to holomorphic anomaly equations
and some well-known quantum spectral curves.
We develop a formalism of abstract quantum field theory based on
the diagrammatics of the Deligne-Mumford moduli spaces $\overline\cM_{g,n}$
and derive a quadratic recursion relation
for the abstract free energies in terms of the edge-cutting operators.
This abstract quantum field theory can be realized by various choices
of a sequence of holomorphic functions or formal power series and suitable propagators,
and the realized quantum field theory can be represented by formal Gaussian integrals.
Various applications are given.
\end{abstract}

\maketitle

\tableofcontents

\section{Introduction}

There are various types of quadratic recursions which have played an
important role in Gromov-Witten theory.
The Virasoro constraints are a sequence of recursion relations \cite{dvv, ehx, get} for the free energy
defined on the big phase space.
It involves differential operators of infinitely many variables.
There are also quadratic recursion relations for the free energy restricted to the small phase space
that involves only finitely many variables.
For example,
the BCOV holomorphic anomaly equation \cite{bcov2}
and the Eynard-Orantin topological recursion relations \cite{eo}.
The former was developed by
Bershadsky, Cecotti, Ooguri and Vafa
to compute Gromov-Witten invariants of the quintic Calabi-Yau threefold \cite{bcov1, bcov2}.
The latter was first discovered by Eynard and Orantin in the setting of matrix models
and later formulated in a form which is conjectured and proved to hold in general
for $n$-point functions.
For an approach to the holomorphic anaomaly equation from the point of view of Eynard-Orantin
topological relation,
see Eynard-Marin\~o-Orantin \cite{emo}.
In the case of Witten-Kontsevich tau-function,
it has been shown by the second author that the Virasoro constraints are equivalent to
the EO topological recursion relations \cite{zhou3}.
It is  an interesting problem to establish a relationship between the Virasoro constraints
and the EO topological relations.
The basic ingredient for EO topological recursion is an algebraic curve
with some extra data, called the spectral curve.
A suitable quantization of the  spectral curve leads to a Schr\"odinger equation
satisfied by a partition function constructed from the EO topological recursion,
 called the quantum spectral curve (cf. Gukov-Su{\l}kowski \cite{gs}).
It is a natural problem to understand the relationship between the holomorphic
anomaly equation and the quantum spectral curves.
In this work we will present a formalism that gives a unified construction
of the holomorphic anomaly equation and the wave functions for the quantum spectral curves.

To formulate the BCOV holomorphic anomaly equation,
it is crucial to first extend the free energy $F_g(t)$ to a non-holomorphic free energy $\widehat{F}_g(t, \bar{t})$
whose geometric meaning is not mathematically clear in the literature.
It is supposed to be modular-invariant,
and involve contributions from  the boundary strata of the moduli spaces.
Originally the holomorphic anomaly is an equation for
$\pd_{\bar{t}^i}\widehat{F}_g$ which involves $\wF_h$ for $h < g$ of the following form:
\ben
\dbar_{\bar{t}_i} \wF_g = \frac{1}{2}
\bar{C}^{(0)jk}_{\bar{i}} \biggl(D_jD_k \wF_{g-1}
+ \sum^{g-1}_{r=1}
D_j\wF_rD_k\wF_{g-r}\biggr).
\een
It is shown in \cite{bcov2} that this system of equations can be recursively solved  to get
\be
\wF_g(t, \bar{t}) = \Gamma^{(g)}(\Delta^{ij}, \Delta^i, \Delta, C^{(r<g)}_{i_1, \dots, i_n})
+ f^{(g)}(t),
\ee
where $\Gamma^{(g)}$ is a polynomial equation in
$\Delta^{ij}, \Delta^i, \Delta$ (propagators) and
$C^{(r<g)}_{i_1, \dots, i_n}$ (lower genus vertices).
Furthermore,
the authors of that work present
some explicit expressions of $\wF_g$ for $g=2$ and $3$
and the corresponding Feynman graphs and Feynman rules (cf. \cite[(6.7), Figure 17]{bcov2}
and  \cite[(6.8), Figure 18]{bcov2}).
Inspired by Witten \cite{wit1},
Aganagic, Bouchard and Klemm \cite[(2.16)]{abk} obtain
the following expression for $\wF_g$:
\be
\wF_g(t, \bar{t}) = F_g(t) + \Gamma_g \biggl(-\big((\tau - \bar{\tau})^{-1} \big)^{IJ},
\pd_{I_1} \cdots \pd_{I_n}F_{r<g}(t)\biggr),
\ee
where $\tau = (\tau_{ij}) = \big(\frac{\pd^2F_0}{\pd t_i\pd t_j})$.
This was interpreted for matrix models using Eynard-Orantin topological recursion
by Eynard, Mari\~no and Orantin \cite{emo},
and furthermore,
they also reformulate the partition function
as a formal Gaussian integral \cite[(4.27)]{emo} and
present the Feynman graphs and Feynman rules for the terms that contribute to $\Gamma_g$.
These results have been generalized to other models by Grimm, Klemm, Mari\~no and Weiss \cite{gkmw}.
These authors reformulate the holomorphic anomaly equation as a
quadratic recursion relation for the derivative of $\wF_g$ with respective to the propagators
$\Delta^{IJ}$ (cf. \cite[(7.50)]{gkmw}:
\be \label{eqn:HA}
\frac{\pd \wF_g}{\pd \Delta^{IJ}} =
\frac{1}{2}
D_I\pd_J\wF_{g-1} + \frac{1}{2}
\sum^{g-1}_{r=1}
\pd_I\wF_r\pd_J\wF_{g-r}.
\ee
Here the propagators $\Delta^{IJ}$ has the form
\be \label{eqn:Delta}
\Delta^{IJ} = -\frac{1}{2\sqrt{-1}} ((\tau-\bar{\tau})^{-1})^{IJ} +\cE^{IJ},
\ee
where $\cE^{IJ}$ is a holomorphic function.
They also derive the formal Gaussian integral representation of the partition function
and the Feynman expansions for $\wF_g$.

To summarize,
one can see the common features of the above works on holomorphic anomaly equations.
One starts with a sequence of functions $F_g(t)$ ($g\geq 0$) for finitely many variables $t_1, \dots, t_n$,
and seek for another sequence $\wF_g(t,\bar{t})$ such that
$\lim_{\bar{t}\to \infty} \wF_g(t,\bar{t}) = F_g(t)$,
and such that the sequence $\wF_g(t,\bar{t})$ satisfies  quadratic recursion equations
of the form \eqref{eqn:HA}.
Furthermore,
the partition function
$\widehat{Z}(t, \bar{t})= \exp \sum_{g\geq 0} \lambda^{2g-2}\wF_g(t, \bar{t})$
can be represented as a formal Gaussian integral,
and there are Feynman rules that express $\wF_g$ as a polynomial in the propagators
and the derivatives of $F_r$ for $r \leq g$.
As pointed out in \cite{bcov2},
$\widehat{F}(t, \bar{t})$ has contributions from the degenerate Riemann surfaces,
from lower boundary strata of the Deligne-Mumford moduli spaces.
It turns out that the Feynman graphs that appear in the Feynman expansion of
$\wF_g$ in the above works are just the stable graphs that index the stratification of
the Deligne-Mumford moduli spaces  $\Mbar_g$.

In the present work,
we will formulate a general construction of some kind of {\em abstract quantum field theory}
that has the above works on holomorphic anomaly equations
as special realizations.
The key ingredient in our construction is
the diagrammatics for the stratification of the Deligne-Mumford moduli spaces $\overline\cM_{g,n}$,
using the language of the dual graphs.
There are some natural operators acting on these graphs,
including an edge-cutting operator $K$.
We define an abstract free energy
\ben
\widehat{F}_g=\sum_{\Gamma\in\cG_{g,0}^c}\frac{1}{|\Aut(\Gamma)|}\Gamma,
\een
and derive a quadratic relation
\ben
K\wF_g=\frac{1}{2}(D\partial\wF_{g-1}+\sum_{r=1}^{g-1}\partial\wF_{r}\partial\wF_{g-r})
\een
for $g \geq 2$ using the operators $\partial$ and $\cD$ which correspond to adding external edges on the graphs.
For details,
see Section \ref{sec2}.
For generalizations that involves more labelling on the graphs,
see Section \ref{sec3}.

Another ingredient in our formalism is what we call the {\em realizations of the abstract quantum field theory},
we need a sequence of functions $F_g(t)$, $t=(t_1, \dots, t_n)$, $g \geq 0$,
and a symmetric nondegenerate matrix $\kappa$ that depends on $t$ and $\bar{t}$.
They will be used to give the Feynman rules that associate a weight $\omega_\Gamma$ to each $\Gamma$.
This kind of realization of the above abstract quantum field theory can also be
represented as a formal Gaussian integral.
We will refer to this as the {\em formal Gaussian integral representation
of the abstract quantum field theory}.
These will be presented in Section \ref{sec4}.

Using this formalism, given a holomorphic function $F(t)=\sum \lambda^{2g-2}F_g(t)$,
we will obtain a family of free energies $\wF_g(\kappa,t)$.
Here $\wF_g(\kappa,t)$ is a polynomial in $\kappa$ of degree $3g-3$,
whose coefficients are some differential polynomials in $\{F_k\}_{0\leq k\leq g}$ for $g\geq 2$.
In particular, $\wF_g(0,t)=F_g(t)$ for $g \geq 2$.

Let us now mention some applications of our formalism.
First of all,
for the choice $\kappa=(\bar{\tau}-\tau)^{-1}$,
we recover the holomorphic anomaly for matrix models \cite{emo,ey},
or more generally, when the propagator $\kappa^{IJ}$ takes the form of $\Delta^{IJ}$
in \eqref{eqn:Delta},
we recover the construction of holomorphic anomaly equation in Gromov-Witten theory
as discussed in \cite{gkmw}.
For details,
see Section \ref{sec5}.
Secondly,
the propagators $\kappa$ can be also chosen to be holomorphic.
We will use this formalism for holomorphic propagators to reinterpret some results
of the second author \cite{zhou2} in the theory of topological 1D gravity.

We also combine the results on graph sum representation of Eynard-Orantin
recursion \cite{ko, ey2, doss} with the  construction of the wave functions $Z(z)$ of
quantum spectral curves \cite{gs}:
\be
\widehat A Z(z) =0,
\ee
where $Z(z)$
\be
Z(z):=\exp\biggl(\sum_{n=0}^{\infty}\hbar^{n-1}S_n(z)\biggr),
\ee
and $S_n(z)$ are obtained from the spectral curve and Eynard-Orantin topological recursion \cite{gs}
(see also \eqref{eqn:GS}).
We reinterpret $S_n$ as  Feynman sum over labelled stable graphs,
hence relating them to a realization of  abstract QFT as developed in this work (cf. \eqref{eq-reln-S-F}).
We present some well-known examples of quantum spectral curves \cite{gs}
reexamined in our formalism.

In a subsequent work \cite{wz2} we have applied our formalism to the problem of computing the orbifold
Euler characteristics of $\Mbar_{g,n}$ based on the formulas for $\chi_{orb}(\cM_{g,n})$
of Harer-Zagier \cite{hz} and Penner \cite{pen}.
We will deal with the case of propagators of the form \eqref{eqn:Delta}
in another subsequent work that generalizes the formalism of this work \cite{wz3}.

The rest of this paper is arranged as follows.
In Section \ref{sec2} we formulate the diagrammatics and
derive a quadratic recursive relation for the free energies using the edge-cutting operators.
In Section \ref{sec3} we generalize the diagrammatics to the case of labelled graphs.
We give a field theory realization of the diagrammatics in Section \ref{sec4},
where the free energies can be obtained from formal Gaussian integrals and Feynman rules.
The holomorphic anomaly equation is derived in Section \ref{sec5} in our formalism
using non-holomorphic propagators.
In Section \ref{sec6} we write the wave function of the quantum spectral curve
of Gukov-Su{\l}kowski as a summation over stable graphs,
and understand it as a particular realization of the abstract QFT.
In Section \ref{sec7} we apply our formalism to topological 1D gravity
and derive quadratic recursion relations for this theory.
As applications,
we relate the recursion relations to quantum spectral curves
in some important examples.
Other applications include recursion relations in enumeration problems
of stable graphs and general graphs.

\section{Diagrammatics of Moduli Space of Curves and Abstract QFT}\label{sec2}

In this section we recall the diagrammatics related to the stratification of
the Deligne-Mumford moduli spaces $\overline{\cM}_{g,n}$.
We introduce a notion of abstract quantum field theory based on stable graphs.
We define the edge-cutting operators and some other operators
acting on the graphs, and use them to derive a recursive relation of the abstract free energy.

\subsection{Moduli space of stable curves and Feynman diagrams}

We first recall the stratification of the Deligne-Mumford moduli space $\overline\cM_{g,n}$ of stable curves \cite{dm, kn}.

The Deligne-Mumford moduli space $\overline\cM_{g,n}$ is a smooth orbifold of complex dimension $3g-3+n$. There are two types of basic maps over these spaces, namely the forgetful map
\ben
\pi_{n+1}:\overline\cM_{g,n+1}\to\overline\cM_{g,n},
\een
which simply forgets the $(n+1)$-th marked point and contract the unstable component; and the gluing maps
\ben
\xi_1:\overline\cM_{g_1,n_1+1}\times\overline\cM_{g_2,n_2+1}\to\overline\cM_{g_1+g_2,n_1+n_2},
\een
\ben
\xi_2:\overline\cM_{g-1,n+2}\to\overline\cM_{g,n},
\een
which glues two marked points together to produce a new nodal point.
The union of images of $\xi_1$ and $\xi_2$ is the boundary $\overline\cM_{g,n}\backslash\cM_{g,n}$;
moreover, it is of codimension one in $\overline\cM_{g,n}$.
In this way,
we get a stratification of the compactified moduli space $\overline\cM_{g,n}$ by decomposing
it into the union of some products of the moduli spaces $\cM_{h,m}$ of smooth stable curves.

An efficient way of describing this stratification is to use the language of graphs.
Recall that for a stable curve $(C;x_1,\cdots,x_n)$ of genus g with n marked points,
its dual graph is defined as follows. Let $\tilde{C}$ be its normalization,
then we associate a vertex to each of the connected components of $\tilde{C}$,
and write down the genus of this component at this vertex.
For each nodal point of $C$, we draw an internal edge connecting the corresponding vertices;
and for each marked point, we draw an external edge attaching to the corresponding vertex,
and write the markings $i\in\{1,2,\cdots,n\}$ besides the external edges.

In what follows we will forget all the markings of the external edges,
which corresponds to the moduli space $\overline\cM_{g,n}/S_n$ parametrizing
equivalent classes of stable curves with $n$ marked points while we ignore the ordering of these marked points.
In this way, we obtain a graph of genus $g$ with $n$ external edges,
where the genus is defined to be the sum of the number of loops in the graphs
and all the numbers associated to the vertices.

\begin{Example}\label{eg-graph}
All possible dual graphs of curves in $\overline{\cM}_{1,1}$, and $\overline{\cM}_{2,0}$
are listed in Figure \ref{m11} and Figure \ref{m20} respectively.

\begin{figure}[H]
\begin{tikzpicture}
\draw (1,0) circle [radius=0.2];
\draw (1.2,0)--(1.5,0);
\node [align=center,align=center] at (1,0) {$1$};
\end{tikzpicture}
\quad
\begin{tikzpicture}
\draw (1,0) circle [radius=0.2];
\draw (1.2,0)--(1.5,0);
\draw (0.84,0.1) .. controls (0.5,0.2) and (0.5,-0.2) ..  (0.84,-0.1);
\node [align=center,align=center] at (1,0) {$0$};
\end{tikzpicture}
\caption{Dual graphs for $\overline{\cM}_{1,1}$}
\label{m11}
\end{figure}

\begin{figure}[H]
\begin{tikzpicture}
\draw (1,0) circle [radius=0.2];
\node [align=center,align=center] at (1,0) {$2$};
\end{tikzpicture}
\quad
\begin{tikzpicture}
\draw (1,0) circle [radius=0.2];
\draw (0.84,0.1) .. controls (0.5,0.2) and (0.5,-0.2) ..  (0.84,-0.1);
\node [align=center,align=center] at (1,0) {$1$};
\end{tikzpicture}
\quad
\begin{tikzpicture}
\draw (1,0) circle [radius=0.2];
\draw (1.2,0)--(1.4,0);
\draw (1.6,0) circle [radius=0.2];
\node [align=center,align=center] at (1,0) {$1$};
\node [align=center,align=center] at (1.6,0) {$1$};
\end{tikzpicture}
\quad
\begin{tikzpicture}
\draw (1,0) circle [radius=0.2];
\draw (0.84,0.1) .. controls (0.5,0.2) and (0.5,-0.2) ..  (0.84,-0.1);
\draw (1.16,0.1) .. controls (1.5,0.2) and (1.5,-0.2) ..  (1.16,-0.1);
\node [align=center,align=center] at (1,0) {$0$};
\end{tikzpicture}
\quad
\begin{tikzpicture}
\draw (1,0) circle [radius=0.2];
\draw (0.4,0) circle [radius=0.2];
\draw (0.6,0)--(0.8,0);
\draw (1.16,0.1) .. controls (1.5,0.2) and (1.5,-0.2) ..  (1.16,-0.1);
\node [align=center,align=center] at (1,0) {$0$};
\node [align=center,align=center] at (0.4,0) {$1$};
\end{tikzpicture}
\quad
\begin{tikzpicture}
\draw (1,0) circle [radius=0.2];
\draw (0.4,0) circle [radius=0.2];
\draw (0.6,0)--(0.8,0);
\draw (1.16,0.1) .. controls (1.5,0.2) and (1.5,-0.2) ..  (1.16,-0.1);
\draw (0.24,0.1) .. controls (-0.1,0.2) and (-0.1,-0.2) ..  (0.24,-0.1);
\node [align=center,align=center] at (1,0) {$0$};
\node [align=center,align=center] at (0.4,0) {$0$};
\end{tikzpicture}
\quad
\begin{tikzpicture}
\draw (1,0) circle [radius=0.2];
\draw (1.2,0)--(1.4,0);
\draw (1.16,0.1)--(1.44,0.1);
\draw (1.16,-0.1)--(1.44,-0.1);
\draw (1.6,0) circle [radius=0.2];
\node [align=center,align=center] at (1,0) {$0$};
\node [align=center,align=center] at (1.6,0) {$0$};
\end{tikzpicture}
\caption{Dual graphs for $\overline{\cM}_{2,0}$}
\label{m20}
\end{figure}
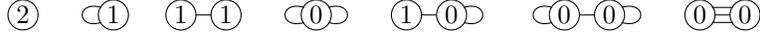
\end{Example}

A graph $\Gamma$ is called stable if all its vertices are stable, that is, the valence of each vertex of genus $0$ is at least three, and the valence of each vertex of genus $1$ is at least one. It is clear that the stability of curves is equivalent to the stability of its dual graph.
The stable graphs will be the Feynman graphs used in this paper.

Let $\cG_{g,n}$ be the set of all stable graphs of genus $g$ with $n$ external edges (not necessarily connected),
and $\cG_{g,n}^c$ be the subset of connected stable graphs.

Now let $\Gamma$ be a stable graph of genus $g$ with $n$ external edges,
and define $\cM_\Gamma$ to be the subset of $\overline\cM_{g,n}/S_n$ consisting of all equivalent classes of curves whose dual graph is $\Gamma$, then
\be\label{eq-stra}
\overline\cM_{g,n}/S_n=\bigsqcup_{\Gamma\in\cG_{g,n}^c}\cM_\Gamma
\ee
gives the stratification of $\overline\cM_{g,n}/S_n$.

\subsection{Operators on vector space associated to the set of stable graphs}

\label{sec:Operators-1}

Denote by $\cV^c$ the vector space over $\bQ$ generated by all {\em connected}
stable graphs,
i.e.,
\be
\cV^c = \bigoplus_{\substack{\Gamma \in \cG^c_{g,n}\\ 2g-2+n > 0}} \bQ\Gamma;
\ee
similarly,
denote by $\cV$ the vector space over $\bQ$ generated by all  stable graphs,
not necessarily connected,
i.e.,
\be
\cV = \bigoplus_{\substack{\Gamma \in \cG_{g,n}\\ 2g-2+n > 0}} \bQ\Gamma.
\ee

In this subsection we define some operators on $\cV$.
These operators may be understood as the inverse procedures of the gluing maps and forgetful maps.

 Define an operator $K$ acting on $\cV$:
\be
K:\cV \to \cV, \;\;\;\Gamma \in \cG_{g,n}\mapsto
K(\Gamma)= \sum_{\Gamma'} \Gamma',
\ee
where the summation is over all graphs $\Gamma'$ obtained by
cutting an internal edge of $\Gamma$.
We will refer to this operator as the {\em edge cutting operator}.
It is easy to see that the operator $K$ does not affect the stability of the graph.
It decreases the number of internal edges by one, and increases the number of external edges by two.
In the picture of stable curves, this operator corresponds to breaking up a node and regard
it as two new marked points.
So the operator $K$ can be regarded as the inverse operator of the gluing maps $\xi_1$ and $\xi_2$.

\begin{Example}
\begin{flalign*}
\begin{split}
&\begin{tikzpicture}
\node [align=center,align=center] at (0.4,0) {$K$};
\draw (1,0) circle [radius=0.2];
\node [align=center,align=center] at (1,0) {$2$};
\node [align=center,align=center] at (1.6,0) {$=0$};
\end{tikzpicture},
\\
&\begin{tikzpicture}
\node [align=center,align=center] at (-0.4,0) {$K$};
\draw (1,0) circle [radius=0.2];
\draw (0.4,0) circle [radius=0.2];
\draw (0.6,0)--(0.8,0);
\draw (1.16,0.1) .. controls (1.5,0.2) and (1.5,-0.2) ..  (1.16,-0.1);
\draw (0.24,0.1) .. controls (-0.1,0.2) and (-0.1,-0.2) ..  (0.24,-0.1);
\node [align=center,align=center] at (1,0) {$0$};
\node [align=center,align=center] at (0.4,0) {$0$};
\node [align=center,align=center] at (1.8,0) {$=2$};
\draw (3.3,0) circle [radius=0.2];
\draw (2.7,0) circle [radius=0.2];
\draw (2.9,0)--(3.1,0);
\draw (3.46,0.1)--(3.7,0.15);
\draw (3.46,-0.1)--(3.7,-0.15);
\draw (2.54,0.1) .. controls (2.2,0.2) and (2.2,-0.2) ..  (2.54,-0.1);
\node [align=center,align=center] at (3.3,0) {$0$};
\node [align=center,align=center] at (2.7,0) {$0$};
\node [align=center,align=center] at (4,0) {$+$};
\draw (5.7,0) circle [radius=0.2];
\draw (4.7,0) circle [radius=0.2];
\draw (4.9,0)--(5.1,0);
\draw (5.3,0)--(5.5,0);
\draw (5.86,0.1) .. controls (6.2,0.2) and (6.2,-0.2) ..  (5.86,-0.1);
\draw (4.54,0.1) .. controls (4.2,0.2) and (4.2,-0.2) ..  (4.54,-0.1);
\node [align=center,align=center] at (5.7,0) {$0$};
\node [align=center,align=center] at (4.7,0) {$0$};
\end{tikzpicture},
\\
&\begin{tikzpicture}
\node [align=center,align=center] at (0.4,0) {$K$};
\draw (1,0) circle [radius=0.2];
\draw (1.2,0)--(1.4,0);
\draw (1.16,0.1)--(1.44,0.1);
\draw (1.16,-0.1)--(1.44,-0.1);
\draw (1.6,0) circle [radius=0.2];
\node [align=center,align=center] at (1,0) {$0$};
\node [align=center,align=center] at (1.6,0) {$0$};
\node [align=center,align=center] at (2.2,0) {$=3$};
\draw (3.1,0) circle [radius=0.2];
\draw (3.28,0.07)--(3.52,0.07);
\draw (3.28,-0.07)--(3.52,-0.07);
\draw (2.6,0)--(2.9,0);
\draw (3.9,0)--(4.2,0);
\draw (3.7,0) circle [radius=0.2];
\node [align=center,align=center] at (3.1,0) {$0$};
\node [align=center,align=center] at (3.7,0) {$0$};
\end{tikzpicture}.
\end{split}
\end{flalign*}

\end{Example}

Let us define another operator $\pd$.
This operator has two parts,
one is to attach an external edge to a vertex and sum over all vertices,
and the other is to break up an internal edge and insert a stable vertex of genus $0$ with three edges.

\begin{Example}
\begin{flalign*}
\begin{split}
&\begin{tikzpicture}
\node [align=center,align=center] at (0.4,0) {$\pd$};
\draw (1,0) circle [radius=0.2];
\draw (1.2,0)--(1.5,0);
\node [align=center,align=center] at (1,0) {$1$};
\node [align=center,align=center] at (1.9,0) {$=$};
\draw (2.8,0) circle [radius=0.2];
\draw (3,0)--(3.3,0);
\draw (2.3,0)--(2.6,0);
\node [align=center,align=center] at (2.8,0) {$1$};
\end{tikzpicture},
\\
&\begin{tikzpicture}
\node [align=center,align=center] at (-0.4,0) {$\pd$};
\draw (1,0) circle [radius=0.2];
\draw (0.4,0) circle [radius=0.2];
\draw (0.6,0)--(0.8,0);
\draw (1.16,0.1) .. controls (1.5,0.2) and (1.5,-0.2) ..  (1.16,-0.1);
\draw (0.24,0.1) .. controls (-0.1,0.2) and (-0.1,-0.2) ..  (0.24,-0.1);
\node [align=center,align=center] at (1,0) {$0$};
\node [align=center,align=center] at (0.4,0) {$0$};
\node [align=center,align=center] at (1.8,0) {$=2$};
\draw (3.2,0) circle [radius=0.2];
\draw (2.6,0) circle [radius=0.2];
\draw (2.8,0)--(3,0);
\draw (2.6,0.2)--(2.6,0.4);
\draw (3.36,0.1) .. controls (3.7,0.2) and (3.7,-0.2) ..  (3.36,-0.1);
\draw (2.44,0.1) .. controls (2.1,0.2) and (2.1,-0.2) ..  (2.44,-0.1);
\node [align=center,align=center] at (2.6,0) {$0$};
\node [align=center,align=center] at (3.2,0) {$0$};
\node [align=center,align=center] at (4,0) {$+2$};
\draw (5.4,0) circle [radius=0.2];
\draw (4.8,0) circle [radius=0.2];
\draw (5,0)--(5.2,0);
\draw (4.64,0.1) .. controls (4.3,0.2) and (4.3,-0.2) ..  (4.64,-0.1);
\draw (5.58,0.07)--(5.82,0.07);
\draw (5.58,-0.07)--(5.82,-0.07);
\draw (6.2,-0)--(6.5,0);
\draw (6,0) circle [radius=0.2];
\node [align=center,align=center] at (5.4,0) {$0$};
\node [align=center,align=center] at (4.8,0) {$0$};
\node [align=center,align=center] at (6,0) {$0$};
\node [align=center,align=center] at (7,0) {$+$};
\draw (8.4,0) circle [radius=0.2];
\draw (7.8,0) circle [radius=0.2];
\draw (9,0) circle [radius=0.2];
\draw (8,0)--(8.2,0);
\draw (8.6,0)--(8.8,0);
\draw (8.4,0.2)--(8.4,0.4);
\draw (9.16,0.1) .. controls (9.5,0.2) and (9.5,-0.2) ..  (9.16,-0.1);
\draw (7.64,0.1) .. controls (7.3,0.2) and (7.3,-0.2) ..  (7.64,-0.1);
\node [align=center,align=center] at (8.4,0) {$0$};
\node [align=center,align=center] at (7.8,0) {$0$};
\node [align=center,align=center] at (9,0) {$0$};
\end{tikzpicture},
\\
&\begin{tikzpicture}
\node [align=center,align=center] at (0.4,0) {$\pd$};
\draw (1,0) circle [radius=0.2];
\draw (1.2,0)--(1.4,0);
\draw (1.16,0.1)--(1.44,0.1);
\draw (1.16,-0.1)--(1.44,-0.1);
\draw (1.6,0) circle [radius=0.2];
\node [align=center,align=center] at (1,0) {$0$};
\node [align=center,align=center] at (1.6,0) {$1$};
\node [align=center,align=center] at (2.1,0) {$=$};
\draw (3,0) circle [radius=0.2];
\draw (3.2,0)--(3.4,0);
\draw (2.5,0)--(2.8,0);
\draw (3.16,0.1)--(3.44,0.1);
\draw (3.16,-0.1)--(3.44,-0.1);
\draw (3.6,0) circle [radius=0.2];
\node [align=center,align=center] at (3,0) {$0$};
\node [align=center,align=center] at (3.6,0) {$1$};
\node [align=center,align=center] at (4.1,0) {$+$};
\draw (4.7,0) circle [radius=0.2];
\draw (4.9,0)--(5.1,0);
\draw (5.5,0)--(5.8,0);
\draw (4.86,0.1)--(5.14,0.1);
\draw (4.86,-0.1)--(5.14,-0.1);
\draw (5.3,0) circle [radius=0.2];
\node [align=center,align=center] at (4.7,0) {$0$};
\node [align=center,align=center] at (5.3,0) {$1$};
\node [align=center,align=center] at (6.2,0) {$+3$};
\draw (6.8,-0.2) circle [radius=0.2];
\draw (6.98,-0.13)--(7.42,-0.13);
\draw (6.98,-0.27)--(7.42,-0.27);
\draw (7.6,-0.2) circle [radius=0.2];
\draw (7.2,0.25) circle [radius=0.2];
\draw (6.94,-0.06)--(7.06,0.11);
\draw (7.46,-0.06)--(7.34,0.11);
\node [align=center,align=center] at (6.8,-0.2) {$0$};
\node [align=center,align=center] at (7.6,-0.2) {$1$};
\node [align=center,align=center] at (7.2,0.25) {$0$};
\draw (7.2,0.45)--(7.2,0.65);
\end{tikzpicture}.
\end{split}
\end{flalign*}
\end{Example}

In the dual picture,
the operator $\pd$ is just adding one marked point,
or breaking up a node of the curve and gluing a $3$-pointed sphere.

Now we define a third operator $\gamma$,
which acts on $\Gamma$ by simply attaching a stable vertex of genus $0$
with three edges to an external edge of $\Gamma$,
and summing over all external edges.

\begin{Example}
\begin{flalign*}
\begin{split}
&\begin{tikzpicture}
\node [align=center,align=center] at (0.5,0) {$\gamma$};
\draw (1,0) circle [radius=0.2];
\draw (1.2,0)--(1.5,0);
\node [align=center,align=center] at (1,0) {$1$};
\node [align=center,align=center] at (1.9,0) {$=$};
\draw (2.4,0) circle [radius=0.2];
\draw (3,0) circle [radius=0.2];
\draw (2.6,0)--(2.8,0);
\draw (3.16,0.1)--(3.4,0.15);
\draw (3.16,-0.1)--(3.4,-0.15);
\node [align=center,align=center] at (2.4,0) {$1$};
\node [align=center,align=center] at (3,0) {$0$};
\end{tikzpicture},
\\
&\begin{tikzpicture}
\node [align=center,align=center] at (0.2,0) {$\gamma$};
\draw (0.9,0) circle [radius=0.2];
\draw (1.06,0.1)--(1.3,0.15);
\draw (1.06,-0.1)--(1.3,-0.15);
\draw (0.74,0.1) .. controls (0.4,0.2) and (0.4,-0.2) ..  (0.74,-0.1);
\node [align=center,align=center] at (0.9,0) {$0$};
\node [align=center,align=center] at (1.8,0) {$=2$};
\draw (2.7,0) circle [radius=0.2];
\draw (3.3,0) circle [radius=0.2];
\draw (2.9,0)--(3.1,0);
\draw (2.7,0.2)--(2.7,0.4);
\draw (3.47,0.1)--(3.7,0.15);
\draw (3.47,-0.1)--(3.7,-0.15);
\draw (2.54,0.1) .. controls (2.2,0.2) and (2.2,-0.2) ..  (2.54,-0.1);
\node [align=center,align=center] at (2.7,0) {$0$};
\node [align=center,align=center] at (3.3,0) {$0$};
\end{tikzpicture}.
\end{split}
\end{flalign*}
\end{Example}

If the graph $\Gamma$ has no external edge, then we may simply have $\gamma(\Gamma)=0$. This operator is to glue a $3$-pointed sphere to the curve along marked points to get a new node.

Clearly the operators $\pd$ and $\gamma$ also preserves the stability,
and they both increase the number of external edges by one.
Intuitively, $\pd+\gamma$ is the inverse of  the forgetful map.
Write $\cD=\pd+\gamma$. Then $\cD$ preserves the subspaces $\cV^c$.

\subsection{Abstract free energy}
\label{sec:Abstract-1}

Let us now define an {\em abstract quantum field theory}
based on the diagrammatics of stable graphs discussed above.

\begin{Definition}
For $g\geq 2$, we define the {\em abstract free energy} of genus $g$ to be
\be
\widehat{\cF}_g=\sum_{\Gamma\in\cG_{g,0}^c}\frac{1}{|\Aut(\Gamma)|}\Gamma.
\ee
In general, for $2g-2+n>0$, define the abstract $n$-point function of genus $g$ to be
\be
\widehat{\cF}_{g,n}=\sum_{\Gamma\in\cG_{g,n}^c}\frac{1}{|\Aut(\Gamma)|}\Gamma,
\ee
then the free energy $\widehat{\cF}_g$ is just $\widehat{\cF}_{g,0}$.
\end{Definition}

\begin{Example}\label{eg-free energy}
Here we list some explicit expressions of $\widehat{\cF}_{g,n}$ for small $(g,n)$.

\begin{flalign*}
\begin{tikzpicture}
\node [align=center,align=center] at (0.3,0) {$\widehat{\cF}_{0,3}=\frac{1}{6}$};
\draw (1.6,0) circle [radius=0.2];
\draw (1.1,0)--(1.4,0);
\draw (1.76,0.1)--(2,0.15);
\draw (1.76,-0.1)--(2,-0.15);
\node [align=center,align=center] at (1.6,0) {$0$};
\end{tikzpicture},&&
\end{flalign*}

\begin{flalign*}
\begin{tikzpicture}
\node [align=center,align=center] at (0.3,0) {$\widehat{\cF}_{0,4}=\frac{1}{24}$};
\draw (1.6,0) circle [radius=0.2];
\draw (1.1,0.15)--(1.44,0.1);
\draw (1.1,-0.15)--(1.44,-0.1);
\draw (1.76,0.1)--(2.1,0.15);
\draw (1.76,-0.1)--(2.1,-0.15);
\node [align=center,align=center] at (1.6,0) {$0$};
\node [align=center,align=center] at (2.6,0) {$+\frac{1}{8}$};
\draw (1.6+1.9,0) circle [radius=0.2];
\draw (2.2+1.9,0) circle [radius=0.2];
\draw (1.1+1.9,0.15)--(1.44+1.9,0.1);
\draw (1.1+1.9,-0.15)--(1.44+1.9,-0.1);
\draw (1.1+1.9,0.15)--(1.44+1.9,0.1);
\draw (2.36+1.9,0.1)--(2.7+1.9,0.15);
\draw (2.36+1.9,-0.1)--(2.7+1.9,-0.15);
\draw (1.8+1.9,0)--(2+1.9,0);
\node [align=center,align=center] at (1.6+1.9,0) {$0$};
\node [align=center,align=center] at (2.2+1.9,0) {$0$};
\end{tikzpicture},&&
\end{flalign*}

\begin{flalign*}
\begin{tikzpicture}
\node [align=center,align=center] at (0.3-0.1,0) {$\widehat{\cF}_{0,5}=\frac{1}{120}$};
\draw (1.6,0) circle [radius=0.2];
\draw (1.1,0.15)--(1.44,0.1);
\draw (1.1,-0.15)--(1.44,-0.1);
\draw (1.76,0.1)--(2.1,0.15);
\draw (1.76,-0.1)--(2.1,-0.15);
\draw (1.6,0.2)--(1.6,0.4);
\node [align=center,align=center] at (1.6,0) {$0$};
\node [align=center,align=center] at (2.6,0) {$+\frac{1}{8}$};
\draw (1.6+1.9,0) circle [radius=0.2];
\draw (2.2+1.9,0) circle [radius=0.2];
\draw (2.8+1.9,0) circle [radius=0.2];
\draw (1.1+1.9,0.15)--(1.44+1.9,0.1);
\draw (1.1+1.9,-0.15)--(1.44+1.9,-0.1);
\draw (1.1+1.9,0.15)--(1.44+1.9,0.1);
\draw (2.96+1.9,0.1)--(3.3+1.9,0.15);
\draw (2.96+1.9,-0.1)--(3.3+1.9,-0.15);
\draw (1.8+1.9,0)--(2+1.9,0);
\draw (2.4+1.9,0)--(2.6+1.9,0);
\draw (2.2+1.9,0.2)--(2.2+1.9,0.4);
\node [align=center,align=center] at (1.6+1.9,0) {$0$};
\node [align=center,align=center] at (2.2+1.9,0) {$0$};
\node [align=center,align=center] at (2.8+1.9,0) {$0$};
\node [align=center,align=center] at (5.8,0) {$+\frac{1}{12}$};
\draw (1.6+5.1,0) circle [radius=0.2];
\draw (2.2+5.1,0) circle [radius=0.2];
\draw (1.1+5.1,0.15)--(1.44+5.1,0.1);
\draw (1.1+5.1,-0.15)--(1.44+5.1,-0.1);
\draw (1.1+5.1,0.15)--(1.44+5.1,0.1);
\draw (2.36+5.1,0.1)--(2.7+5.1,0.15);
\draw (2.36+5.1,-0.1)--(2.7+5.1,-0.15);
\draw (1.8+5.1,0)--(2+5.1,0);
\draw (2.4+5.1,0)--(2.7+5.1,0);
\node [align=center,align=center] at (1.6+5.1,0) {$0$};
\node [align=center,align=center] at (2.2+5.1,0) {$0$};
\end{tikzpicture},&&
\end{flalign*}

\begin{flalign*}
\begin{tikzpicture}
\node [align=center,align=center] at (0.3-0.2,0) {$\widehat{\cF}_{1,1}=$};
\draw (1,0) circle [radius=0.2];
\draw (1.2,0)--(1.5,0);
\node [align=center,align=center] at (1,0) {$1$};
\node [align=center,align=center] at (2,0) {$+\frac{1}{2}$};
\draw (1+1.8,0) circle [radius=0.2];
\draw (1.2+1.8,0)--(1.5+1.8,0);
\draw (0.84+1.8,0.1) .. controls (0.5+1.8,0.2) and (0.5+1.8,-0.2) ..  (0.84+1.8,-0.1);
\node [align=center,align=center] at (1+1.8,0) {$0$};
\end{tikzpicture},&&
\end{flalign*}

\begin{flalign*}
\begin{tikzpicture}
\node [align=center,align=center] at (0.3-0.6,0) {$\widehat{\cF}_{1,2}=\frac{1}{2}$};
\draw (1,0) circle [radius=0.2];
\draw (1.2,0)--(1.5,0);
\draw (0.5,0)--(0.8,0);
\node [align=center,align=center] at (1,0) {$1$};
\node [align=center,align=center] at (2,0) {$+\frac{1}{4}$};
\draw (1+1.8,0) circle [radius=0.2];
\draw (1.17+1.8,0.1)--(1.4+1.8,0.15);
\draw (1.17+1.8,-0.1)--(1.4+1.8,-0.15);
\draw (0.84+1.8,0.1) .. controls (0.5+1.8,0.2) and (0.5+1.8,-0.2) ..  (0.84+1.8,-0.1);
\node [align=center,align=center] at (1+1.8,0) {$0$};
\node [align=center,align=center] at (3.7,0) {$+\frac{1}{2}$};
\draw (1+3.9,0) circle [radius=0.2];
\draw (0.4+3.9,0) circle [radius=0.2];
\draw (1.17+3.9,0.1)--(1.4+3.9,0.15);
\draw (1.17+3.9,-0.1)--(1.4+3.9,-0.15);
\draw (0.6+3.9,0)--(0.8+3.9,0);
\node [align=center,align=center] at (1+3.9,0) {$0$};
\node [align=center,align=center] at (0.4+3.9,0) {$1$};
\node [align=center,align=center] at (5.8,0) {$+\frac{1}{4}$};
\draw (1+6.2,0) circle [radius=0.2];
\draw (0.4+6.2,0) circle [radius=0.2];
\draw (1.17+6.2,0.1)--(1.4+6.2,0.15);
\draw (1.17+6.2,-0.1)--(1.4+6.2,-0.15);
\draw (0.6+6.2,0)--(0.8+6.2,0);
\node [align=center,align=center] at (1+6.3-0.1,0) {$0$};
\node [align=center,align=center] at (0.4+6.3-0.1,0) {$0$};
\draw (0.24+6.3-0.1,0.1) .. controls (-0.1+6.3-0.1,0.2) and (-0.1+6.3-0.1,-0.2) ..  (0.24+6.3-0.1,-0.1);
\node [align=center,align=center] at (8.2-0.1,0) {$+\frac{1}{4}$};
\draw (1+8.1-0.1,0) circle [radius=0.2];
\draw (0.5+8.1-0.1,0)--(0.8+8.1-0.1,0);
\draw (1.18+8.1-0.1,0.07)--(1.42+8.1-0.1,0.07);
\draw (1.18+8.1-0.1,-0.07)--(1.42+8.1-0.1,-0.07);
\draw (1.8+8.1-0.1,-0)--(2.1+8.1-0.1,0);
\draw (1.6+8.1-0.1,0) circle [radius=0.2];
\node [align=center,align=center] at (1+8.1-0.1,0) {$0$};
\node [align=center,align=center] at (1.6+8.1-0.1,0) {$0$};
\end{tikzpicture},&&
\end{flalign*}

\begin{flalign*}
\begin{split}
&\begin{tikzpicture}
\node [align=center,align=center] at (0.3,0) {$\widehat{\cF}_{1,3}=\frac{1}{6}$};
\draw (1.6,0) circle [radius=0.2];
\draw (1.1,0)--(1.4,0);
\draw (1.76,0.1)--(2,0.15);
\draw (1.76,-0.1)--(2,-0.15);
\node [align=center,align=center] at (1.6,0) {$1$};
\node [align=center,align=center] at (2.6,0) {$+\frac{1}{12}$};
\draw (1+2.5,0) circle [radius=0.2];
\draw (1.17+2.5,0.1)--(1.4+2.5,0.15);
\draw (1.17+2.5,-0.1)--(1.4+2.5,-0.15);
\draw (1.2+2.5,0)--(1.5+2.5,0);
\draw (0.84+2.5,0.1) .. controls (0.5+2.5,0.2) and (0.5+2.5,-0.2) ..  (0.84+2.5,-0.1);
\node [align=center,align=center] at (1+2.5,0) {$0$};
\node [align=center,align=center] at (4.5,0) {$+\frac{1}{2}$};
\draw (1+5,0) circle [radius=0.2];
\draw (0.4+5,0) circle [radius=0.2];
\draw (1.17+5,0.1)--(1.4+5,0.15);
\draw (1.17+5,-0.1)--(1.4+5,-0.15);
\draw (-0.1+5,0)--(0.2+5,0);
\draw (0.6+5,0)--(0.8+5,0);
\node [align=center,align=center] at (1+5,0) {$0$};
\node [align=center,align=center] at (0.4+5,0) {$1$};
\node [align=center,align=center] at (6.8,0) {$+\frac{1}{6}$};
\draw (1+7,0) circle [radius=0.2];
\draw (0.4+7,0) circle [radius=0.2];
\draw (1.17+7,0.1)--(1.4+7,0.15);
\draw (1.17+7,-0.1)--(1.4+7,-0.15);
\draw (0.6+7,0)--(0.8+7,0);
\draw (1.2+7,0)--(1.5+7,0);
\node [align=center,align=center] at (1+7,0) {$0$};
\node [align=center,align=center] at (0.4+7,0) {$1$};
\node [align=center,align=center] at (9,0) {$+\frac{1}{4}$};
\draw (1+9.4,0) circle [radius=0.2];
\draw (0.4+9.4,0) circle [radius=0.2];
\draw (1.17+9.4,0.1)--(1.4+9.4,0.15);
\draw (1.17+9.4,-0.1)--(1.4+9.4,-0.15);
\draw (0.6+9.4,0)--(0.8+9.4,0);
\draw (0.4+9.4,0.2)--(0.4+9.4,0.4);
\node [align=center,align=center] at (1+9.4,0) {$0$};
\node [align=center,align=center] at (0.4+9.4,0) {$0$};
\draw (0.24+9.4,0.1) .. controls (-0.1+9.4,0.2) and (-0.1+9.4,-0.2) ..  (0.24+9.4,-0.1);
\end{tikzpicture}
\\
&\qquad\quad
\begin{tikzpicture}
\node [align=center,align=center] at (0.1,0) {$+\frac{1}{4}$};
\draw (1,0) circle [radius=0.2];
\draw (0.5,0)--(0.8,0);
\draw (1.18,0.07)--(1.42,0.07);
\draw (1.18,-0.07)--(1.42,-0.07);
\draw (1.76,0.1)--(2,0.15);
\draw (1.76,-0.1)--(2,-0.15);
\draw (1.6,0) circle [radius=0.2];
\node [align=center,align=center] at (1,0) {$0$};
\node [align=center,align=center] at (1.6,0) {$0$};
\node [align=center,align=center] at (2.5,0) {$+\frac{1}{12}$};
\draw (1+3,0) circle [radius=0.2];
\draw (0.4+3,0) circle [radius=0.2];
\draw (1.17+3,0.1)--(1.4+3,0.15);
\draw (1.17+3,-0.1)--(1.4+3,-0.15);
\draw (0.6+3,0)--(0.8+3,0);
\draw (1.2+3,0)--(1.5+3,0);
\node [align=center,align=center] at (1+3,0) {$0$};
\node [align=center,align=center] at (0.4+3,0) {$0$};
\draw (0.24+3,0.1) .. controls (-0.1+3,0.2) and (-0.1+3,-0.2) ..  (0.24+3,-0.1);
\node [align=center,align=center] at (5,0) {$+\frac{1}{2}$};
\draw (1+5.8,0) circle [radius=0.2];
\draw (0.4+5.8,0) circle [radius=0.2];
\draw (-0.2+5.8,0) circle [radius=0.2];
\draw (1.17+5.8,0.1)--(1.4+5.8,0.15);
\draw (1.17+5.8,-0.1)--(1.4+5.8,-0.15);
\draw (0+5.8,0)--(0.2+5.8,0);
\draw (0.6+5.8,0)--(0.8+5.8,0);
\draw (0.4+5.8,0.2)--(0.4+5.8,0.4);
\node [align=center,align=center] at (1+5.8,0) {$0$};
\node [align=center,align=center] at (0.4+5.8,0) {$0$};
\node [align=center,align=center] at (-0.2+5.8,0) {$1$};
\node [align=center,align=center] at (7.7,0) {$+\frac{1}{4}$};
\draw (1+8.8,0) circle [radius=0.2];
\draw (0.4+8.8,0) circle [radius=0.2];
\draw (-0.2+8.8,0) circle [radius=0.2];
\draw (1.17+8.8,0.1)--(1.4+8.8,0.15);
\draw (1.17+8.8,-0.1)--(1.4+8.8,-0.15);
\draw (0.6+8.8,0)--(0.8+8.8,0);
\draw (-0.7+8.8,0)--(-0.4+8.8,0);
\draw (-0.02+8.8,0.07)--(0.22+8.8,0.07);
\draw (-0.02+8.8,-0.07)--(0.22+8.8,-0.07);
\node [align=center,align=center] at (1+8.8,0) {$0$};
\node [align=center,align=center] at (0.4+8.8,0) {$0$};
\node [align=center,align=center] at (-0.2+8.8,0) {$0$};
\end{tikzpicture}
\\
&\qquad\quad
\begin{tikzpicture}
\node [align=center,align=center] at (-1,0) {$+\frac{1}{4}$};
\draw (1,0) circle [radius=0.2];
\draw (0.4,0) circle [radius=0.2];
\draw (-0.2,0) circle [radius=0.2];
\draw (1.17,0.1)--(1.4,0.15);
\draw (1.17,-0.1)--(1.4,-0.15);
\draw (0,0)--(0.2,0);
\draw (0.6,0)--(0.8,0);
\draw (0.4,0.2)--(0.4,0.4);
\node [align=center,align=center] at (1,0) {$0$};
\node [align=center,align=center] at (0.4,0) {$0$};
\node [align=center,align=center] at (-0.2,0) {$0$};
\draw (-0.36,0.1) .. controls (-0.7,0.2) and (-0.7,-0.2) ..  (-0.36,-0.1);
\node [align=center,align=center] at (1.9,0) {$+\frac{1}{6}$};
\draw (1+1.8,0-0.1) circle [radius=0.2];
\draw (1.2+1.8,0-0.1)--(1.6+1.8,0-0.1);
\draw (0.5+1.8,0-0.1)--(0.8+1.8,0-0.1);
\draw (2+1.8,0-0.1)--(2.3+1.8,0-0.1);
\draw (1.8+1.8,0-0.1) circle [radius=0.2];
\draw (1.4+1.8,0.45-0.1) circle [radius=0.2];
\draw (1.14+1.8,0.14-0.1)--(1.26+1.8,0.31-0.1);
\draw (1.66+1.8,0.14-0.1)--(1.54+1.8,0.31-0.1);
\node [align=center,align=center] at (1+1.8,0-0.1) {$0$};
\node [align=center,align=center] at (1.8+1.8,0-0.1) {$0$};
\node [align=center,align=center] at (1.4+1.8,0.45-0.1) {$0$};
\draw (1.4+1.8,0.65-0.1)--(1.4+1.8,0.85-0.1);
\end{tikzpicture},
\end{split}&&
\end{flalign*}

\begin{flalign*}
\begin{tikzpicture}
\node [align=center,align=center] at (0.1+0.4,0) {$\widehat{\cF}_{2,0}=$};
\draw (1+0.3,0) circle [radius=0.2];
\node [align=center,align=center] at (1+0.3,0) {$2$};
\node [align=center,align=center] at (1.6+0.2,0) {$+\frac{1}{2}$};
\draw (1+1.4+0.2,0) circle [radius=0.2];
\draw (0.84+1.4+0.2,0.1) .. controls (0.5+1.4+0.2,0.2) and (0.5+1.4+0.2,-0.2) ..  (0.84+1.4+0.2,-0.1);
\node [align=center,align=center] at (1+1.4+0.2,0) {$1$};
\node [align=center,align=center] at (3+0.2,0) {$+\frac{1}{2}$};
\draw (1+2.6+0.2,0) circle [radius=0.2];
\draw (1.2+2.6+0.2,0)--(1.4+2.6+0.2,0);
\draw (1.6+2.6+0.2,0) circle [radius=0.2];
\node [align=center,align=center] at (1+2.6+0.2,0) {$1$};
\node [align=center,align=center] at (1.6+2.6+0.2,0) {$1$};
\node [align=center,align=center] at (5,0) {$+\frac{1}{8}$};
\draw (1+4.8,0) circle [radius=0.2];
\draw (0.84+4.8,0.1) .. controls (0.5+4.8,0.2) and (0.5+4.8,-0.2) ..  (0.84+4.8,-0.1);
\draw (1.16+4.8,0.1) .. controls (1.5+4.8,0.2) and (1.5+4.8,-0.2) ..  (1.16+4.8,-0.1);
\node [align=center,align=center] at (1+4.8,0) {$0$};
\node [align=center,align=center] at (6.6,0) {$+\frac{1}{2}$};
\draw (1+6.8,0) circle [radius=0.2];
\draw (0.4+6.8,0) circle [radius=0.2];
\draw (0.6+6.8,0)--(0.8+6.8,0);
\draw (1.16+6.8,0.1) .. controls (1.5+6.8,0.2) and (1.5+6.8,-0.2) ..  (1.16+6.8,-0.1);
\node [align=center,align=center] at (1+6.8,0) {$0$};
\node [align=center,align=center] at (0.4+6.8,0) {$1$};
\node [align=center,align=center] at (8.6,0) {$+\frac{1}{8}$};
\draw (1+9,0) circle [radius=0.2];
\draw (0.4+9,0) circle [radius=0.2];
\draw (0.6+9,0)--(0.8+9,0);
\draw (1.16+9,0.1) .. controls (1.5+9,0.2) and (1.5+9,-0.2) ..  (1.16+9,-0.1);
\draw (0.24+9,0.1) .. controls (-0.1+9,0.2) and (-0.1+9,-0.2) ..  (0.24+9,-0.1);
\node [align=center,align=center] at (1+9,0) {$0$};
\node [align=center,align=center] at (0.4+9,0) {$0$};
\node [align=center,align=center] at (10.6+0.2,0) {$+\frac{1}{12}$};
\draw (1+10.2+0.2,0) circle [radius=0.2];
\draw (1.2+10.2+0.2,0)--(1.4+10.2+0.2,0);
\draw (1.16+10.2+0.2,0.1)--(1.44+10.2+0.2,0.1);
\draw (1.16+10.2+0.2,-0.1)--(1.44+10.2+0.2,-0.1);
\draw (1.6+10.2+0.2,0) circle [radius=0.2];
\node [align=center,align=center] at (1+10.2+0.2,0) {$0$};
\node [align=center,align=center] at (1.6+10.2+0.2,0) {$0$};
\end{tikzpicture},&&
\end{flalign*}

\begin{flalign*}
\begin{split}
&\begin{tikzpicture}
\node [align=center,align=center] at (0.2,0) {$\widehat{\cF}_{2,1}=$};
\draw (1,0) circle [radius=0.2];
\draw (1.2,0)--(1.5,0);
\node [align=center,align=center] at (1,0) {$2$};
\node [align=center,align=center] at (2,0) {$+\frac{1}{2}$};
\draw (1+1.8,0) circle [radius=0.2];
\draw (1.2+1.8,0)--(1.5+1.8,0);
\draw (0.84+1.8,0.1) .. controls (0.5+1.8,0.2) and (0.5+1.8,-0.2) ..  (0.84+1.8,-0.1);
\node [align=center,align=center] at (1+1.8,0) {$1$};
\node [align=center,align=center] at (3.7,0) {$+$};
\draw (1+3.2,0) circle [radius=0.2];
\draw (1.2+3.2,0)--(1.4+3.2,0);
\draw (1.8+3.2,0)--(2.1+3.2,0);
\draw (1.6+3.2,0) circle [radius=0.2];
\node [align=center,align=center] at (1+3.2,0) {$1$};
\node [align=center,align=center] at (1.6+3.2,0) {$1$};
\node [align=center,align=center] at (5.8,0) {$+\frac{1}{8}$};
\draw (1+5.6,0) circle [radius=0.2];
\draw (1+5.6,0.2)--(1+5.6,0.4);
\draw (0.84+5.6,0.1) .. controls (0.5+5.6,0.2) and (0.5+5.6,-0.2) ..  (0.84+5.6,-0.1);
\draw (1.16+5.6,0.1) .. controls (1.5+5.6,0.2) and (1.5+5.6,-0.2) ..  (1.16+5.6,-0.1);
\node [align=center,align=center] at (1+5.6,0) {$0$};
\node [align=center,align=center] at (7.4,0) {$+\frac{1}{2}$};
\draw (1+7.9,0) circle [radius=0.2];
\draw (0.4+7.9,0) circle [radius=0.2];
\draw (0.6+7.9,0)--(0.8+7.9,0);
\draw (-0.1+7.9,0)--(0.2+7.9,0);
\draw (1.16+7.9,0.1) .. controls (1.5+7.9,0.2) and (1.5+7.9,-0.2) ..  (1.16+7.9,-0.1);
\node [align=center,align=center] at (1+7.9,0) {$0$};
\node [align=center,align=center] at (0.4+7.9,0) {$1$};
\node [align=center,align=center] at (9.8,0) {$+\frac{1}{2}$};
\draw (1+9.4,0) circle [radius=0.2];
\draw (1.18+9.4,0.07)--(1.42+9.4,0.07);
\draw (1.18+9.4,-0.07)--(1.42+9.4,-0.07);
\draw (1.8+9.4,-0)--(2.1+9.4,0);
\draw (1.6+9.4,0) circle [radius=0.2];
\node [align=center,align=center] at (1+9.4,0) {$1$};
\node [align=center,align=center] at (1.6+9.4,0) {$0$};
\end{tikzpicture}
\\
&\qquad\quad
\begin{tikzpicture}
\node [align=center,align=center] at (-0.2-0.6,0) {$+\frac{1}{2}$};
\draw (1-0.6,0) circle [radius=0.2];
\draw (0.4-0.6,0) circle [radius=0.2];
\draw (0.6-0.6,0)--(0.8-0.6,0);
\draw (1-0.6,0.2)--(1-0.6,0.4);
\draw (1.16-0.6,0.1) .. controls (1.5-0.6,0.2) and (1.5-0.6,-0.2) ..  (1.16-0.6,-0.1);
\node [align=center,align=center] at (1-0.6,0) {$0$};
\node [align=center,align=center] at (0.4-0.6,0) {$1$};
\node [align=center,align=center] at (1.8-0.6,0) {$+\frac{1}{2}$};
\draw (1+2,0) circle [radius=0.2];
\draw (0.4+2,0) circle [radius=0.2];
\draw (-0.2+2,0) circle [radius=0.2];
\draw (0+2,0)--(0.2+2,0);
\draw (0.6+2,0)--(0.8+2,0);
\draw (0.4+2,0.2)--(0.4+2,0.4);
\node [align=center,align=center] at (1+2,0) {$1$};
\node [align=center,align=center] at (0.4+2,0) {$0$};
\node [align=center,align=center] at (-0.2+2,0) {$1$};
\node [align=center,align=center] at (3.6,0) {$+\frac{1}{4}$};
\draw (1+4,0) circle [radius=0.2];
\draw (0.4+4,0) circle [radius=0.2];
\draw (0.6+4,0)--(0.8+4,0);
\draw (1+4,0.2)--(1+4,0.4);
\draw (1.16+4,0.1) .. controls (1.5+4,0.2) and (1.5+4,-0.2) ..  (1.16+4,-0.1);
\draw (0.24+4,0.1) .. controls (-0.1+4,0.2) and (-0.1+4,-0.2) ..  (0.24+4,-0.1);
\node [align=center,align=center] at (1+4,0) {$0$};
\node [align=center,align=center] at (0.4+4,0) {$0$};
\node [align=center,align=center] at (5.8,0) {$+\frac{1}{4}$};
\draw (1+6.3,0) circle [radius=0.2];
\draw (0.4+6.3,0) circle [radius=0.2];
\draw (0.58+6.3,0.07)--(0.82+6.3,0.07);
\draw (0.58+6.3,-0.07)--(0.82+6.3,-0.07);
\draw (-0.1+6.3,0)--(0.2+6.3,0);
\draw (1.16+6.3,0.1) .. controls (1.5+6.3,0.2) and (1.5+6.3,-0.2) ..  (1.16+6.3,-0.1);
\node [align=center,align=center] at (1+6.3,0) {$0$};
\node [align=center,align=center] at (0.4+6.3,0) {$0$};
\end{tikzpicture}
\\
&\qquad\quad
\begin{tikzpicture}
\node [align=center,align=center] at (0.4,0) {$+\frac{1}{6}$};
\draw (1,0) circle [radius=0.2];
\draw (1.2,0)--(1.4,0);
\draw (1.8,0)--(2.1,0);
\draw (1.16,0.1)--(1.44,0.1);
\draw (1.16,-0.1)--(1.44,-0.1);
\draw (1.6,0) circle [radius=0.2];
\node [align=center,align=center] at (1,0) {$0$};
\node [align=center,align=center] at (1.6,0) {$0$};
\node [align=center,align=center] at (2.5,0) {$+\frac{1}{2}$};
\draw (1+3.5,0) circle [radius=0.2];
\draw (0.4+3.5,0) circle [radius=0.2];
\draw (-0.2+3.5,0) circle [radius=0.2];
\draw (0+3.5,0)--(0.2+3.5,0);
\draw (0.6+3.5,0)--(0.8+3.5,0);
\draw (0.4+3.5,0.2)--(0.4+3.5,0.4);
\node [align=center,align=center] at (1+3.5,0) {$1$};
\node [align=center,align=center] at (0.4+3.5,0) {$0$};
\node [align=center,align=center] at (-0.2+3.5,0) {$0$};
\draw (-0.36+3.5,0.1) .. controls (-0.7+3.5,0.2) and (-0.7+3.5,-0.2) ..  (-0.36+3.5,-0.1);
\node [align=center,align=center] at (5.1,0) {$+\frac{1}{2}$};
\draw (1+6.2,0) circle [radius=0.2];
\draw (0.4+6.2,0) circle [radius=0.2];
\draw (-0.2+6.2,0) circle [radius=0.2];
\draw (0.6+6.2,0)--(0.8+6.2,0);
\draw (-0.7+6.2,0)--(-0.4+6.2,0);
\draw (-0.02+6.2,0.07)--(0.22+6.2,0.07);
\draw (-0.02+6.2,-0.07)--(0.22+6.2,-0.07);
\node [align=center,align=center] at (1+6.2,0) {$1$};
\node [align=center,align=center] at (0.4+6.2,0) {$0$};
\node [align=center,align=center] at (-0.2+6.2,0) {$0$};
\node [align=center,align=center] at (7.8,0) {$+\frac{1}{4}$};
\draw (1+8.9,0) circle [radius=0.2];
\draw (0.4+8.9,0) circle [radius=0.2];
\draw (-0.2+8.9,0) circle [radius=0.2];
\draw (0.6+8.9,0)--(0.8+8.9,0);
\draw (-0.7+8.9,0)--(-0.4+8.9,0);
\draw (-0.02+8.9,0.07)--(0.22+8.9,0.07);
\draw (-0.02+8.9,-0.07)--(0.22+8.9,-0.07);
\node [align=center,align=center] at (1+8.9,0) {$0$};
\node [align=center,align=center] at (0.4+8.9,0) {$0$};
\node [align=center,align=center] at (-0.2+8.9,0) {$0$};
\draw (1.16+8.9,0.1) .. controls (1.5+8.9,0.2) and (1.5+8.9,-0.2) ..  (1.16+8.9,-0.1);
\end{tikzpicture}
\\
&\qquad\quad
\begin{tikzpicture}
\node [align=center,align=center] at (-0.4,0) {$+\frac{1}{8}$};
\draw (1,0) circle [radius=0.2];
\draw (0.4,0) circle [radius=0.2];
\draw (1.6,0) circle [radius=0.2];
\draw (0.6,0)--(0.8,0);
\draw (1.2,0)--(1.4,0);
\draw (1,0.2)--(1,0.4);
\draw (1.76,0.1) .. controls (2.1,0.2) and (2.1,-0.2) ..  (1.76,-0.1);
\draw (0.24,0.1) .. controls (-0.1,0.2) and (-0.1,-0.2) ..  (0.24,-0.1);
\node [align=center,align=center] at (1,0) {$0$};
\node [align=center,align=center] at (0.4,0) {$0$};
\node [align=center,align=center] at (1.6,0) {$0$};
\node [align=center,align=center] at (2.4,0) {$+\frac{1}{4}$};
\draw (1+2,0-0.1) circle [radius=0.2];
\draw (1.18+2,0.07-0.1)--(1.62+2,0.07-0.1);
\draw (1.18+2,-0.07-0.1)--(1.62+2,-0.07-0.1);
\draw (1.8+2,0-0.1) circle [radius=0.2];
\draw (1.4+2,0.45-0.1) circle [radius=0.2];
\draw (1.14+2,0.14-0.1)--(1.26+2,0.31-0.1);
\draw (1.66+2,0.14-0.1)--(1.54+2,0.31-0.1);
\node [align=center,align=center] at (1+2,0-0.1) {$0$};
\node [align=center,align=center] at (1.8+2,0-0.1) {$0$};
\node [align=center,align=center] at (1.4+2,0.45-0.1) {$0$};
\draw (1.4+2,0.65-0.1)--(1.4+2,0.85-0.1);
\end{tikzpicture}.
\end{split}&&
\end{flalign*}
More examples will be given in Appendix \ref{app1}.
\end{Example}

\subsection{A recursion relation for the abstract free energy}

\label{sec:Recursion-1}

\begin{Example}
Using the expressions in the above example, one can directly check the following identities:
\ben
&&K \widehat{\cF}_{0,4}=\frac{9}{2}(\widehat{\cF}_{0,3})^2,\\
&&K \widehat{\cF}_{0,5}=12\widehat{\cF}_{0,3}\widehat{\cF}_{0,4},\\
&&K \widehat{\cF}_{1,1}=3\widehat{\cF}_{0,3},\\
&&K \widehat{\cF}_{1,2}=6\widehat{\cF}_{0,4}+3\widehat{\cF}_{1,1}\widehat{\cF}_{0,3},\\
&&K \widehat{\cF}_{1,3}=10\widehat{\cF}_{0,5}+4\widehat{\cF}_{1,1}\widehat{\cF}_{0,4}+6\widehat{\cF}_{1,2}\widehat{\cF}_{0,3},\\
&&K \widehat{\cF}_2=\widehat{\cF}_{1,2}+\frac{1}{2}(\widehat{\cF}_{1,1})^2,\\
&&K \widehat{\cF}_{2,1}=3\widehat{\cF}_{1,3}+2\widehat{\cF}_{1,1}\widehat{\cF}_{1,2}.
\een
\end{Example}

In general, we have the following relation.
\begin{Theorem}\label{thm1}
For $2g-2+n>0$, we have
\be\label{eq-thm1}
K\widehat{\cF}_{g,n}=\binom{n+2}{2}\widehat{\cF}_{g-1,n+2}+\frac{1}{2}\sum_{\substack{g_1+g_2=g,n_1+n_2=n+2,\\n_1\geq 1,n_2\geq 1}}(n_1\widehat{\cF}_{g_1,n_1})(n_2\widehat{\cF}_{g_2,n_2}),
\ee
where the sum on the right hand side is taken over all stable cases.
\end{Theorem}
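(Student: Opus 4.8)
The plan is to evaluate both sides as elements of $\cV$ by comparing, type by type, the coefficient of each (possibly disconnected) stable graph, organizing everything around the observation that \emph{cutting an internal edge is the inverse operation of gluing two external legs}. First I would record the basic dichotomy for the cut: if $\Gamma\in\cG^c_{g,n}$ and $e$ is an internal edge, then either $e$ is non-separating, in which case the graph $\Gamma\setminus e$ obtained by cutting $e$ is connected of type $(g-1,n+2)$; or $e$ is separating, in which case $\Gamma\setminus e$ has exactly two connected components of types $(g_1,n_1)$ and $(g_2,n_2)$ with $g_1+g_2=g$, $n_1+n_2=n+2$, $n_i\ge 1$. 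Stability is preserved (as already noted for $K$), and each cut records a distinguished pair of new external legs, namely the two half-edges of $e$. This splits $K\widehat{\cF}_{g,n}$ into a non-separating part landing in $\cG^c_{g-1,n+2}$ and a separating part landing in products $\cG^c_{g_1,n_1}\times\cG^c_{g_2,n_2}$, matching the two terms on the right-hand side.

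For the non-separating term I would set up a weighted (groupoid-cardinality) bijection. Gluing an unordered pair $\{a,b\}$ of distinct legs of a connected $\Gamma'\in\cG^c_{g-1,n+2}$ into one internal edge produces a connected graph of type $(g,n)$ in which that edge is non-separating (connectivity of $\Gamma'$ forces this), and cutting recovers $(\Gamma',\{a,b\})$; so cut and glue are mutually inverse. Hence the coefficient of a fixed $\Gamma'$ in the non-separating part equals the weighted number of pairs $(\Gamma,e)$ with $\Gamma\setminus e\cong\Gamma'$, which by this inverse bijection equals the weighted number of leg-pairs of $\Gamma'$. Summing $1/|\Aut(\Gamma',\{a,b\})|$ over $\Aut(\Gamma')$-orbits of leg-pairs and applying orbit--stabilizer collapses the orbit sum to $\binom{n+2}{2}/|\Aut(\Gamma')|$, since there are $\binom{n+2}{2}$ unordered pairs in all. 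This gives exactly $\binom{n+2}{2}\widehat{\cF}_{g-1,n+2}$.

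For the separating term the same mechanism applies, now gluing a marked leg $a$ of a component $\Gamma_1'$ of type $(g_1,n_1)$ to a marked leg $b$ of a component $\Gamma_2'$ of type $(g_2,n_2)$. Reading $n_i\widehat{\cF}_{g_i,n_i}$ as the groupoid sum over connected graphs carrying one marked leg (the factor $n_i$ is again orbit--stabilizer over the $n_i$ legs), the ordered double sum $\sum (n_1\widehat{\cF}_{g_1,n_1})(n_2\widehat{\cF}_{g_2,n_2})$ enumerates ordered pairs of pointed components, and the prefactor $\tfrac12$ passes to unordered pairs, which is the correct datum for forming one unoriented separating edge. Off the diagonal this merely undoes the double counting of the ordered sum, and $\Aut(\Gamma_1'\sqcup\Gamma_2')=\Aut\Gamma_1'\times\Aut\Gamma_2'$; on the diagonal $(g_1,n_1)=(g_2,n_2)$ the disjoint union acquires a component swap, so $|\Aut(\Gamma_1'\sqcup\Gamma_2')|=2\,|\Aut\Gamma_1'|\,|\Aut\Gamma_2'|$ when $\Gamma_1'\cong\Gamma_2'$, and the $\tfrac12$ is precisely what turns the product weight $1/(|\Aut\Gamma_1'|\,|\Aut\Gamma_2'|)$ into $1/|\Aut(\Gamma_1'\sqcup\Gamma_2')|$.

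The edge dichotomy and the checks that gluing preserves stability and the connectivity/genus types are routine. The main obstacle is the consistent bookkeeping of the automorphism weights through the cut--glue correspondence: relating $|\Aut(\Gamma,e)|$ (automorphisms fixing the cut edge, including a half-edge flip when $e$ becomes a self-loop) to $|\Aut\Gamma|$ and the size of the orbit of $e$, and, in the separating case, reconciling the product weight with $1/|\Aut(\Gamma_1'\sqcup\Gamma_2')|$ on the diagonal. A clean way to discharge all of these at once is to pass to leg-labeled graphs, where automorphisms fix the legs and the counts $\binom{n+2}{2}$ and $n_1n_2$ become literal numbers of label choices; prove the identity there, and then descend via the standard relation $\sum_{\tilde\Gamma}1/|\Aut_{\mathrm{lab}}\tilde\Gamma|=n!/|\Aut\Gamma|$ between labeled and unlabeled weighted counts.
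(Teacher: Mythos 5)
Your proposal is correct and follows essentially the same route as the paper's proof: the same dichotomy of the cut edge into non-separating (connected result of type $(g-1,n+2)$) and separating (two components with $g_1+g_2=g$, $n_1+n_2=n+2$) parts, the same cut--glue inverse correspondence, and the same orbit--stabilizer bookkeeping of automorphism weights, including the half-edge flip for loops and the extra swap factor on the diagonal (the paper's $\delta$ and $\delta''$). The only cosmetic difference is that the paper verifies the stabilizer matching explicitly via intermediate ``dotted-edge'' graphs $\Gamma'$, $\Gamma''$ with a loop/non-loop case analysis, whereas you propose either the direct stabilizer identification or a descent from leg-labeled graphs; both discharge the same step.
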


\begin{proof}
First note that cutting off an internal edge of a stable graph in $\cG_{g,n}^c$ gives us two types of new stable graphs, one type is still connected, and must be of genus $g-1$ with $n+2$ external edges; the other type of graphs have two connected components, and have total genus $g$ with $n+2$ external edges. Thus we write
\be\label{eq0}
K\widehat{\cF}_{g,n}=I_{g,n}+J_{g,n},
\ee
where $I_{g,n}$ and $J_{g,n}$ are sums over connected graphs and disconnected graphs respectively.

We already know that graphs appearing in $I_{g,n}$ all belong to $\cG_{g-1,n+2}^c$;
conversely, every graph in $\cG_{g-1,n+2}^c$ may be transformed to
a graph appearing in $\widehat{\cF}_{g,n}$ once we glue two external edges together.
This tells us that the graphs appearing in $I_{g,n}$ are the exactly the same graphs
in $\widehat{\cF}_{g-1,n+2}$.
Now we check the coefficients of every graph in the two sums coincide.
For a stable graph $\Gamma\in\cG_{g-1,n+2}^c$,
there are $\binom{n+2}{2}$ ways to choose two external edges to be glued together.
Denote by $\widetilde{\Gamma}$  one of the equivalence classes of graphs
obtained from $\Gamma$ by gluing two external edges.
Then we need to show that
\be
K'\biggl(\sum_{\widetilde{\Gamma}\in \cG_\Gamma} \frac{1}{|\Aut(\widetilde{\Gamma})|}
\cdot \widetilde{\Gamma} \biggr)
= \binom{n+2}{2} \cdot \frac{1}{|\Aut(\Gamma)|}\Gamma,
\ee
where $\cG_\Gamma$ is the set of equivalence classes of stable graphs $\widetilde{\Gamma}$ such that
one can obtain $\Gamma$ by cutting an edge of $\widetilde{\Gamma}$,
and $K'(\widetilde{\Gamma}) = N(\widetilde{\Gamma})\cdot \Gamma$,
where $N(\widetilde{\Gamma})$ is the number of ways to cut an edge of $\widetilde{\Gamma}$
to get $\Gamma$.
It suffices to show that
\be
\sum_{\widetilde{\Gamma}\in \cG_\Gamma} N(\widetilde{\Gamma})\cdot
\frac{|\Aut(\Gamma)|}{|\Aut(\widetilde{\Gamma})|}
= \binom{n+2}{2}.
\ee
Consider the set of the choices of picking a pair of external edges of $\Gamma$.
This set has $\binom{n+2}{2}$ elements.
We partition this set into disjoint union of $H_{\widetilde{\Gamma}}$
consisting of those choices for which one can get $\widetilde{\Gamma}$
by gluing the pair of external edges,
where $\widetilde{\Gamma}$ runs over the set $\cG_\Gamma$ of all possibilities.
Therefore,
\be
\sum_{\widetilde{\Gamma}\in \cG_\Gamma} |H_{\widetilde{\Gamma}}| = \binom{n+2}{2}.
\ee
Now we show that
\be
|H_{\widetilde{\Gamma}}| = N(\widetilde{\Gamma})\cdot
\frac{|\Aut(\Gamma)|}{|\Aut(\widetilde{\Gamma})|},
\ee
or equivalently,
\be
\frac{|H_{\widetilde{\Gamma}}|}{|\Aut(\Gamma)|} =
\frac{N(\widetilde{\Gamma})}{|\Aut(\widetilde{\Gamma})|}.
\ee

Let $\Gamma''$ be the graph  obtained from $\widetilde{\Gamma}$ by
changing the internal edge that we cut to get $\Gamma$ into a dotted edge,
and let $\Gamma'$ be the graph with two dotted external edges
obtained from $\Gamma''$ by cutting
the dotted edge.
For examples, see Figure \ref{eg-fig1} and Figure \ref{eg-fig2}.
\begin{figure}[H]
\begin{tikzpicture}
\draw (0,0) circle [radius=0.2];
\draw (-0.6,0) circle [radius=0.2];
\draw (0.6,0) circle [radius=0.2];
\draw (0,0.6) circle [radius=0.2];
\draw (0,-0.6) circle [radius=0.2];
\node [align=center,align=center] at (0.6,0.35) {$v$};
\draw (0.2,0)--(0.4,0);
\draw (-0.2,0)--(-0.4,0);
\draw (0,0.2)--(0,0.4);
\draw (0,-0.2)--(0,-0.4);
\node [align=center,align=center] at (0,0) {$0$};
\node [align=center,align=center] at (0.6,0) {$1$};
\node [align=center,align=center] at (-0.6,0) {$1$};
\node [align=center,align=center] at (0,0.6) {$1$};
\node [align=center,align=center] at (0,-0.6) {$1$};
\node [align=center,align=center] at (0,-1.4) {$\Gamma$};
\draw (0.8,0)--(1.1,0);
\draw (0.76,0.1)--(1.1,0.2);
\draw (0.76,-0.1)--(1.1,-0.2);
\draw (-0.8,0)--(-1.1,0);
\draw (-0.76,0.1)--(-1.1,0.2);
\draw (-0.76,-0.1)--(-1.1,-0.2);
\draw (0,0.8)--(0,1.1);
\draw (0.1,0.76)--(0.2,1.1);
\draw (-0.1,0.76)--(-0.2,1.1);
\draw (0.2,-0.6)--(0.5,-0.6);
\draw (-0.1,-0.76) .. controls (-0.2,-1.1) and (0.2,-1.1) ..  (0.1,-0.76);
\end{tikzpicture}
\quad
\begin{tikzpicture}
\draw (0,0) circle [radius=0.2];
\draw (-0.6,0) circle [radius=0.2];
\draw (0.6,0) circle [radius=0.2];
\draw (0,0.6) circle [radius=0.2];
\draw (0,-0.6) circle [radius=0.2];
\node [align=center,align=center] at (0.6,0.35) {$v$};
\draw (0.2,0)--(0.4,0);
\draw (-0.2,0)--(-0.4,0);
\draw (0,0.2)--(0,0.4);
\draw (0,-0.2)--(0,-0.4);
\node [align=center,align=center] at (0,0) {$0$};
\node [align=center,align=center] at (0.6,0) {$1$};
\node [align=center,align=center] at (-0.6,0) {$1$};
\node [align=center,align=center] at (0,0.6) {$1$};
\node [align=center,align=center] at (0,-0.6) {$1$};
\node [align=center,align=center] at (0,-1.4) {$\Gamma'$};
\draw [dotted,ultra thick](0.8,0)--(1.1,0);
\draw (0.76,0.1)--(1.1,0.2);
\draw [dotted,ultra thick](0.76,-0.1)--(1.1,-0.2);
\draw (-0.8,0)--(-1.1,0);
\draw (-0.76,0.1)--(-1.1,0.2);
\draw (-0.76,-0.1)--(-1.1,-0.2);
\draw (0,0.8)--(0,1.1);
\draw (0.1,0.76)--(0.2,1.1);
\draw (-0.1,0.76)--(-0.2,1.1);
\draw (0.2,-0.6)--(0.5,-0.6);
\draw (-0.1,-0.76) .. controls (-0.2,-1.1) and (0.2,-1.1) ..  (0.1,-0.76);
\end{tikzpicture}
\quad
\begin{tikzpicture}
\draw (0,0) circle [radius=0.2];
\draw (-0.6,0) circle [radius=0.2];
\draw (0.6,0) circle [radius=0.2];
\draw (0,0.6) circle [radius=0.2];
\draw (0,-0.6) circle [radius=0.2];
\node [align=center,align=center] at (0.8,0.35) {$v$};
\draw (0.2,0)--(0.4,0);
\draw (-0.2,0)--(-0.4,0);
\draw (0,0.2)--(0,0.4);
\draw (0,-0.2)--(0,-0.4);
\node [align=center,align=center] at (0,0) {$0$};
\node [align=center,align=center] at (0.6,0) {$1$};
\node [align=center,align=center] at (-0.6,0) {$1$};
\node [align=center,align=center] at (0,0.6) {$1$};
\node [align=center,align=center] at (0,-0.6) {$1$};
\node [align=center,align=center] at (0,-1.4) {$\Gamma''$};
\draw (0.6,0.2)--(0.6,0.5);
\draw (-0.8,0)--(-1.1,0);
\draw (-0.76,0.1)--(-1.1,0.2);
\draw (-0.76,-0.1)--(-1.1,-0.2);
\draw (0,0.8)--(0,1.1);
\draw (0.1,0.76)--(0.2,1.1);
\draw (-0.1,0.76)--(-0.2,1.1);
\draw (0.2,-0.6)--(0.5,-0.6);
\draw [dotted,ultra thick](0.76,0.1) .. controls (1.1,0.2) and (1.1,-0.2) ..  (0.76,-0.1);
\draw (-0.1,-0.76) .. controls (-0.2,-1.1) and (0.2,-1.1) ..  (0.1,-0.76);
\end{tikzpicture}
\quad
\begin{tikzpicture}
\draw (0,0) circle [radius=0.2];
\draw (-0.6,0) circle [radius=0.2];
\draw (0.6,0) circle [radius=0.2];
\draw (0,0.6) circle [radius=0.2];
\draw (0,-0.6) circle [radius=0.2];
\node [align=center,align=center] at (0.8,0.35) {$v$};
\draw (0.2,0)--(0.4,0);
\draw (-0.2,0)--(-0.4,0);
\draw (0,0.2)--(0,0.4);
\draw (0,-0.2)--(0,-0.4);
\node [align=center,align=center] at (0,0) {$0$};
\node [align=center,align=center] at (0.6,0) {$1$};
\node [align=center,align=center] at (-0.6,0) {$1$};
\node [align=center,align=center] at (0,0.6) {$1$};
\node [align=center,align=center] at (0,-0.6) {$1$};
\node [align=center,align=center] at (0,-1.4) {$\widetilde\Gamma$};
\draw (0.6,0.2)--(0.6,0.5);
\draw (-0.8,0)--(-1.1,0);
\draw (-0.76,0.1)--(-1.1,0.2);
\draw (-0.76,-0.1)--(-1.1,-0.2);
\draw (0,0.8)--(0,1.1);
\draw (0.1,0.76)--(0.2,1.1);
\draw (-0.1,0.76)--(-0.2,1.1);
\draw (0.2,-0.6)--(0.5,-0.6);
\draw (0.76,0.1) .. controls (1.1,0.2) and (1.1,-0.2) ..  (0.76,-0.1);
\draw (-0.1,-0.76) .. controls (-0.2,-1.1) and (0.2,-1.1) ..  (0.1,-0.76);
\end{tikzpicture}
\caption{An example where the dotted edge is a loop.}
\label{eg-fig1}
\end{figure}

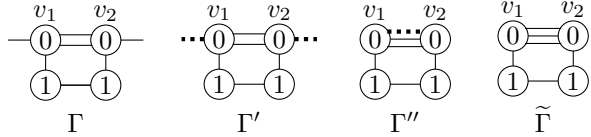
\begin{figure}[H]
\begin{tikzpicture}
\draw (0,0) circle [radius=0.2];
\draw (0.8,0) circle [radius=0.2];
\draw (0,0.6) circle [radius=0.2];
\draw (0.8,0.6) circle [radius=0.2];
\node [align=center,align=center] at (0,0.95) {$v_1$};
\node [align=center,align=center] at (0.8,0.95) {$v_2$};
\draw (0.2,0)--(0.6,0);
\draw (0,0.2)--(0,0.4);
\draw (0.8,0.2)--(0.8,0.4);
\draw (0.18,0.67)--(0.62,0.67);
\draw (0.18,0.53)--(0.62,0.53);
\draw (-0.2,0.6)--(-0.5,0.6);
\draw (1,0.6)--(1.3,0.6);
\draw (0.2,0)--(0.6,0);
\node [align=center,align=center] at (0,0) {$1$};
\node [align=center,align=center] at (0.8,0) {$1$};
\node [align=center,align=center] at (0,0.6) {$0$};
\node [align=center,align=center] at (0.8,0.6) {$0$};
\node [align=center,align=center] at (0.4,-0.5) {$\Gamma$};
\end{tikzpicture}
\quad
\begin{tikzpicture}
\draw (0,0) circle [radius=0.2];
\draw (0.8,0) circle [radius=0.2];
\draw (0,0.6) circle [radius=0.2];
\draw (0.8,0.6) circle [radius=0.2];
\node [align=center,align=center] at (0,0.95) {$v_1$};
\node [align=center,align=center] at (0.8,0.95) {$v_2$};
\draw (0.2,0)--(0.6,0);
\draw (0,0.2)--(0,0.4);
\draw (0.8,0.2)--(0.8,0.4);
\draw (0.18,0.67)--(0.62,0.67);
\draw (0.18,0.53)--(0.62,0.53);
\draw [dotted, ultra thick](-0.2,0.6)--(-0.5,0.6);
\draw [dotted, ultra thick](1,0.6)--(1.3,0.6);
\node [align=center,align=center] at (0,0) {$1$};
\node [align=center,align=center] at (0.8,0) {$1$};
\node [align=center,align=center] at (0,0.6) {$0$};
\node [align=center,align=center] at (0.8,0.6) {$0$};
\node [align=center,align=center] at (0.4,-0.5) {$\Gamma'$};
\end{tikzpicture}
\quad
\begin{tikzpicture}
\draw (0,0) circle [radius=0.2];
\draw (0.8,0) circle [radius=0.2];
\draw (0,0.6) circle [radius=0.2];
\draw (0.8,0.6) circle [radius=0.2];
\node [align=center,align=center] at (0,0.95) {$v_1$};
\node [align=center,align=center] at (0.8,0.95) {$v_2$};
\draw (0.2,0)--(0.6,0);
\draw (0,0.2)--(0,0.4);
\draw (0.8,0.2)--(0.8,0.4);
\draw (0.2,0.6)--(0.6,0.6);
\draw [dotted, ultra thick](0.16,0.7)--(0.64,0.7);
\draw (0.16,0.5)--(0.64,0.5);
\node [align=center,align=center] at (0,0) {$1$};
\node [align=center,align=center] at (0.8,0) {$1$};
\node [align=center,align=center] at (0,0.6) {$0$};
\node [align=center,align=center] at (0.8,0.6) {$0$};
\node [align=center,align=center] at (0.4,-0.5) {$\Gamma''$};
\end{tikzpicture}
\quad
\begin{tikzpicture}
\draw (0,0) circle [radius=0.2];
\draw (0.8,0) circle [radius=0.2];
\draw (0,0.6) circle [radius=0.2];
\draw (0.8,0.6) circle [radius=0.2];
\node [align=center,align=center] at (0,0.95) {$v_1$};
\node [align=center,align=center] at (0.8,0.95) {$v_2$};
\draw (0.2,0)--(0.6,0);
\draw (0,0.2)--(0,0.4);
\draw (0.8,0.2)--(0.8,0.4);
\draw (0.2,0.6)--(0.6,0.6);
\draw (0.16,0.7)--(0.64,0.7);
\draw (0.16,0.5)--(0.64,0.5);
\node [align=center,align=center] at (0,0) {$1$};
\node [align=center,align=center] at (0.8,0) {$1$};
\node [align=center,align=center] at (0,0.6) {$0$};
\node [align=center,align=center] at (0.8,0.6) {$0$};
\node [align=center,align=center] at (0.4,-0.5) {$\widetilde\Gamma$};
\end{tikzpicture}
\caption{An example where the dotted edge is not a loop.}
\label{eg-fig2}
\end{figure}

When we talk about automorphisms of $\Gamma'$ and $\Gamma''$,
the dotted edges can not be mapped to solid edges. Now we claim:
\be\label{eq-claim}
\frac{|H_{\widetilde{\Gamma}}|}{|\Aut(\Gamma)|} =
\frac{1}{|\Aut(\Gamma')|}=
\frac{1}{|\Aut(\Gamma'')|}=
\frac{N(\widetilde{\Gamma})}{|\Aut(\widetilde{\Gamma})|}.
\ee

The second equality is trivial. We only need to consider the first and third equalities.
Recall that $|H_{\widetilde{\Gamma}}|$ is the number of ways to
choose two external edges of $\Gamma$
to be glued together to get $\widetilde\Gamma$,
thus $|H_{\widetilde{\Gamma}}|$ equals to the number of all possible ways
to change a pair of external edges of $\Gamma$ to dotted edges to obtain $\Gamma'$.
Similarly, $N(\widetilde{\Gamma})$ equals to the number of
all possible ways to change an internal edge of $\widetilde\Gamma$
to a dotted edge to obtain $\Gamma''$.

In the case where the dotted edge is a loop in $\Gamma''$,
denote by $v$ the vertex where it is incident to.
Then let $\Aut_v(\Gamma)$ be the group of automorphisms of $\Gamma$ that fix the vertex $v$
(e.g., for Figure \ref{eg-fig1}, $\Aut_v(\Gamma)\cong(S_3)^3\times(S_2)^2$).
Let $O_v$ be the orbit of $v$ in $\Gamma$ under $\Aut(\Gamma)$, then
\be\label{orbit-v}
\frac{|\Aut(\Gamma)|}{|\Aut_v(\Gamma)|}=|O_v|.
\ee
It is clear that any automorphism of $\Gamma'$ fixes the vertex $v$ since there is
no other vertex with dotted half edges,
and so $\Aut(\Gamma')$ is a subgroup of $\Aut_v(\Gamma)$.
To compute the index of this subgroup,
it suffices to consider the actions of these automorphisms groups
on the external half edges incident at $v$.
Assume there are $l$ external edges attaching to $v$ in $\Gamma'$ including two dotted edges,
then we have
\be\label{choose-edge}
\frac{|\Aut_v(\Gamma)|}{|\Aut(\Gamma')|}=\frac{l!}{(l-2)!}\cdot\frac{1}{2},
\ee
here the factor $\frac{1}{2}$ comes from the symmetry between two dotted edges.
Now \eqref{orbit-v} and \eqref{choose-edge} give us
\be
\frac{|\Aut(\Gamma)|}{|\Aut(\Gamma')|}=|O_v|\cdot\binom{l}{2}.
\ee
Noticing that $|O_v|$ equals to the number of ways to choose the vertex $v$,
and $\binom{l}{2}$ is the number of ways to choose two external edges on $v$,
thus
\ben
|O_v|\cdot\binom{l}{2}=|H_{\widetilde{\Gamma}}|
\een
and this proves the first equality in \eqref{eq-claim}.

Using the same argument, we now consider $\Aut_v(\widetilde{\Gamma})$, i.e.,
the group of automorphisms of $\widetilde\Gamma$ that fix $v$
(e.g., for Figure \ref{eg-fig1} $\Aut_v(\widetilde{\Gamma})\cong(S_3)^2\times(S_2)^3$).
Let $\widetilde O_v$ be the orbit of $v$ in $\widetilde\Gamma$ under $\Aut(\widetilde\Gamma)$,
then
\be\label{orbit-v2}
\frac{|\Aut(\widetilde\Gamma)|}{|\Aut_v(\widetilde\Gamma)|}=|\widetilde O_v|.
\ee
Assume there are $k$ loops attaching to $v$ in $\Gamma''$ including a dotted one, then
\be\label{choose-edge2}
\frac{|\Aut_v(\widetilde\Gamma)|}{|\Aut(\Gamma'')|}=\frac{k!\cdot 2^{k}}{(k-1)!\cdot 2^{k-1}}\cdot\frac{1}{2}=k.
\ee
Thus \eqref{orbit-v2} and \eqref{choose-edge2} give us
\ben
\frac{|\Aut(\widetilde\Gamma)|}{|\Aut(\Gamma'')|}=k\cdot |\widetilde O_v|,
\een
and the right hand side equals to $N(\widetilde{\Gamma})$.
Therefore the third equality in \eqref{eq-claim} also holds.

The case where the dotted edge in $\Gamma''$ is not a loop can be proved similarly.
Let $v=(v_1,v_2)$ be the ordered pair of end points of the dotted edge.
As above,
we consider $\Aut_v(\Gamma)$
(here the order of $(v_1,v_2)$ should be preserved,
e.g., for Figure \ref{eg-fig2}, $\Aut_v(\Gamma)\cong S_2)$,
and  the orbit $O_v$ of the ordered pair $v$ in $V(\Gamma)\times V(\Gamma)$ under $\Aut(\Gamma)$.
We still have the equality:
\be\label{orbit-v3}
\frac{|\Aut(\Gamma)|}{|\Aut_v(\Gamma)|}=|O_v|.
\ee
Assume there are $l_i$ external edges attached to $v_i$ in $\Gamma'$, then
\be\label{choose-edge3}
\frac{|\Aut_v(\Gamma)|}{|\Aut(\Gamma')|}=\frac{l_1!}{(l_1-1)!}\cdot\frac{l_2!}{(l_2-1)!}\cdot\delta=l_1l_2\cdot\delta,
\ee
here $\delta=\frac{1}{2}$ if $(v_2,v_1)\in O_v$, and $\delta=1$ if $(v_2,v_1)\not\in O_v$.
The equations \eqref{orbit-v3} and \eqref{choose-edge3} give us
\be
\frac{|\Aut(\Gamma)|}{|\Aut(\Gamma')|}=l_1l_2|O_v|\cdot\delta,
\ee
which equals $|H_{\widetilde{\Gamma}}|$.

Similarly define $\Aut_v(\widetilde{\Gamma})$ to be the group of automorphisms
of $\widetilde\Gamma$ that fix $v$
(e.g., for Figure \ref{eg-fig2}, $\Aut(\widetilde{\Gamma})\cong S_3$).
Let $\widetilde O_v$ be the orbit of $v$ in $V(\widetilde\Gamma)\times V(\widetilde\Gamma)$ under $\Aut(\widetilde\Gamma)$,
then
\be\label{orbit-v4}
\frac{|\Aut(\widetilde\Gamma)|}{|\Aut_v(\widetilde\Gamma)|}=|\widetilde O_v|.
\ee
Assume there are $k$ internal edges connecting $v_1$ and $v_2$ in $\Gamma''$ including a dotted one, then
\be\label{choose-edge4}
\frac{|\Aut_v(\widetilde\Gamma)|}{|\Aut(\Gamma'')|}=\frac{k!}{(k-1)!}\cdot\delta=k\cdot\delta.
\ee
Then the equations \eqref{orbit-v4} and \eqref{choose-edge4} imply
\ben
\frac{|\Aut(\widetilde\Gamma)|}{|\Aut(\Gamma'')|}=k |\widetilde O_v|\cdot\delta,
\een
which equals to $N(\widetilde{\Gamma})$.
Therefore the first and third equalities in \eqref{eq-claim}
also hold in this case.

In summary, we've proved the identity
\be\label{eqn-I}
I_{g,n}=\binom{n+2}{2}\widehat{\cF}_{g-1,n+2}.
\ee

The relation
\be\label{eqn-J}
J_{g,n}=\frac{1}{2}\sum_{\substack{g_1+g_2=g,n_1+n_2=n+2,\\n_1\geq 1,n_2\geq 1}}(n_1\widehat{\cF}_{g_1,n_1})(n_2\widehat{\cF}_{g_2,n_2})
\ee
can be established in a similar fashion.
For two graphs $\Gamma_1\in\cG_{g_1,n_1}^c$, $\Gamma_2\in\cG_{g_2,n_2}^c$,
denote by $\widetilde{\Gamma}$  one of the equivalence classes of graphs
obtained from $\Gamma_1$ and $\Gamma_2$ by gluing two external edges, one from each graph.
Then we need to show that
\be
K''\biggl(\sum_{\widetilde{\Gamma}\in \cG''_\Gamma} \frac{1}{|\Aut(\widetilde{\Gamma})|}
\cdot \widetilde{\Gamma} \biggr)
= \frac{n_1}{|\Aut(\Gamma_1)|}\Gamma_1 \cdot \frac{n_2}{|\Aut(\Gamma_2)|}\Gamma_2,
\ee
where $\cG_\Gamma''$ is the set of equivalence classes of stable graphs $\widetilde{\Gamma}$ such that
one can obtain $\Gamma_1$ and $\Gamma_2$ by cutting an edge of $\widetilde{\Gamma}$,
and
$$K''(\widetilde{\Gamma}) = N''(\widetilde{\Gamma})\cdot \Gamma_1 \Gamma_2,$$
where $N''(\widetilde{\Gamma})$ is the number of ways to cut an edge of $\widetilde{\Gamma}$
to get $\Gamma_1$ and $\Gamma_2$.
It suffices to show that
\be
\sum_{\widetilde{\Gamma}\in \cG''_\Gamma} N''(\widetilde{\Gamma})\cdot
\frac{1}{|\Aut(\widetilde{\Gamma})|}
= \frac{n_1}{|\Aut(\Gamma_1)|}  \cdot \frac{n_2}{|\Aut(\Gamma_2)|} .
\ee
Consider the set of the choices of picking one external edge from each of $\Gamma_1$ and $\Gamma_2$.
This set has $n_1n_2$ elements.
We partition this set into disjoint union of $H''_{\widetilde{\Gamma}}$
consisting of those choices for which one can get $\widetilde{\Gamma}$
by gluing the pair of external edges,
where $\widetilde{\Gamma}$ runs over the set $\cG''_\Gamma$ of all possibilities.
Therefore,
\be
\sum_{\widetilde{\Gamma}\in \cG_\Gamma''} |H''_{\widetilde{\Gamma}}| = n_1n_2.
\ee
Now we show that
\be
|H''_{\widetilde{\Gamma}}| = N''(\widetilde{\Gamma})\cdot
\frac{|\Aut(\Gamma_1)| \cdot |\Aut(\Gamma_2)|}{|\Aut(\widetilde{\Gamma})|},
\ee
or equivalently,
\be
\frac{|H''_{\widetilde{\Gamma}}|}{|\Aut(\Gamma_1)|\cdot|\Aut(\Gamma_2)|} =
\frac{N''(\widetilde{\Gamma})}{|\Aut(\widetilde{\Gamma})|}.
\ee

Let $\Gamma''$ be the graph  obtained from $\widetilde{\Gamma}$ by
changing the internal edge that we cut to get $\Gamma$ into a dotted edge,
and $\Gamma'$ be the disconnected graph obtained from $\Gamma''$
by cutting the dotted edge. Then the equality
\be\label{eq-claim2}
\frac{|H''_{\widetilde{\Gamma}}|}{|\Aut(\Gamma_1)|\cdot|\Aut(\Gamma_2)|} =
\frac{1}{|\Aut(\Gamma')|}=
\frac{1}{|\Aut(\Gamma'')|}=
\frac{N''(\widetilde{\Gamma})}{|\Aut(\widetilde{\Gamma})|}
\ee
can be proved using exactly the same method as the proof of \eqref{eq-claim}.
Suppose that the two external edges
to be glued together are incident at $v_1$ in $\Gamma_1$ and $v_2$ in $\Gamma_2$ respectively.
Denote by $v=(v_1, v_2)$ the ordered pair of these two vertices.
Then the subgroups $\Aut_{v_i}(\Gamma_i)$,
$\Aut_{v}(\widetilde\Gamma)$ and the orbits $O_{v_i}$,
$\widetilde O_{v}$ are similarly defined. Then the following equalities hold:
\ben
&&\frac{|\Aut(\Gamma_i)|}{|\Aut_{v_i}(\Gamma_i)|}=|O_{v_i}|,
\quad\frac{|\Aut_{v_1}(\Gamma_1)|\cdot |\Aut_{v_1}(\Gamma_1)|}{|\Aut(\Gamma')|}=\delta'',\\
&&\frac{|\Aut(\widetilde\Gamma)|}{|\Aut_{v}(\widetilde\Gamma)|}=|\widetilde O_{v}|,
\quad\frac{|\Aut_v(\widetilde\Gamma)|}{|\Aut(\Gamma'')|}=\delta'',
\een
where $\delta''=\frac{1}{2}$ if $\Gamma_1=\Gamma_2$ and $v_1\in O_{v_2}$
under this identification, and $\delta''=1$ otherwise. Furthermore, we have
\ben
|O_{v_1}|\cdot|O_{v_2}|\cdot\delta''=|H''_{\widetilde{\Gamma}}|,
\quad |\widetilde O_{v}|\cdot\delta''=N''(\widetilde{\Gamma}).
\een
These proves \eqref{eq-claim2}, therefore the relation \eqref{eqn-J} holds.

The equalities \eqref{eq0}, \eqref{eqn-I}, \eqref{eqn-J} completes the proof of the theorem.
\end{proof}

Furthermore, the right hand side of \eqref{eq-thm1} can be rewritten in terms of operators $\cD$.
To explain this we need the following observation.

\begin{Lemma}\label{lem1}
For $2g-2+n>0$, we have $\cD\widehat{\cF}_{g,n}=(n+1)\widehat{\cF}_{g,n+1}$.
\end{Lemma}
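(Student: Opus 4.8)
The plan is to match the coefficient of each graph $\Gamma \in \cG^c_{g,n+1}$ on both sides, in the spirit of the proof of Theorem \ref{thm1}. Recall that $\cD = \pd + \gamma$ is the combinatorial inverse of the forgetful map $\pi_{n+1}$: applied to a graph $\Gamma_0 \in \cG^c_{g,n}$ it adds one external edge (a new marked point) in every geometrically allowed way --- by attaching it to a vertex ($\pd$, first part), by breaking an internal edge and inserting a trivalent genus-$0$ vertex carrying the new leg ($\pd$, second part), or by bubbling a trivalent genus-$0$ vertex onto an existing external edge ($\gamma$). I want to show that, after weighting by $1/|\Aut(\Gamma_0)|$ and summing, the total coefficient of a fixed $\Gamma$ is exactly $(n+1)/|\Aut(\Gamma)|$, which is precisely its coefficient in $(n+1)\widehat{\cF}_{g,n+1}$.

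First I would set up the reverse (forgetting) operation. For a leg $\ell$ of $\Gamma$, let $\Gamma_0(\ell)$ be obtained by deleting $\ell$ and, if the vertex $v$ carrying $\ell$ becomes unstable, contracting $v$. I would check that under the hypothesis $2g-2+n>0$ this always lands in $\cG^c_{g,n}$, and that exactly three mutually exclusive cases occur according to the type of $v$: (a) $v$ stays stable, so $\Gamma_0(\ell)=\Gamma-\ell$ with distinguished feature the vertex $v$; (b) $v$ is trivalent of genus $0$ with two internal edges, so contracting $v$ merges them into one internal edge, the distinguished feature; (c) $v$ is trivalent of genus $0$ with one internal and one other external edge, so contracting turns the internal edge into a leg, the distinguished feature. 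These invert $\pd$(a), $\pd$(b), $\gamma$ respectively. The one degenerate possibility --- $v$ trivalent of genus $0$ carrying three legs, whose deletion has no stable target --- forces $\Gamma$ to be the single-vertex graph of $\widehat{\cF}_{0,3}$, i.e.\ $(g,n)=(0,2)$; likewise a genus-$0$ vertex with a loop and a single leg forces $(g,n)=(1,0)$. Both are ruled out by $2g-2+n>0$, so the case analysis is exhaustive. The construction is equivariant, hence descends to a map from $\Aut(\Gamma)$-orbits of legs to pairs (isomorphism class of $\Gamma_0$, $\Aut(\Gamma_0)$-orbit of feature).

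Next I would show this is inverse to the operation induced by $\cD$: adding a leg at a feature of $\Gamma_0$ and recording the new leg undoes forgetting, at the level of graphs-with-a-marked-leg/feature. Thus leg-orbits of $\Gamma$ correspond bijectively to feature-orbits, over all $\Gamma_0$ whose $\cD$-image contains $\Gamma$. Writing the coefficient of $\Gamma$ in $\cD\widehat{\cF}_{g,n}$ by orbit-summing stabilizers, $\sum_{\Gamma_0}\frac{1}{|\Aut(\Gamma_0)|}\#\{\text{features }f\to\Gamma\}=\sum_{\Gamma_0}\sum_{[f]}\frac{1}{|\Aut(\Gamma_0)_f|}$, I would then invoke the orbit--stabilizer identity $\frac{n+1}{|\Aut(\Gamma)|}=\sum_{[\ell]}\frac{|\mathrm{Orb}(\ell)|}{|\Aut(\Gamma)|}=\sum_{[\ell]}\frac{1}{|\Aut(\Gamma)_\ell|}$, where $[\ell]$ runs over the leg-orbits partitioning the $n+1$ legs. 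The lemma then reduces to the single stabilizer identity
\[
|\Aut(\Gamma)_\ell| = |\Aut(\Gamma_0)_f|
\]
for corresponding $\ell$ and $f$.

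Establishing this last identity in each of the three cases is the main obstacle, and it is exactly where the delicate automorphism bookkeeping of Theorem \ref{thm1} reappears. An automorphism fixing $\ell$ fixes $v$, and forgetting $\ell$ restricts it to an automorphism of $\Gamma_0$ fixing $f$; conversely an automorphism of $\Gamma_0$ fixing $f$ lifts to one of $\Gamma$ fixing $\ell$. In cases (b) and (c) I expect the subtlety to be the two-fold symmetries: reversing the merged internal edge in (b) corresponds to swapping the two internal edges meeting the inserted vertex (and becomes a loop-reversal when $f$ is a loop), while in (c) the two legs on the bubbled vertex are interchangeable, matching the symmetry of the external edge $f$. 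I would verify, as in the loop/non-loop dichotomy of \eqref{choose-edge}--\eqref{choose-edge4}, that these factors of $2$ enter identically on both sides, so that the stabilizer orders agree. Granting the identity, summing over leg-orbits yields coefficient $(n+1)/|\Aut(\Gamma)|$ for every $\Gamma$, which is the assertion $\cD\widehat{\cF}_{g,n}=(n+1)\widehat{\cF}_{g,n+1}$.
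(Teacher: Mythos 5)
Your proof is correct and follows essentially the same route as the paper: both match the coefficient of each $\Gamma\in\cG^c_{g,n+1}$ by setting up the leg-forgetting operation inverse to $\cD$ and reducing to a stabilizer identity --- your $|\Aut(\Gamma)_\ell|=|\Aut(\Gamma_0)_f|$ is exactly the paper's $|\Aut(\Gamma')|$ computed two ways via the dotted-edge graph, and your orbit--stabilizer bookkeeping is the paper's partition of the $n+1$ legs into the sets $H_{\widetilde\Gamma}$. One tiny omission in your enumeration of degenerate cases: a genus-$1$ vertex whose only half-edge is the removed leg also has no stable target, but this again forces $(g,n)=(1,0)$ and is excluded by the same hypothesis.
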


\begin{proof}
The proof is similar to the proof of Theorem \ref{thm1}.
First the graphs appearing in the two sums are the same,
since we've already taken all possible unstable contractions into
consideration in the definition of the operator $\cD$.
Thus we only need to check the coefficients.

Let $\Gamma\in\cG_{g,n+1}^c$ be a stable graph,
and $\cG_\Gamma$ be the subset of $\cG_{g,n}^c$ consisting of graphs
obtained from $\Gamma$ by removing an external edge and stabilizing.
We only need to show
\ben
\cD'(\sum_{\widetilde{\Gamma}\in\cG_\Gamma}\frac{1}{|\Aut(\widetilde\Gamma)|}\widetilde\Gamma)
=(n+1)\cdot\frac{1}{|\Aut(\Gamma)|}\Gamma,
\een
where $D'(\Gamma)=N(\widetilde{\Gamma})\cdot\Gamma$ with $N(\widetilde{\Gamma})$ the number of ways
to obtain $\Gamma$ from $\tilde\Gamma$ via one of the operations in $\cD=\pd+\gamma$.
Then it suffices to show
\be
\sum_{\widetilde{\Gamma}\in\cG_\Gamma}N(\widetilde{\Gamma})\cdot
\frac{|\Aut(\Gamma)|}{|\Aut(\widetilde\Gamma)|}=n+1.
\ee

Consider the set of the choices of picking an external edge of $\Gamma$,
and partition it to subsets $H_{\widetilde\Gamma}$ consisting of those choices
for which one can obtain $\widetilde\Gamma$ after removing this edge and stabilizing.
Clearly
\ben
\sum_{\widetilde\Gamma\in\cG_\Gamma}|H_{\widetilde\Gamma}|=\binom{n+1}{1}=n+1.
\een
Now we show that
\ben
N(\widetilde{\Gamma})\cdot
\frac{|\Aut(\Gamma)|}{|\Aut(\widetilde\Gamma)|}=|H_{\widetilde\Gamma}|,
\een
or equivalently,
\ben
\frac{|H_{\widetilde\Gamma}|}{|\Aut(\Gamma)|}=
\frac{N(\widetilde{\Gamma})}{|\Aut(\widetilde\Gamma)|}.
\een
Denote by $\Gamma'$ the graph obtained from $\Gamma$ by changing the external edge
that we remove to a dotted edge. We will show that
\be\label{eq-claim3}
\frac{|H_{\widetilde\Gamma}|}{|\Aut(\Gamma)|}=
\frac{1}{|\Aut(\Gamma')|}=
\frac{N(\widetilde{\Gamma})}{|\Aut(\widetilde\Gamma)|}.
\ee
Let $v$ be the vertex of $\Gamma$ and $\Gamma'$ that the dotted edge attaches to. The first equality follows from
\ben
\frac{|\Aut(\Gamma)|}{|\Aut_v(\Gamma)|}=|O_v|, \quad \frac{|\Aut_v(\Gamma)|}{|\Aut(\Gamma')|}=\frac{k!}{(k-1)!};
\een
\ben
|H_{\widetilde\Gamma}|=\binom{k}{1}\cdot|O_v|,
\een
where $\Aut_v(\Gamma)$ is the set of automorphisms of $\Gamma$ that preserve $v$,
$O_v$ is the orbit of $v$ in $\Gamma$ under $\Aut(\Gamma)$,
and $k$ is the number of external edges attaching to $v$ in $\Gamma$.

Let $\tilde v$ be the corresponding vertex of $v$ in $\widetilde\Gamma$, that is,
$\tilde v=v$ if the graph is still stable after removing the dotted edge,
and $\tilde v$ the vertex that $v$ contracts to if the graph is unstable.
The second equality in \eqref{eq-claim3} follows from
\ben
\frac{|\Aut(\widetilde\Gamma)|}{|\Aut_{\tilde v}(\widetilde\Gamma)|}=|\widetilde{O}_{\tilde v}|,
\quad |\Aut_{\tilde v}(\widetilde\Gamma)|=|\Aut(\Gamma')|;
\een
\ben
N(\widetilde\Gamma)=|\widetilde{O}_{\tilde v}|,
\een
where $\Aut_{\tilde v}(\widetilde\Gamma)$ is the set of automorphisms of $\widetilde\Gamma$
that preserve $\tilde v$,
and $\widetilde{O}_{\tilde v}$ is the orbit of $\tilde v$
in $\widetilde\Gamma$ under $\Aut(\widetilde\Gamma)$.

Therefore the equality \eqref{eq-claim3} holds, which proves the lemma.
\end{proof}

Even though so far we have not defined $\widehat{\cF}_{1}$, $\widehat{\cF}_{0,2}$, and $\widehat{\cF}_{0,1}$,
it will be convenient for later use to define $\cD\widehat{\cF}_{1}$, $\cD\widehat{\cF}_{0,2}$, $\cD \cD\widehat{\cF}_{0,1}$
by formally applying Lemma \ref{lem1} as follows:
\be\label{unstable}
\begin{split}
&\cD\widehat{\cF}_{1}:=\widehat{\cF}_{1,1},\\
&\cD\widehat{\cF}_{0,2}:=3\widehat{\cF}_{0,3},\\
&\cD \cD\widehat{\cF}_{0,1}:=6\widehat{\cF}_{0,3}.
\end{split}
\ee

Now using Lemma \ref{lem1} and conventions in \eqref{unstable},
our Theorem \ref{thm1} can be rewritten in the following way.

\begin{Theorem}\label{thm2}
For $2g-2+n>0$, we have
\be\label{eq-thm2}
K\widehat{\cF}_{g,n}=\frac{1}{2}(\cD \cD\widehat{\cF}_{g-1,n}+\sum_{\substack{g_1+g_2=g,\\n_1+n_2=n}}\cD\widehat{\cF}_{g_1,n_1}\cD\widehat{\cF}_{g_2,n_2}).
\ee
In particular, by taking $n=0$ we get a recursion relation for the abstract free energy:
\be\label{thm-free}
K\widehat{\cF}_g=\frac{1}{2}(\cD \pd\widehat{\cF}_{g-1}+\sum_{r=1}^{g-1}\pd\widehat{\cF}_{r}\pd\widehat{\cF}_{g-r}).
\ee
\end{Theorem}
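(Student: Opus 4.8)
The plan is to deduce Theorem \ref{thm2} directly from Theorem \ref{thm1}, treating it as a purely formal rewriting of the right-hand side of \eqref{eq-thm1} through Lemma \ref{lem1} and the conventions \eqref{unstable}. No further analysis of graphs is required; the whole point is that the operator $\cD$ repackages the combinatorial factors $\binom{n+2}{2}$ and $n_i$ appearing in \eqref{eq-thm1}.

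First I would rewrite the connected contribution. For stable $(g-1,n)$, applying Lemma \ref{lem1} twice gives
\[
\cD\cD\widehat{\cF}_{g-1,n}=(n+1)(n+2)\,\widehat{\cF}_{g-1,n+2},
\]
so that $\tfrac12\cD\cD\widehat{\cF}_{g-1,n}=\binom{n+2}{2}\widehat{\cF}_{g-1,n+2}$, which is exactly the first summand of \eqref{eq-thm1}. For the disconnected contribution I would apply Lemma \ref{lem1} to each factor,
\[
\cD\widehat{\cF}_{g_1,n_1}\,\cD\widehat{\cF}_{g_2,n_2}=(n_1+1)(n_2+1)\,\widehat{\cF}_{g_1,n_1+1}\widehat{\cF}_{g_2,n_2+1},
\]
and then reindex by $m_i=n_i+1$. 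The constraints $g_1+g_2=g$, $n_1+n_2=n$, $n_i\geq 0$ become $g_1+g_2=g$, $m_1+m_2=n+2$, $m_i\geq 1$, and the summand becomes $(m_1\widehat{\cF}_{g_1,m_1})(m_2\widehat{\cF}_{g_2,m_2})$, matching the second summand of \eqref{eq-thm1} after multiplication by $\tfrac12$.

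The only genuine obstacle, and the step I would carry out most carefully, is the treatment of the boundary (unstable) terms, since the sum in \eqref{eq-thm2} ranges over all $(g_i,n_i)$ with $g_1+g_2=g$, $n_1+n_2=n$ whereas the sum in \eqref{eq-thm1} is restricted to stable cases. I would check the four possibilities one at a time. The pairs $(g_i,n_i)=(1,0)$ and $(0,2)$ reindex to the stable factors $(1,1)$ and $(0,3)$ that genuinely appear in \eqref{eq-thm1}; for these, $\widehat{\cF}_{1,0}$ and $\widehat{\cF}_{0,2}$ vanish as empty sums, so the values must be supplied by the conventions $\cD\widehat{\cF}_{1}=\widehat{\cF}_{1,1}$ and $\cD\widehat{\cF}_{0,2}=3\widehat{\cF}_{0,3}$, which provide precisely the coefficients $n_i+1=1$ and $3$ demanded by Lemma \ref{lem1}. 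The pairs $(g_i,n_i)=(0,1)$ and $(0,0)$ reindex to the unstable factors $(0,2)$ and $(0,1)$, both of which vanish, so these summands drop out consistently with their absence from \eqref{eq-thm1}. The analogous phenomenon occurs in the connected term when $(g-1,n)=(0,1)$: here $\cD\widehat{\cF}_{0,1}=0$ would give the wrong answer, which is exactly why the convention defines the composite $\cD\cD\widehat{\cF}_{0,1}=6\widehat{\cF}_{0,3}$ directly, matching $\binom{3}{2}\widehat{\cF}_{0,3}$. In this sense the conventions \eqref{unstable} are forced: they are the unique values making the two theorems equivalent.

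Finally, I would specialize to $n=0$. The constraint $n_1+n_2=0$ forces $n_1=n_2=0$, so every factor is a free energy $\widehat{\cF}_{g_i}=\widehat{\cF}_{g_i,0}$ carrying no external edge; since $\gamma$ annihilates such graphs, $\cD$ reduces to $\pd$ on them, turning $\cD\cD\widehat{\cF}_{g-1}$ into $\cD\pd\widehat{\cF}_{g-1}$ and each $\cD\widehat{\cF}_{g_i}$ into $\pd\widehat{\cF}_{g_i}$. The vanishing of the $g_i=0$ factors restricts the range to $1\leq r\leq g-1$, yielding \eqref{thm-free} and completing the reduction.
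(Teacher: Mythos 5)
Your proposal is correct and follows exactly the paper's own route: the paper proves Theorem \ref{thm2} by the one-line observation that Theorem \ref{thm1} can be rewritten using Lemma \ref{lem1} and the conventions \eqref{unstable}, and your argument is precisely that rewriting, with the reindexing $m_i=n_i+1$ and the case-by-case check that the conventions supply (and are forced to supply) the correct values at the unstable boundary terms. Your treatment of the boundary cases and of the $n=0$ specialization (where $\gamma$ annihilates graphs without external edges, so $\cD$ reduces to $\pd$) is more explicit than the paper's, but it is the same proof.
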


\section{A Generalization to Labelled Graphs}\label{sec3}

In this section, we generalize the construction of Section \ref{sec2}
to the case where all the half-edges of graphs may have labels.

\subsection{Diagrammatics for labelled graphs}

\label{sec:N}

Fix a positive integer $N$,
for a stable graph $\Gamma\in\cG_{g,n}$ we label some indices on the edges as follows.
For an external edge, we label an index in $\{1,2,\cdots,N\}$; and for an internal edge,
we label an index in $\{1,2,\cdots,N\}$ at each of its end points.
In other words, we label each half edge incident at a vertex.

As generalizations of the edge-cutting operator $K$
defined in Section \ref{sec:Operators-1},
we define the edge-cutting operators $K_{ij}$ as the operator that cuts
off an internal edges with two labels $i$ and $j$, then sum over all such internal edges.

\begin{Example}
Take $N=2$, then we have the following examples:
\ben
&&\begin{tikzpicture}
\node [align=center,align=center] at (-0.5,0) {$K_{11}$};
\draw (1.2,0) circle [radius=0.2];
\draw (0.4,0) circle [radius=0.2];
\draw (0.6,0)--(1,0);
\draw (1.36,0.1) .. controls (1.7,0.2) and (1.7,-0.2) ..  (1.36,-0.1);
\draw (0.24,0.1) .. controls (-0.1,0.2) and (-0.1,-0.2) ..  (0.24,-0.1);
\node [align=center,align=center] at (1.2,0) {$0$};
\node [align=center,align=center] at (0.4,0) {$0$};
\node [above,align=center] at (0.6,0) {$1$};
\node [above,align=center] at (1,0) {$2$};
\node [above,align=center] at (0.24,0.1) {$1$};
\node [below,align=center] at (0.24,-0.1) {$1$};
\node [above,align=center] at (1.36,0.1) {$1$};
\node [below,align=center] at (1.36,-0.1) {$1$};
\draw (3.7,0) circle [radius=0.2];
\draw (2.9,0) circle [radius=0.2];
\draw (3.1,0)--(3.5,0);
\draw (2.5,0.15)--(2.74,0.1);
\draw (2.5,-0.15)--(2.74,-0.1);
\draw (3.1,0)--(3.5,0);
\draw (3.86,0.1) .. controls (4.2,0.2) and (4.2,-0.2) ..  (3.86,-0.1);
\node [align=center,align=center] at (3.7,0) {$0$};
\node [align=center,align=center] at (2.9,0) {$0$};
\node [above,align=center] at (3.1,0) {$1$};
\node [above,align=center] at (3.5,0) {$2$};
\node [above,align=center] at (2.6,0.1) {$1$};
\node [below,align=center] at (2.6,-0.1) {$1$};
\node [above,align=center] at (3.86,0.1) {$1$};
\node [below,align=center] at (3.86,-0.1) {$1$};
\node [align=center,align=center] at (2,0) {$=$};
\node [align=center,align=center] at (4.5,0) {$+$};
\draw (3.7+2.4,0) circle [radius=0.2];
\draw (2.9+2.4,0) circle [radius=0.2];
\draw (3.1+2.4,0)--(3.5+2.4,0);
\draw (2.5+2.4,0.15)--(2.74+2.4,0.1);
\draw (2.5+2.4,-0.15)--(2.74+2.4,-0.1);
\draw (3.1+2.4,0)--(3.5+2.4,0);
\draw (3.86+2.4,0.1) .. controls (4.2+2.4,0.2) and (4.2+2.4,-0.2) ..  (3.86+2.4,-0.1);
\node [align=center,align=center] at (3.7+2.4,0) {$0$};
\node [align=center,align=center] at (2.9+2.4,0) {$0$};
\node [above,align=center] at (3.1+2.4,0) {$2$};
\node [above,align=center] at (3.5+2.4,0) {$1$};
\node [above,align=center] at (2.6+2.4,0.1) {$1$};
\node [below,align=center] at (2.6+2.4,-0.1) {$1$};
\node [above,align=center] at (3.86+2.4,0.1) {$1$};
\node [below,align=center] at (3.86+2.4,-0.1) {$1$};
\end{tikzpicture},\\
&&\begin{tikzpicture}
\node [align=center,align=center] at (-0.5,0) {$K_{12}$};
\draw (1.2,0) circle [radius=0.2];
\draw (0.4,0) circle [radius=0.2];
\draw (0.6,0)--(1,0);
\draw (1.36,0.1) .. controls (1.7,0.2) and (1.7,-0.2) ..  (1.36,-0.1);
\draw (0.24,0.1) .. controls (-0.1,0.2) and (-0.1,-0.2) ..  (0.24,-0.1);
\node [align=center,align=center] at (1.2,0) {$0$};
\node [align=center,align=center] at (0.4,0) {$0$};
\node [above,align=center] at (0.6,0) {$1$};
\node [above,align=center] at (1,0) {$2$};
\node [above,align=center] at (0.24,0.1) {$1$};
\node [below,align=center] at (0.24,-0.1) {$1$};
\node [above,align=center] at (1.36,0.1) {$1$};
\node [below,align=center] at (1.36,-0.1) {$1$};
\draw (3.8,0) circle [radius=0.2];
\draw (2.8,0) circle [radius=0.2];
\draw (3.0,0)--(3.2,0);
\draw (3.4,0)--(3.6,0);
\draw (3.96,0.1) .. controls (4.3,0.2) and (4.3,-0.2) ..  (3.96,-0.1);
\draw (2.64,0.1) .. controls (2.3,0.2) and (2.3,-0.2) ..  (2.64,-0.1);
\node [align=center,align=center] at (3.8,0) {$0$};
\node [align=center,align=center] at (2.8,0) {$0$};
\node [above,align=center] at (3.1,0) {$1$};
\node [above,align=center] at (3.5,0) {$2$};
\node [above,align=center] at (2.64,0.1) {$1$};
\node [below,align=center] at (2.64,-0.1) {$1$};
\node [above,align=center] at (3.96,0.1) {$1$};
\node [below,align=center] at (3.96,-0.1) {$1$};
\node [align=center,align=center] at (2,0) {$=$};
\end{tikzpicture},\\
&&\begin{tikzpicture}
\node [align=center,align=center] at (-0.5,0) {$K_{22}$};
\draw (1.2,0) circle [radius=0.2];
\draw (0.4,0) circle [radius=0.2];
\draw (0.6,0)--(1,0);
\draw (1.36,0.1) .. controls (1.7,0.2) and (1.7,-0.2) ..  (1.36,-0.1);
\draw (0.24,0.1) .. controls (-0.1,0.2) and (-0.1,-0.2) ..  (0.24,-0.1);
\node [align=center,align=center] at (1.2,0) {$0$};
\node [align=center,align=center] at (0.4,0) {$0$};
\node [above,align=center] at (0.6,0) {$1$};
\node [above,align=center] at (1,0) {$2$};
\node [above,align=center] at (0.24,0.1) {$1$};
\node [below,align=center] at (0.24,-0.1) {$1$};
\node [above,align=center] at (1.36,0.1) {$1$};
\node [below,align=center] at (1.36,-0.1) {$1$};
\node [align=center,align=center] at (2.1,0) {$=0$};
\end{tikzpicture}.
\een
\end{Example}

Similarly, we also have operators $\partial_{i}$ and $\gamma_{i}$
generalizing the operators $\pd$ and $\gamma$ in Section \ref{sec:Operators-1} respectively.
The operator $\partial_{i}$ has two parts, one is to add an external edge labelled by $i$,
and the other is to break up an internal edge and attach a $3$-pointed Riemann sphere
such that the new external edge has label $i$.
The operator $\gamma_{i}$ is to attach a $3$-pointed Riemann sphere to an external edge $e$,
and move the label of $e$ to be the label of one of the new external edge,
while the other new external edge has label $i$.
The two labels of the new internal edge can be chosen arbitrarily.

\begin{Example}
For $N=2$, here are some examples of the operator $\partial_{i}$.
\ben
&& \begin{tikzpicture}
\node [align=center,align=center] at (0.4,0) {$\partial_{1}$};
\draw (1,0) circle [radius=0.2];
\draw (1.2,0)--(1.5,0);
\node [align=center,align=center] at (1,0) {$1$};
\node [above,align=center] at (1.5,0) {$2$};
\node [align=center,align=center] at (1.9,0) {$=$};
\draw (2.5,0) circle [radius=0.2];
\draw (2.67,0.1)--(2.9,0.15);
\draw (2.67,-0.1)--(2.9,-0.15);
\node [align=center,align=center] at (2.5,0) {$1$};
\node [align=center,right] at (2.9,0.15) {$2$};
\node [align=center,right] at (2.9,-0.15) {$1$};
\end{tikzpicture},
\\
&&\begin{tikzpicture}
\node [align=center,align=center] at (0.4,0) {$\partial_{1}$};
\draw (1.2,0) circle [radius=0.2];
\draw (1.4,0)--(1.7,0);
\draw (1.04,0.1) .. controls (0.7,0.2) and (0.7,-0.2) ..  (1.04,-0.1);
\node [align=center,align=center] at (1.2,0) {$0$};
\node [above,align=center] at (1.04,0.1) {$2$};
\node [below,align=center] at (1.04,-0.1) {$2$};
\node [align=center,above] at (1.7,0) {$1$};
\node [align=center,right] at (1.9,0) {$=$};
\draw (3,0) circle [radius=0.2];
\draw (3.17,0.1)--(3.4,0.15);
\draw (3.17,-0.1)--(3.4,-0.15);
\draw (2.84,0.1) .. controls (2.5,0.2) and (2.5,-0.2) ..  (2.84,-0.1);
\node [align=center,align=center] at (3,0) {$0$};
\node [above,align=center] at (2.84,0.1) {$2$};
\node [below,align=center] at (2.84,-0.1) {$2$};
\node [align=center,right] at (3.4,0.15) {$1$};
\node [align=center,right] at (3.4,-0.15) {$1$};
\node [align=center,right] at (3.8,0) {$+$};
\draw (5,0) circle [radius=0.2];
\draw (4.5,0)--(4.8,0);
\draw (5.18,0.07)--(5.62,0.07);
\draw (5.18,-0.07)--(5.62,-0.07);
\draw (6,0)--(6.3,0);
\draw (5.8,0) circle [radius=0.2];
\node [align=center,align=center] at (5,0) {$0$};
\node [align=center,align=center] at (5.8,0) {$0$};
\node [above,align=center] at (4.5,0) {$1$};
\node [above,align=center] at (6.3,0) {$1$};
\node [above,align=center] at (5.18,0.07) {$1$};
\node [above,align=center] at (5.62,0.07) {$2$};
\node [below,align=center] at (5.18,-0.07) {$1$};
\node [below,align=center] at (5.62,-0.07) {$2$};
\node [align=center,align=center] at (6.7,0) {$+$};
\draw (7.5,0) circle [radius=0.2];
\draw (7,0)--(7.3,0);
\draw (7.68,0.07)--(8.12,0.07);
\draw (7.68,-0.07)--(8.12,-0.07);
\draw (8.5,0)--(8.8,0);
\draw (8.3,0) circle [radius=0.2];
\node [align=center,align=center] at (7.5,0) {$0$};
\node [align=center,align=center] at (8.3,0) {$0$};
\node [above,align=center] at (7,0) {$1$};
\node [above,align=center] at (8.8,0) {$1$};
\node [above,align=center] at (7.68,0.07) {$2$};
\node [above,align=center] at (8.12,0.07) {$2$};
\node [below,align=center] at (7.68,-0.07) {$2$};
\node [below,align=center] at (8.12,-0.07) {$2$};
\node [align=center,align=center] at (9.2,0) {$+$};
\node [align=center,align=center] at (9.5,0) {$2$};
\draw (10.3,0) circle [radius=0.2];
\draw (9.8,0)--(10.1,0);
\draw (10.48,0.07)--(10.92,0.07);
\draw (10.48,-0.07)--(10.92,-0.07);
\draw (11.3,0)--(11.6,0);
\draw (11.1,0) circle [radius=0.2];
\node [align=center,align=center] at (10.3,0) {$0$};
\node [align=center,align=center] at (11.1,0) {$0$};
\node [above,align=center] at (9.8,0) {$1$};
\node [above,align=center] at (11.6,0) {$1$};
\node [above,align=center] at (10.48,0.07) {$1$};
\node [above,align=center] at (10.92,0.07) {$2$};
\node [below,align=center] at (10.48,-0.07) {$2$};
\node [below,align=center] at (10.92,-0.07) {$2$};
\end{tikzpicture}.
\een
\end{Example}

\begin{Example}
For $N=2$, here are some examples of the operator $\gamma_i$.
\ben
&&\begin{tikzpicture}
\node [align=center,align=center] at (0.4,0) {$\gamma_1$};
\draw (1,0) circle [radius=0.2];
\draw (1.2,0)--(1.5,0);
\node [align=center,align=center] at (1,0) {$1$};
\node [above,align=center] at (1.5,0) {$2$};
\node [align=center,align=center] at (1.9,0) {$=$};
\draw (2.5,0) circle [radius=0.2];
\draw (3.3,0) circle [radius=0.2];
\draw (2.7,0)--(3.1,0);
\draw (3.47,0.1)--(3.7,0.15);
\draw (3.47,-0.1)--(3.7,-0.15);
\node [align=center,align=center] at (2.5,0) {$1$};
\node [align=center,align=center] at (3.3,0) {$0$};
\node [align=center,right] at (3.7,0.15) {$2$};
\node [align=center,right] at (3.7,-0.15) {$1$};
\node [above,align=center] at (2.7,0) {$1$};
\node [above,align=center] at (3.1,0) {$1$};
\node [align=center,align=center] at (4.4,0) {$+$};
\draw (5,0) circle [radius=0.2];
\draw (5.8,0) circle [radius=0.2];
\draw (5.2,0)--(5.6,0);
\draw (5.97,0.1)--(6.2,0.15);
\draw (5.97,-0.1)--(6.2,-0.15);
\node [align=center,align=center] at (5,0) {$1$};
\node [align=center,align=center] at (5.8,0) {$0$};
\node [align=center,right] at (6.2,0.15) {$2$};
\node [align=center,right] at (6.2,-0.15) {$1$};
\node [above,align=center] at (5.2,0) {$1$};
\node [above,align=center] at (5.6,0) {$2$};
\node [align=center,align=center] at (6.9,0) {$+$};
\draw (7.5,0) circle [radius=0.2];
\draw (8.3,0) circle [radius=0.2];
\draw (7.7,0)--(8.1,0);
\draw (8.47,0.1)--(8.7,0.15);
\draw (8.47,-0.1)--(8.7,-0.15);
\node [align=center,align=center] at (7.5,0) {$1$};
\node [align=center,align=center] at (8.3,0) {$0$};
\node [align=center,right] at (8.7,0.15) {$2$};
\node [align=center,right] at (8.7,-0.15) {$1$};
\node [above,align=center] at (7.7,0) {$2$};
\node [above,align=center] at (8.1,0) {$1$};
\node [align=center,align=center] at (9.4,0) {$+$};
\draw (10,0) circle [radius=0.2];
\draw (10.8,0) circle [radius=0.2];
\draw (10.2,0)--(10.6,0);
\draw (10.97,0.1)--(11.2,0.15);
\draw (10.97,-0.1)--(11.2,-0.15);
\node [align=center,align=center] at (10,0) {$1$};
\node [align=center,align=center] at (10.8,0) {$0$};
\node [align=center,right] at (11.2,0.15) {$2$};
\node [align=center,right] at (11.2,-0.15) {$1$};
\node [above,align=center] at (10.2,0) {$2$};
\node [above,align=center] at (10.6,0) {$2$};
\end{tikzpicture},
\\
&&\begin{tikzpicture}
\node [align=center,align=center] at (0.5,0) {$\gamma_1$};
\draw (1.2,0) circle [radius=0.2];
\draw (1.4,0)--(1.7,0);
\draw (1.04,0.1) .. controls (0.7,0.2) and (0.7,-0.2) ..  (1.04,-0.1);
\node [align=center,align=center] at (1.2,0) {$0$};
\node [above,align=center] at (1.04,0.1) {$2$};
\node [below,align=center] at (1.04,-0.1) {$2$};
\node [align=center,above] at (1.7,0) {$1$};
\node [align=center,right] at (1.8,0) {$=$};
\draw (3.5,0) circle [radius=0.2];
\draw (2.8,0) circle [radius=0.2];
\draw (3,0)--(3.3,0);
\draw (3.66,0.1)--(3.9,0.15);
\draw (3.66,-0.1)--(3.9,-0.15);
\draw (2.64,0.1) .. controls (2.3,0.2) and (2.3,-0.2) ..  (2.64,-0.1);
\node [align=center,align=center] at (3.5,0) {$0$};
\node [align=center,align=center] at (2.8,0) {$0$};
\node [above,align=center] at (2.64,0.1) {$2$};
\node [below,align=center] at (2.64,-0.1) {$2$};
\node [above,align=center] at (3,0) {$1$};
\node [above,align=center] at (3.3,0) {$1$};
\node [align=center,right] at (3.9,0.15) {$1$};
\node [align=center,right] at (3.9,-0.15) {$1$};
\node [align=center,align=center] at (4.5,0) {$+$};
\draw (6,0) circle [radius=0.2];
\draw (5.3,0) circle [radius=0.2];
\draw (5.5,0)--(5.8,0);
\draw (6.16,0.1)--(6.4,0.15);
\draw (6.16,-0.1)--(6.4,-0.15);
\draw (5.14,0.1) .. controls (4.8,0.2) and (4.8,-0.2) ..  (5.14,-0.1);
\node [align=center,align=center] at (6,0) {$0$};
\node [align=center,align=center] at (5.3,0) {$0$};
\node [above,align=center] at (5.14,0.1) {$2$};
\node [below,align=center] at (5.14,-0.1) {$2$};
\node [above,align=center] at (5.5,0) {$1$};
\node [above,align=center] at (5.8,0) {$2$};
\node [align=center,right] at (6.4,0.15) {$1$};
\node [align=center,right] at (6.4,-0.15) {$1$};
\node [align=center,align=center] at (7,0) {$+$};
\draw (8.5,0) circle [radius=0.2];
\draw (7.8,0) circle [radius=0.2];
\draw (8,0)--(8.3,0);
\draw (8.66,0.1)--(8.9,0.15);
\draw (8.66,-0.1)--(8.9,-0.15);
\draw (7.64,0.1) .. controls (7.3,0.2) and (7.3,-0.2) ..  (7.64,-0.1);
\node [align=center,align=center] at (8.5,0) {$0$};
\node [align=center,align=center] at (7.8,0) {$0$};
\node [above,align=center] at (7.64,0.1) {$2$};
\node [below,align=center] at (7.64,-0.1) {$2$};
\node [above,align=center] at (8,0) {$2$};
\node [above,align=center] at (8.3,0) {$1$};
\node [align=center,right] at (8.9,0.15) {$1$};
\node [align=center,right] at (8.9,-0.15) {$1$};
\node [align=center,align=center] at (9.5,0) {$+$};
\draw (11,0) circle [radius=0.2];
\draw (10.3,0) circle [radius=0.2];
\draw (10.5,0)--(10.8,0);
\draw (11.16,0.1)--(11.4,0.15);
\draw (11.16,-0.1)--(11.4,-0.15);
\draw (10.14,0.1) .. controls (9.8,0.2) and (9.8,-0.2) ..  (10.14,-0.1);
\node [align=center,align=center] at (11,0) {$0$};
\node [align=center,align=center] at (10.3,0) {$0$};
\node [above,align=center] at (10.14,0.1) {$2$};
\node [below,align=center] at (10.14,-0.1) {$2$};
\node [above,align=center] at (10.5,0) {$2$};
\node [above,align=center] at (10.8,0) {$2$};
\node [align=center,right] at (11.4,0.15) {$1$};
\node [align=center,right] at (11.4,-0.15) {$1$};
\end{tikzpicture}.
\een
\end{Example}

Now let $\cG_{g,n}(N)$ be the set of all labelled stable graphs of genus $g$ with $n$ external edges
and $N$ possible indices,
and $\cG_{g,n}^c(N)$ be the set of connected ones.
In the labelled case we also define the abstract free energy in the same way.

\begin{Definition}
For $g\geq 2$,   the abstract free energy $\widehat{\cF}_g$  is defined by:
\be
\widehat{\cF}_g=\sum_{\Gamma\in\cG_{g,0}^c(N)}\frac{1}{|\Aut(\Gamma)|}\Gamma.
\ee
\end{Definition}

For example, take $N=2$, then we have
\ben
\begin{tikzpicture}
\node [align=center,align=center] at (0.2,0) {$\widehat{\cF}_2=$};
\draw (1,0) circle [radius=0.2];
\node [align=center,align=center] at (1,0) {$2$};
\node [align=center,align=center] at (1.5,0) {$+$};
\node [align=center,align=center] at (2,0) {$\frac{1}{2}$};
\draw (2.5,0) circle [radius=0.2];
\draw (3.3,0) circle [radius=0.2];
\draw (2.7,0)--(3.1,0);
\node [align=center,align=center] at (2.5,0) {$1$};
\node [align=center,align=center] at (3.3,0) {$1$};
\node [above,align=center] at (2.7,0) {$1$};
\node [above,align=center] at (3.1,0) {$1$};
\node [align=center,align=center] at (4,0) {$+$};
\node [align=center,align=center] at (4.5,0) {$\frac{1}{2}$};
\draw (5,0) circle [radius=0.2];
\draw (5.8,0) circle [radius=0.2];
\draw (5.2,0)--(5.6,0);
\node [align=center,align=center] at (5,0) {$1$};
\node [align=center,align=center] at (5.8,0) {$1$};
\node [above,align=center] at (5.2,0) {$2$};
\node [above,align=center] at (5.6,0) {$2$};
\node [align=center,align=center] at (6.5,0) {$+$};
\draw (7,0) circle [radius=0.2];
\draw (7.8,0) circle [radius=0.2];
\draw (7.2,0)--(7.6,0);
\node [align=center,align=center] at (7,0) {$1$};
\node [align=center,align=center] at (7.8,0) {$1$};
\node [above,align=center] at (7.2,0) {$1$};
\node [above,align=center] at (7.6,0) {$2$};
\end{tikzpicture}
\\
\begin{tikzpicture}
\node [align=center,align=center] at (1.0,0) {$+$};
\node [align=center,align=center] at (1.5,0) {$\frac{1}{2}$};
\draw (2.3,0) circle [radius=0.2];
\node [above,align=center] at (2.14,0.1) {$1$};
\node [below,align=center] at (2.14,-0.1) {$1$};
\draw (2.14,0.1) .. controls (1.8,0.2) and (1.8,-0.2) ..  (2.14,-0.1);
\node [align=center,align=center] at (2.3,0) {$1$};
\node [align=center,align=center] at (3,0) {$+$};
\node [align=center,align=center] at (3.5,0) {$\frac{1}{2}$};
\draw (4.3,0) circle [radius=0.2];
\draw (4.14,0.1) .. controls (3.8,0.2) and (3.8,-0.2) ..  (4.14,-0.1);
\node [above,align=center] at (4.14,0.1) {$2$};
\node [below,align=center] at (4.14,-0.1) {$2$};
\node [align=center,align=center] at (4.3,0) {$1$};
\node [align=center,align=center] at (5,0) {$+$};
\draw (5.8,0) circle [radius=0.2];
\draw (5.64,0.1) .. controls (5.3,0.2) and (5.3,-0.2) ..  (5.64,-0.1);
\node [above,align=center] at (5.64,0.1) {$1$};
\node [below,align=center] at (5.64,-0.1) {$2$};
\node [align=center,align=center] at (5.8,0) {$1$};
\node [align=center,align=center] at (6.8,0) {$+\cdots$};
\end{tikzpicture}
\een
The complete expressions for $\widehat{\cF}_{2}$ can be obtained by adding all possible labels on the graphs of $\widehat{\cF}_2$ for the $N=1$ case and modifying the number of automorphisms.

In this case we also have the notions of abstract $n$-point functions.
Denote by $\cG_{g;l_1,\cdots,l_N}^c(N)\subset\cG_{g,n}^c(N)$ the subset
consisting of all connected labelled
stable graphs of genus $g$, with $l_j$ external edges
labelled by $j$ for every $j\in\{1,\cdots,N\}$
(we have $l_1+\cdots+\l_N=n$).
Then we define the abstract $n$-point function $\wcF_{g;l_1,\cdots,l_N}$ to be
the linear combination
\be
\wcF_{g;l_1,\cdots,l_N}:=
\sum_{\Gamma\in \cG_{g;l_1,\cdots,l_N}^c(N)}
\frac{1}{|\Aut(\Gamma)|}\Gamma,
\ee
for $2g-2+\sum l_j>0$.

\subsection{A recursion relation for the abstract free energy}
\label{sec:Recursion-N}

In the labelled case, we also have a recursion relation generalizing \eqref{thm-free}.
Here we define $\cD_{i}=\partial_{i}+\gamma_i$.

\begin{Theorem}\label{N-graph}
For $g\geq 2$, we have
\be
K_{ij}\widehat{\cF}_g=\cD_i\partial_j\widehat{\cF}_{g-1}+\sum_{r=1}^{g-1}
\partial_i\widehat{\cF}_r\partial_j\widehat{\cF}_{g-r}
\ee
for $i\not= j$, and
\be
K_{ii}\widehat{\cF}_g=\frac{1}{2}\big(\cD_i\partial_i\widehat{\cF}_{g-1}+\sum_{r=1}^{g-1}
\partial_i\widehat{\cF}_r\partial_i\widehat{\cF}_{g-r}\big).
\ee
\end{Theorem}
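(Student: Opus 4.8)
The plan is to follow the proofs of Theorem \ref{thm1} and Theorem \ref{thm2} step by step, carrying the half-edge labels along, and to isolate the two genuinely new phenomena: the sum over internal-edge labels built into $\gamma_i$, and the loss of symmetry between the two cut half-edges when $i\neq j$. Throughout I write $\widehat{\cF}_{g;l_1,\dots,l_N}$ for the labelled abstract $n$-point functions, I understand all automorphisms to be \emph{label-preserving}, and I adopt the unstable conventions analogous to \eqref{unstable}, e.g. $\partial_i\widehat{\cF}_1:=\widehat{\cF}_{1;e_i}$, where $e_i$ denotes one external edge carrying the label $i$.

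First I would prove the labelled analogue of Lemma \ref{lem1}, namely
\be
\cD_i\widehat{\cF}_{g;l_1,\dots,l_N}=(l_i+1)\,\widehat{\cF}_{g;l_1,\dots,l_i+1,\dots,l_N}.
\ee
The argument is the orbit--stabilizer computation of Lemma \ref{lem1} essentially verbatim, with two changes. One compares only graphs having the prescribed label multiplicities, so the count $\binom{n+1}{1}$ there is replaced by the number $l_i+1$ of external edges labelled $i$ in the target graph. The new point to check is that $\gamma_i$ reproduces the correct multiplicities: when the removed label-$i$ edge sits on a trivalent genus-$0$ vertex $w$ alongside a second external edge, contracting the resulting unstable bivalent vertex merges $w$ into its neighbour and frees both labels of the old internal edge; the prescription in $\gamma_i$ that these two labels ``be chosen arbitrarily'' produces, with multiplicity one, exactly every such target graph (and these graphs arise from no other part of $\cD_i$, since the second part of $\partial_i$ always leaves the new $i$-edge on a vertex with only one external edge). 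The remaining targets are covered by the two parts of $\partial_i$ exactly as in Lemma \ref{lem1}.

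Next I would run the proof of Theorem \ref{thm1} in the labelled setting. Cutting an internal edge whose two half-edges carry the labels $i$ and $j$ yields either a connected genus-$(g-1)$ graph with one new external edge labelled $i$ and one labelled $j$, or a two-component graph with the $i$-edge on one piece and the $j$-edge on the other; thus $K_{ij}\widehat{\cF}_g=I_{ij}+J_{ij}$. For $I_{ij}$ I would repeat the dotted-edge bookkeeping (the graphs $\Gamma',\Gamma''$ and the identities \eqref{orbit-v}--\eqref{choose-edge4}), now with label-preserving automorphisms, so that the set of pairs of external edges to be glued is replaced by the set of pairs consisting of one $i$-edge and one $j$-edge. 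When $i\neq j$ this set has $l_il_j$ elements and the two glued half-edges are never interchanged by a label-preserving automorphism, so the factor $\tfrac12$ coming from coincident dotted edges in \eqref{choose-edge} is absent; when $i=j$ it has $\binom{l_i}{2}$ elements and that $\tfrac12$ survives. The same analysis applied to $J_{ij}$, splitting off two connected components as in the second half of the proof of Theorem \ref{thm1}, gives the disconnected part: for $i\neq j$ the two factors are distinguished by their labels, so the sum runs over ordered splittings with no overall $\tfrac12$, whereas for $i=j$ the two pieces are interchangeable and the $\tfrac12$ reappears.

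Finally I would rewrite the outcome using the labelled Lemma. Since $\widehat{\cF}_r$ has no external edges, $\gamma_i$ annihilates it and $\partial_i\widehat{\cF}_r=\cD_i\widehat{\cF}_r=\widehat{\cF}_{r;e_i}$; likewise $\partial_j\widehat{\cF}_{g-1}=\widehat{\cF}_{g-1;e_j}$, and then $\cD_i\partial_j\widehat{\cF}_{g-1}=(l_i+1)\widehat{\cF}_{g-1;e_i,e_j}$ equals $\widehat{\cF}_{g-1;e_i,e_j}$ when $i\neq j$ (there $l_i=0$) and $2\,\widehat{\cF}_{g-1;2e_i}$ when $i=j$ (there $l_i=1$). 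Matching these expressions with $I_{ij}$ and $J_{ij}$ gives the two stated formulas, the overall $\tfrac12$ in the $i=j$ case being precisely the symmetry factor produced above. I expect the main obstacle to be the label-sensitive orbit--stabilizer accounting of the third paragraph together with the Lemma's $\gamma_i$-sum, that is, verifying that the internal-label sum hidden in $\gamma_i$ and the $i=j$ symmetry factor conspire to give coefficient $1$ on every connected target graph; the structural decomposition $K_{ij}\widehat{\cF}_g=I_{ij}+J_{ij}$ and the passage from $K$ to $\cD_i,\partial_i$ are otherwise identical to the unlabelled case.
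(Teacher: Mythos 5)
Your proposal is correct and follows exactly the route the paper intends: the paper omits the proof of Theorem \ref{N-graph}, remarking only that it is ``an easy modification of the proof of Theorem \ref{thm2},'' and your argument is precisely that modification carried out in full — the labelled analogue of Lemma \ref{lem1}, the decomposition $K_{ij}\widehat{\cF}_g=I_{ij}+J_{ij}$ with the dotted-edge orbit--stabilizer bookkeeping restricted to label-preserving automorphisms, and the correct tracking of where the symmetry factor $\tfrac12$ survives ($i=j$) or is killed by the distinguishability of the two cut half-edges ($i\neq j$). Your attention to the internal-label sum hidden in $\gamma_i$ and to the coefficient check $\cD_i\partial_j\widehat{\cF}_{g-1}=(l_i+1)\widehat{\cF}_{g-1;e_i,e_j}$ supplies exactly the details the paper leaves to the reader.
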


The proof is an easy modification of the proof of Theorem \ref{thm2} and is omitted.
For the case of $g=2$ we use the following convention£º
\be
\begin{tikzpicture}
\node [align=center,align=center] at (-0.2,0) {$\partial_{j}\widehat{\cF}_1:=$};
\draw (1,0) circle [radius=0.2];
\draw (1.2,0)--(1.45,0);
\node [align=center,align=center] at (1,0) {$1$};
\node [above,align=center] at (1.5,0) {$j$};
\node [align=center,align=center] at (2.1,0) {$+\sum_k\frac{1}{2}$};
\draw (3.2,0) circle [radius=0.2];
\draw (3.4,0)--(3.7,0);
\draw (3.04,0.1) .. controls (2.7,0.2) and (2.7,-0.2) ..  (3.04,-0.1);
\node [align=center,align=center] at (3.2,0) {$0$};
\node [above,align=center] at (3.04,0.1) {$k$};
\node [below,align=center] at (3.04,-0.1) {$k$};
\node [align=center,above] at (3.7,0) {$j$};
\node [align=center,align=center] at (4.5,0) {$+\sum_{k\not= l}$};
\draw (5.7,0) circle [radius=0.2];
\draw (5.9,0)--(6.2,0);
\draw (5.54,0.1) .. controls (5.2,0.2) and (5.2,-0.2) ..  (5.54,-0.1);
\node [align=center,align=center] at (5.7,0) {$0$};
\node [above,align=center] at (5.54,0.1) {$k$};
\node [below,align=center] at (5.54,-0.1) {$l$};
\node [align=center,above] at (6.2,0) {$j$};
\end{tikzpicture}.
\ee

The recursion relations for the abstract $n$-point functions
can also be generalized to the case of labelled graphs.
The generalizations of
Lemma \ref{lem1} and Theorem \ref{thm2}
are the following:

\begin{Lemma}\label{lem-N-rec-npt}
For $2g-2+\sum l_j>0$,
we have
\be
\cD_j\wcF_{g;l_1,\cdots,l_N}=(l_j+1)\cdot\wcF_{g;l_1,\cdots,l_j+1,\cdots,l_N}.
\ee
\end{Lemma}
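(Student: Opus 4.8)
The plan is to mirror the proof of Lemma \ref{lem1}, carrying the half-edge labels along throughout; the only genuine change is that automorphisms must now preserve labels, so permutations of external edges at a vertex may only move edges carrying the same label. First I would show that $\cD_j\wcF_{g;l_1,\cdots,l_N}$ and $(l_j+1)\wcF_{g;l_1,\cdots,l_j+1,\cdots,l_N}$ are supported on the same set of graphs, namely $\cG_{g;l_1,\cdots,l_j+1,\cdots,l_N}^c(N)$. The operator $\cD_j=\partial_j+\gamma_j$ is built so that every way of producing such a graph is accounted for: $\partial_j$ contributes the graphs obtained by attaching a new $j$-labelled external edge at a vertex, together with those obtained by breaking an internal edge and inserting a trivalent genus-$0$ vertex bearing the $j$-edge, while $\gamma_j$ contributes those in which the new $j$-edge is created next to an existing external edge. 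Since the labels on the newly created internal half-edges are summed over all of $\{1,\cdots,N\}$, every labelling of each target graph is produced, exactly as the unstable contractions were absorbed in the unlabelled case.

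Next I would reduce the statement to a coefficient identity. Fix $\Gamma\in\cG_{g;l_1,\cdots,l_j+1,\cdots,l_N}^c(N)$ and let $\cG_\Gamma$ be the set of labelled stable graphs obtained from $\Gamma$ by deleting one of its $l_j+1$ external edges labelled $j$ and stabilizing. Writing $\cD_j'(\widetilde\Gamma)=N(\widetilde\Gamma)\cdot\Gamma$, where $N(\widetilde\Gamma)$ is the number of ways $\cD_j$ turns $\widetilde\Gamma$ into $\Gamma$, it suffices, exactly as in Lemma \ref{lem1}, to prove
\[
\sum_{\widetilde\Gamma\in\cG_\Gamma} N(\widetilde\Gamma)\cdot\frac{|\Aut(\Gamma)|}{|\Aut(\widetilde\Gamma)|}=l_j+1.
\]
I would then partition the set of the $l_j+1$ choices of a $j$-labelled external edge of $\Gamma$ into blocks $H_{\widetilde\Gamma}$ according to the isomorphism type of the result of deletion and stabilization, so that $\sum_{\widetilde\Gamma}|H_{\widetilde\Gamma}|=l_j+1$, and reduce the claim to the local identity
\[
\frac{|H_{\widetilde\Gamma}|}{|\Aut(\Gamma)|}=\frac{1}{|\Aut(\Gamma')|}=\frac{N(\widetilde\Gamma)}{|\Aut(\widetilde\Gamma)|},
\]
where $\Gamma'$ is the graph $\Gamma$ with the chosen $j$-edge replaced by a dotted edge.

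The two equalities then follow from the orbit--stabilizer computations of Lemma \ref{lem1}, with one modification forced by the labels. Let $v$ be the vertex carrying the dotted edge and let $k$ be the number of $j$-labelled external edges at $v$ (rather than the total number of external edges). Because automorphisms preserve labels, the stabilizer count gives $|\Aut_v(\Gamma)|/|\Aut(\Gamma')|=k!/(k-1)!=k$ and $|H_{\widetilde\Gamma}|=k\cdot|O_v|$, where $O_v$ is the orbit of $v$ under $\Aut(\Gamma)$; the computation on the $\widetilde\Gamma$-side, involving $\Aut_{\tilde v}(\widetilde\Gamma)$ and the orbit $\widetilde O_{\tilde v}$, goes through unchanged. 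Combining the two displays yields the required coefficient identity and hence the lemma.

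The step I expect to be the main obstacle is the label bookkeeping in the first step for the $\gamma_j$-part and the break-an-internal-edge part of $\partial_j$: when stabilization contracts the inserted trivalent genus-$0$ vertex, two internal half-edges merge into a single internal edge, and one must verify that summing over the arbitrary labels on the new half-edges reproduces each labelled target graph with multiplicity exactly $N(\widetilde\Gamma)$, equal to the number of $j$-labelled edges whose removal returns that graph. Once this label-matching is confirmed, the automorphism counts transfer verbatim from the unlabelled proof.
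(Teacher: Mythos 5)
Your proposal is correct and is exactly the argument the paper intends: the paper omits the proof of this lemma, presenting it as the straightforward labelled generalization of Lemma \ref{lem1}, and your write-up is precisely that modification, with the one genuinely new point (that $k$ must be taken to be the number of $j$-labelled external edges at $v$, since label-preserving automorphisms only permute equally-labelled edges) correctly identified. Your closing remark about verifying that the sum over labels on the newly inserted half-edges reproduces each labelled target with multiplicity $N(\widetilde\Gamma)$ is the right thing to check, and it goes through because the two new half-edges at the inserted trivalent vertex are distinguished by which end of the broken edge they attach to, so each labelled target arises exactly once per position, just as in the unlabelled case.
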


\begin{Theorem}
\label{thm-N-rec-npt}
For $2g-2+\sum l_j>0$, we have
\begin{equation*}
\begin{split}
&K_{ij}\widehat{\cF}_{g;l_1,\cdots,l_N}=\cD_i\cD_j\widehat{\cF}_{g-1;l_1,\cdots,l_N}+
\sum_{\substack{g_1+g_2=g\\p_k+q_k=l_k}}
\cD_i\widehat{\cF}_{g_1;p_1,\cdots,p_N}\cD_j\widehat{\cF}_{g_2;q_1,\cdots,q_N},
\quad i\not=j;\\
&K_{ii}\widehat{\cF}_{g;l_1,\cdots,l_N}=\half\biggl(
\cD_i\cD_i\widehat{\cF}_{g-1;l_1,\cdots,l_N}+
\sum_{\substack{g_1+g_2=g\\p_k+q_k=l_k}}
\cD_i\widehat{\cF}_{g_1;p_1,\cdots,p_N}\cD_i\widehat{\cF}_{g_2;q_1,\cdots,q_N}\biggr).
\end{split}
\end{equation*}
In particular,
for $l_1=\cdots=l_N=0$
these recover the recursion relations in Theorem \ref{N-graph}.
Here we use the convention
\ben
&&\pd_j \wcF_{1;l_1,\cdots,l_N}:=\wcF_{1;l_1,\cdots,l_j+1,\cdots,l_N},
\quad \text{for } l_1=\cdots=l_N=0;\\
&&\cD_j \wcF_{0;l_1,\cdots,l_N}= (l_j+1)\wcF_{0;l_1,\cdots,l_j+1,\cdots,l_N},
\quad \text{for } l_1+\cdots+l_N=2;\\
&&\cD_i\cD_j \wcF_{0;l_1,\cdots,l_N}:=(l_j+1)\cD_i \wcF_{0;l_1,\cdots,l_j+1,\cdots,l_N},
\quad \text{for } l_1+\cdots+l_N=1.\\
\een
\end{Theorem}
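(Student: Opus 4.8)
The plan is to prove Theorem~\ref{thm-N-rec-npt} by the same two-step scheme that produced Theorem~\ref{thm2}: first establish a raw counting identity in the spirit of Theorem~\ref{thm1} for the labelled $n$-point functions, and then invoke Lemma~\ref{lem-N-rec-npt} to repackage the resulting combinatorial coefficients as the operators $\cD_i$ and $\cD_j$. Fix the labels $i,j$ and the label vector $(l_1,\dots,l_N)$. Cutting an internal edge whose two half-edges carry the labels $i$ and $j$ turns it into two external edges labelled $i$ and $j$, so, exactly as in \eqref{eq0}, I would split
\[
K_{ij}\wcF_{g;l_1,\dots,l_N}=I+J,
\]
where $I$ is the sum over the connected results of the cut and $J$ the sum over the disconnected ones. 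A connected result has genus $g-1$ and a label vector in which $l_i$ and $l_j$ are each raised by one (by two when $i=j$), so $I$ is a multiple of $\wcF_{g-1;\dots,l_i+1,\dots,l_j+1,\dots}$; a disconnected result splits into two components with the $i$-end and the $j$-end landing in opposite pieces, so $J$ is a sum of products over $g_1+g_2=g$ and $p_k+q_k=l_k$.

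For the connected part I would rerun the automorphism bookkeeping of Theorem~\ref{thm1} essentially verbatim, the only change being that every automorphism is now required to preserve the half-edge labels and that the cut records the labels $(i,j)$. For a fixed $\Gamma\in\cG^c_{g-1;\dots,l_i+1,\dots,l_j+1,\dots}(N)$ one forms the dotted-edge intermediates $\Gamma'$ and $\Gamma''$ and proves the chain \eqref{eq-claim}
\[
\frac{|H_{\widetilde\Gamma}|}{|\Aut(\Gamma)|}=\frac{1}{|\Aut(\Gamma')|}=\frac{1}{|\Aut(\Gamma'')|}=\frac{N(\widetilde\Gamma)}{|\Aut(\widetilde\Gamma)|},
\]
where $H_{\widetilde\Gamma}$ now counts the ways of choosing one $i$-labelled and one $j$-labelled external edge of $\Gamma$ to glue. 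Summing over $\Gamma$ yields $I=c\cdot\wcF_{g-1;\dots,l_i+1,\dots,l_j+1,\dots}$ with $c=(l_i+1)(l_j+1)$ when $i\ne j$ and $c=\binom{l_i+2}{2}$ when $i=j$. By Lemma~\ref{lem-N-rec-npt} these coefficients are precisely $\cD_i\cD_j\wcF_{g-1;l_1,\dots,l_N}$ and $\half\,\cD_i\cD_i\wcF_{g-1;l_1,\dots,l_N}$ respectively, matching the first term on the right-hand side.

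The disconnected part $J$ is treated as in \eqref{eqn-J}, using ordered pairs $(\Gamma_1,\Gamma_2)$ together with the symmetry factor $\delta''$ already isolated in \eqref{eq-claim2}. The crux of the argument, and what I expect to be the main obstacle, is the correct accounting of this residual $\bZ/2$ symmetry once labels are present. When $i\ne j$ the two new external edges carry distinct labels, so no automorphism can interchange the two components and the two endpoints play distinguishable roles; the ordered sum over $(g_1;p_\bullet)$ and $(g_2;q_\bullet)$ then reproduces $\sum\cD_i\wcF_{g_1;p_\bullet}\,\cD_j\wcF_{g_2;q_\bullet}$ with no extra factor. When $i=j$ the two ends are interchangeable, the factor $\delta''=\tfrac12$ survives globally, and one obtains the symmetric $\half\sum\cD_i\wcF_{g_1;p_\bullet}\,\cD_i\wcF_{g_2;q_\bullet}$. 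Checking that the labelled analogue of \eqref{eq-claim2} remains valid and that this $\tfrac12$ appears exactly in the $i=j$ case is the delicate heart of the proof.

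Finally I would verify that the boundary contributions obey the conventions stated after the theorem. The symbol $\cD_i\cD_j\wcF_{g-1;l_1,\dots,l_N}$ must be interpreted when $g-1=0$ and the label vector is too small to be stable, and products involving an unstable factor of type $\wcF_{1;0,\dots,0}$ or $\wcF_{0;\dots}$ occur inside $J$; the prescriptions $\pd_j\wcF_{1;0,\dots,0}:=\wcF_{1;\dots,l_j+1,\dots}$, $\cD_j\wcF_{0;l_\bullet}=(l_j+1)\wcF_{0;\dots,l_j+1,\dots}$ for $\sum_k l_k=2$, and $\cD_i\cD_j\wcF_{0;l_\bullet}:=(l_j+1)\cD_i\wcF_{0;\dots,l_j+1,\dots}$ for $\sum_k l_k=1$ are the labelled refinements of \eqref{unstable} and are exactly what makes every unstable symbol agree with the honest graph count. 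Specializing to $l_1=\dots=l_N=0$ then collapses the statement to Theorem~\ref{N-graph}, which completes the argument.
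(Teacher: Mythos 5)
Your proposal is correct and follows exactly the route the paper intends: the paper omits the proof of this theorem, presenting it only as the generalization of Theorems \ref{thm1}--\ref{thm2} and Lemma \ref{lem1}, and your argument --- splitting $K_{ij}$ into connected and disconnected contributions, rerunning the dotted-edge automorphism bookkeeping with label-preserving automorphisms, and locating the surviving factor $\half$ precisely in the $i=j$ case --- is that modification carried out. The coefficient checks ($(l_i+1)(l_j+1)$ versus $\binom{l_i+2}{2}$ for the connected part, matched to $\cD_i\cD_j$ versus $\half\,\cD_i\cD_i$ via Lemma \ref{lem-N-rec-npt}) are consistent with the statement and with the unstable conventions.
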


\section{Realization of the Abstract Quantum Field Theory}\label{sec4}

In this section we formulate a Feynman rule for the diagrammatics
discussed in the previous sections,
in such way we are able to define a new free energy $\wF$ from a given holomorphic free energy $F$
and a choice of propagator $\kappa$.
We also represent this new field theory as formal Gaussian integrals.
This construction realizes the abstract free energy and its quadratic
recursion relations discussed in preceding sections.

\subsection{Realization of the abstract QFT by Feynman rules}

In this subsection we will present a construction to realize the
abstract quantum field theory defined in Section \ref{sec:Abstract-1} and
Section \ref{sec:N} by Feynman rules that assign the contribution
of a vertex and the contribution of edge:
\be
\Gamma \mapsto \omega_\Gamma = \prod_{v\in V(\Gamma)} \omega_v \cdot
\prod_{e\in E(\Gamma)} \omega_e.
\ee
For $\omega_v$,
we need to fix a sequence of holomorphic functions $F_g$ in $t_1,\dots, t_N$
as input;
for $\omega_e$, we need fix a nondegenerate symmetric matrix $\kappa
= (\kappa_{ij})_{1 \leq i, j \leq N}$ of functions in $t_1, \dots, t_N$
as propagators.

For a given positive integer $N$, let $t_1$,$t_2$,$\cdots$,$t_N$
be the coordinates on an $N$-dimensional vector space,
and $\kappa=(\kappa_{ij})$ be a nondegenerate symmetric matrix of size $N\times N$.
Moreover, we fix a sequence of holomorphic function $F_g(t_1,\cdots,t_N)$ for $g\geq 0$.
Let
\be
F(t; \lambda)=\sum_{g=0}^{\infty}\lambda^{2g-2}F_g(t).
\ee

Now for a labelled stable graph $\Gamma\in\cG_{g,0}^c(N)$,
we associate a polynomial $\omega_\Gamma$ as follows.
Let $V(\Gamma)$, $E^{\ext}(\Gamma)$, and $E(\Gamma)$ be the sets of vertices,
external half edges, and internal edges of $\Gamma$ respectively.
For each $e\in E^{\ext}(\Gamma)$, write $l(e)$ as the label on the external $e$;
and for each $e\in E(\Gamma)$, write $l_1(e)$ and $l_2(e)$ as the two labels on $e$
(the order is not important).
For each $v\in V(\Gamma)$, write $g(v)$ as the genus associated to $v$,
and $\val_i(v)$ as the number of half-edges labelled by $i$ incident at $v$,
then $\sum_{i}\val_i(v)$ is the valence of $v$.
Then $\omega_\Gamma$ is defined by
\be
\omega_{\Gamma}=\prod_{v\in V(\Gamma)}F_{g(v)}^{(\val_1(v),\cdots,\val_N(v))}(t)
\prod_{e\in E(\Gamma)}\kappa_{l_1(e)l_2(e)},
\ee
where $F_{g(v)}^{(\val_1(v),\cdots,\val_N(v))}(t):
=(\frac{\partial}{\partial t_1})^{\val_1(v)}\cdots(\frac{\partial}{\partial t_N})^{\val_N(v)}F_{g(v)}(t)$. For $N=1$, this expression simplifies to
\be\label{omega}
\omega_{\Gamma}=(\prod_{v\in V(\Gamma)}F_{g(v)}^{(\val(v))}(t))\kappa^{|E(\Gamma)|}.
\ee

\begin{Definition}
For $g\geq 2$ we define the free energy to be
\be\label{free-expression}
\wF_g=\sum_{\Gamma\in\cG_{g,0}^c(N)}\frac{1}{|\Aut(\Gamma)|}\omega_\Gamma.
\ee
In particular, the degree of $\kappa$ in the expression of $\wF_g$ is $3g-3$.
\end{Definition}

\begin{Example}
For $N=1$, we have
\be\label{f2}
\wF_2=F_2+\kappa[\frac{1}{2}F_1''+\frac{1}{2}(F_1')^2]+\kappa^2(\frac{1}{8}F_0^{(4)}+\frac{1}{2}F_1'F_0''')+\frac{5}{24}\kappa^3(F_0''')^2.
\ee
And for $N=2$, we have
\be
\begin{split}
\wF_2&=F_2+(\frac{1}{2}F_1^{(2,0)}+\frac{1}{2}(F_1^{(1,0)})^2)\kappa_{11}+(\frac{1}{2}F_1^{(0,2)}+\frac{1}{2}(F_1^{(0,1)})^2)\kappa_{22}
\\
&+(\frac{1}{8}F_0^{(4,0)}+\frac{1}{2}F_1^{(1,0)}F_0^{(3,0)})\kappa_{11}^2+(\frac{1}{8}F_0^{(0,4)}+\frac{1}{2}F_1^{(0,1)}F_0^{(0,3)})\kappa_{22}^2\\
&+(F_1^{(1,1)}+F_1^{(1,0)}F_1^{(0,1)})\kappa_{12}+(\frac{1}{2}F_0^{(2,2)}+F_1^{(1,0)}F_0^{(1,2)}+F_1^{(0,1)}F_0^{(2,1)})\kappa_{12}^2\\
&+(\frac{1}{4}F_0^{(2,2)}+\frac{1}{2}F_1^{(1,0)}F_0^{(1,2)}+\frac{1}{2}F_1^{(0,1)}F_0^{(2,1)})\kappa_{11}\kappa_{22}\\
&+(\frac{1}{2}F_0^{(3,1)}+\frac{3}{2}F_1^{(1,0)}F_0^{(2,1)}+\frac{1}{2}F_1^{(0,1)}F_0^{(3,0)})\kappa_{11}\kappa_{12}\\
&+(\frac{1}{2}F_0^{(1,3)}+\frac{3}{2}F_1^{(0,1)}F_0^{(1,2)}+\frac{1}{2}F_1^{(1,0)}F_0^{(0,3)})\kappa_{22}\kappa_{12}\\
&+\frac{5}{24}(F_0^{(3,0)})^2\kappa_{11}^3++\frac{5}{24}(F_0^{(0,3)})^2\kappa_{22}^3+(\frac{1}{6}F_0^{(3,0)}F_0^{(0,3)}+\frac{3}{2}F_0^{(2,1)}F_0^{(1,2)})\kappa_{12}^3\\
&+(\frac{3}{8}(F_0^{(2,1)})^2+\frac{1}{4}F_0^{(3,0)}F_0^{(1,2)})\kappa_{11}^2\kappa_{22}\\
&+(\frac{3}{8}(F_0^{(1,2)})^2+\frac{1}{4}F_0^{(0,3)}F_0^{(2,1)})\kappa_{22}^2\kappa_{11}\\
&+(\frac{3}{2}(F_0^{(2,1)})^2+F_0^{(3,0)}F_0^{(1,2)})\kappa_{11}\kappa_{12}^2+(\frac{3}{2}(F_0^{(1,2)})^2+F_0^{(0,3)}F_0^{(2,1)})\kappa_{22}\kappa_{12}^2\\
&+\frac{5}{4}F_0^{(3,0)}F_0^{(2,1)}\kappa_{11}^2\kappa_{12}+\frac{5}{4}F_0^{(0,3)}F_0^{(1,2)}\kappa_{22}^2\kappa_{12}\\
&+(\frac{1}{4}F_0^{(3,0)}F_0^{(0,3)}+\frac{9}{4}F_0^{(2,1)}F_0^{(1,2)})\kappa_{11}\kappa_{22}\kappa_{12}.
\end{split}
\ee
\end{Example}

\subsection{Representation by formal Gaussian integrals}
In fact, the free energy $\wF_g$ defined in last subsection
has a representation using formal Gaussian integrals.
Consider the following partition function
\be\label{gaussian}
\begin{split}
Z(t,\kappa)=\int \exp\biggl\{&F(\eta)-\lambda^{-2}
\biggl[\frac{\partial F_0(t)}{\partial t}(\eta-t)
+\frac{1}{2}(\eta-t)^T\frac{\partial^2 F_0(t)}{\partial t^2}(\eta-t)\\
&+\frac{1}{2}(\eta-t)^T\kappa^{-1}(\eta-t)\biggr]\biggr\}d\eta,
\end{split}
\ee
where the integral is over $\eta\in\bR^N$.
Now do the Taylor expansion of $F(\eta)$, we get
\be\label{partition-fn}
\begin{split}
Z(t,\kappa)=&Z(t)\int \exp\biggl[\sum_{2g-2+\sum l^{(i)}>0}
\frac{\lambda^{2g-2}}{\prod_{i=1}^{N}l^{(i)}!}F_g^{(l^{(1)},\cdots,l^{(N)})}(t)
\prod_{j=1}^{N}(\eta_i-t_i)^{l^{(i)}}
\\
&-\frac{\lambda^{-2}}{2}(\eta-t)^T\kappa^{-1}(\eta-t)\biggr]d\eta\\
=&Z(t)\sum_{k\geq 0}\sum_{2g_j-2+\sum_{i} l_j^{(i)}>0}\frac{1}{k!}
\frac{\lambda^{2\sum_k g_k -2k}}{\prod_{1\leq j\leq k}\prod_i l_j^{(i)}!}
\prod_{j=1}^{k}F_{g_j}^{(l_j^{(1)},\cdots,l_j^{(N)})}\times\\
&\int \prod_{i=1}^{N}(\eta_i-t_i)^{\sum_{j=1}^k l_j^{(i)}}\cdot
\exp\big[-\frac{\lambda^{-2}}{2}(\eta-t)^T\kappa^{-1}(\eta-t)\big]d\eta,
\end{split}
\ee
where $Z(t)=\exp(F(t))$ is the holomorphic partition function.
Then evaluate the Gaussian integral on the right-hand side of the last equality,
we have
\be\label{gauss}
\begin{split}
&\int \prod_{i=1}^{N}(\eta_i-t_i)^{\sum_{j=1}^k l_j^{(i)}}\cdot
\exp[-\frac{\lambda^{-2}}{2}(\eta-t)^T\kappa^{-1}(\eta-t)]d\eta\\
=&(\frac{(2\pi)^n}{\det(\lambda^{-2}\kappa^{-1})})^\frac{1}{2}\cdot\frac{1}{2^l\cdot l!}
\sum_{\sigma\in S_{2l}}\prod_{j=1}^{l}(\lambda^2\kappa_{a_{\sigma(2j-1)}a_{\sigma(2j)}}),
\end{split}
\ee
for $\sum_{i,j}l_j^{(i)}=2l$ even, and equals zero if $\sum_{i,j}l_j^{(i)}$ is odd,
where the sequence $(a_1,\cdots,a_{2l})$ consists of indices in $\{1,\cdots,N\}$
with the number of $i$ equals $\sum_j l_j^{(i)}$.

Let the normalized partition function be
\be
\widetilde{Z}(t,\kappa)=Z(t,\kappa)/(2\pi\lambda^2)^{\frac{N}{2}},
\ee
then this relates to our free energy (\ref{free-expression}) in the following way.

\begin{Theorem}
We have
\be\label{non-holoF}
\widetilde{Z}(t,\kappa)=\exp(\sum_{g=0}^{\infty}\lambda^{2g-2}\wF_g(t)),
\ee
where $\wF_0$ and $\wF_1$ are defined by
\be\label{f0f1}
\wF_0=F_0,\quad \wF_1=F_1+\frac{1}{2}\log(\det(\kappa)).
\ee
\end{Theorem}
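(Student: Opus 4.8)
The plan is to recognize the normalized partition function $\widetilde Z(t,\kappa)$ as a generating series of Feynman diagrams and to match it termwise against the graph sum \eqref{free-expression}. First I would record that, after Taylor-expanding $F(\eta)$ about $\eta=t$, the terms constant in $\eta$, namely $\sum_g\lambda^{2g-2}F_g(t)$, assemble into the prefactor $Z(t)=\exp(F(t))$ that is pulled out in \eqref{partition-fn}, while the gradient and Hessian of $F_0$ are cancelled exactly by the subtracted terms in the exponent of \eqref{gaussian} (leaving only the cubic-and-higher genus-zero contributions). What remains inside the integral is $\exp(V)$ against the Gaussian weight $\exp(-\tfrac{\lambda^{-2}}{2}(\eta-t)^T\kappa^{-1}(\eta-t))$, where $V$ collects precisely the vertex contributions $\tfrac{\lambda^{2g-2}}{\prod_i l^{(i)}!}F_g^{(l^{(1)},\dots,l^{(N)})}(t)\prod_i(\eta_i-t_i)^{l^{(i)}}$ with $\sum_i l^{(i)}\geq 1$ and $2g-2+\sum_i l^{(i)}>0$; by the stability bookkeeping (genus $0$ forcing valence $\geq 3$ and genus $1$ forcing valence $\geq 1$) these are exactly the stable vertices carrying at least one half-edge, the $\sum_i l^{(i)}=0$ vertices of genus $\geq 2$ having been absorbed into $Z(t)$.

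Next I would expand $\exp(V)=\sum_k V^k/k!$ and evaluate each Gaussian moment by Wick's theorem, which is the content of \eqref{gauss}: since the covariance is $\langle(\eta_i-t_i)(\eta_j-t_j)\rangle=\lambda^2\kappa_{ij}$, every perfect matching of the half-edges yields a product of propagators $\lambda^2\kappa_{l_1(e)l_2(e)}$, and the factor $\tfrac1{2^l\,l!}$ in \eqref{gauss} serves precisely to reduce the sum over $S_{2l}$ to the plain sum over perfect matchings. Reading a choice of $k$ ordered vertices together with a matching of their half-edges as a stable graph with all edges internal and no external legs, the integrand becomes a sum over all such labelled diagrams, each carrying the weight $\omega_\Gamma$ of the definition together with the reciprocal factor $\tfrac1{k!}\prod_j\prod_i\tfrac1{l_j^{(i)}!}$ (here $l_j^{(i)}=\val_i(v_j)$).

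The heart of the argument is to show that, upon grouping these labelled diagrams by isomorphism type $\Gamma\in\cG_{g,0}(N)$, all the bookkeeping factors collapse to the single weight $1/|\Aut(\Gamma)|$ of \eqref{free-expression}. I would establish this by the orbit--stabilizer counting already used in the proofs of Theorem \ref{thm1} and Lemma \ref{lem1}: a diagram of type $\Gamma$ with vertices $v_1,\dots,v_k$ admits exactly $k!\prod_j\prod_i\val_i(v_j)!/|\Aut(\Gamma)|$ distinct labellings of its vertices and half-edges, since $\Aut(\Gamma)$ acts freely on the set of all such labellings. Because each labelled diagram enters the Wick expansion once with the reciprocal factor $\tfrac1{k!}\prod_j\prod_i\tfrac1{\val_i(v_j)!}$, summing over the labellings of a fixed $\Gamma$ produces precisely $\omega_\Gamma/|\Aut(\Gamma)|$. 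I expect this symmetry-factor bookkeeping to be the main technical obstacle, exactly as in the combinatorial core of the earlier theorems.

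Finally I would invoke the exponential (linked-cluster) formula: taking the logarithm of the resulting sum over all stable graphs with internal edges only retains the sum over the connected such graphs. A short Euler-characteristic count, using that the genus of a connected graph equals $(\#\,\text{loops})+\sum_v g(v)$ together with the fact that each vertex carries $\lambda^{2g(v)-2}$ and each edge carries $\lambda^2$, shows that a connected genus-$g$ graph contributes at order $\lambda^{2g-2}$. Combining this connected sum with the prefactor $\log Z(t)=\sum_g\lambda^{2g-2}F_g(t)$ and with the normalization $\tfrac12\log\det\kappa$ produced by dividing the Gaussian volume $(2\pi)^{N/2}\lambda^{N}(\det\kappa)^{1/2}$ by $(2\pi\lambda^2)^{N/2}$, I would read off the coefficient of $\lambda^{2g-2}$. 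For $g\geq 2$ the isolated single-vertex graph supplies the extra term $F_g(t)$ and reproduces \eqref{free-expression} in full, whereas for $g=0,1$ there are no connected graphs with internal edges, since $\cG_{0,0}^c=\cG_{1,0}^c=\varnothing$, so the coefficients are exactly $\wF_0=F_0$ and $\wF_1=F_1+\tfrac12\log\det\kappa$, as claimed in \eqref{f0f1}.
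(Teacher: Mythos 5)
Your proposal is correct and follows essentially the same route as the paper's proof: Wick-expand the formal Gaussian integral, identify each moment with a sum over gluings of stable vertices weighted by $1/|\Aut(\Gamma)|$, take the logarithm to restrict to connected graphs, and match powers of $\lambda$ via $g(\Gamma)=1+|E(\Gamma)|-|V(\Gamma)|+\sum_v g(v)$. You are in fact somewhat more careful than the paper on the two points it leaves implicit, namely the orbit--stabilizer justification of the symmetry factors and the verification that no connected graphs of genus $0$ or $1$ arise from the integral, so that $\wF_0$ and $\wF_1$ come only from $\log Z(t)$ and the Gaussian normalization $\tfrac{1}{2}\log\det(\kappa)$.
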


\begin{proof}
From (\ref{partition-fn}) and (\ref{gauss}), the coefficient of $\lambda^{n}$ in $\widetilde{Z}(t,\kappa)/(Z(t)\sqrt{\det(\kappa)})$ is
\ben
\sum_{2\sum g_j-2k+2l=n}\frac{1}{k!}\cdot\frac{1}{2^l\cdot l!}\prod_{j=1}^{k}\frac{F_{g_j}^{(l_j^{(1)},\cdots,l_j^{(N)})}}{l_j^{(1)}!\cdots l_j^{(N)}!}\cdot\sum_{\sigma\in S_{2l}}\prod_{j=1}^{l}\kappa_{a_{\sigma(2j-1)}a_{\sigma(2j)}}.
\een
This corresponds to the summation of $\omega_{\Gamma}$ over all possible ways to glue some stable vertices of types $\{F_{g_j}^{(l_j^{(1)},\cdots,l_j^{(N)})}\}_{1\leq j\leq l}$ together to obtain a stable graph $\Gamma$ (not necessarily connected), with coefficients $\frac{1}{|\Aut(\Gamma)|}$. Then since
\ben
l=|E(\Gamma)|,\quad k=|V(\Gamma)|,
\een
we have
\ben
g(\Gamma)=1+l-k+\sum_{j=1}^{k} g_j.
\een
Thus
\ben
\widetilde{Z}(t,\kappa)=\sum_{\Gamma\in\cG_{g,0}}\frac{\lambda^{2g-2}}{|\Aut(\Gamma)|}\omega_\Gamma
\een
and
\ben
\log(\widetilde{Z})=\sum_{\Gamma\in\cG_{g,0}^c}\frac{\lambda^{2g-2}}{|\Aut(\Gamma)|}\omega_\Gamma
=\sum_{g=0}^{\infty}\lambda^{2g-2}\wF_g.
\een
\end{proof}

\begin{Example}
For the case of $N=1$,
the partition function is
\be\label{N=1-partition fn}
\begin{split}
\tilde{Z}(t,\kappa)=& \exp(\sum_{g=0}^\infty \lambda^{2g-2}\wF_g(t))\\
=& Z(t) \sum_{k\geq 0}\sum_{2g_i-2+l_i>0}\sum_l\frac{1}{k!}\frac{\lambda^{2(g_1+\cdots+g_k)-2k+2l}}{l_1!\cdots l_k!}\times\\
&(2l-1)!!\cdot F_{g_1}^{(l_1)}\cdots F_{g_k}^{(l_k)}\cdot\kappa^l\cdot\delta_{l_1+\cdots+l_k-2l,0}.
\end{split}
\ee
\end{Example}

As mentioned in the Introduction,
the formal Gaussian integral (\ref{gaussian}) appeared in
earlier work on holomorphic anomaly equations
\cite{ey, emo, gkmw},
where the propagator $\kappa$ was chosen to be the Zamolodchikov metric
which is non-holomorphic.
Later in this paper we will see that the propagator $\kappa$ admits different choices,
which give us different theories in mathematical physics.
In particular,
by taking suitable holomorphic propagators
we will see that partition functions for some well-known quantum spectral
curves can be constructed in our formalism.

\subsection{Realization of the operators on stable graphs}

Now we discuss the realization of the operators $K$, $\pd$ and $\gamma$ defined in
Section \ref{sec:Operators-1}.

The operator $K$ is realized by the differential operator $\pd_\kappa$.
In fact, by \eqref{omega},
\ben
\pd_\kappa \omega_\Gamma &=&
\pd_\kappa \biggl[(\prod_{v\in V(\Gamma)}F_{g(v)}^{(\val(v))}(t))\kappa^{|E(\Gamma)|} \biggr] \\
& = & |E(\Gamma)| \cdot (\prod_{v\in V(\Gamma)}F_{g(v)}^{(\val(v))}(t))\kappa^{|E(\Gamma)|-1}\\
& = & \sum_{\Gamma'} \omega_{\Gamma'} = \omega_{K\Gamma},
\een
where the summation on the right-hand side of the second equality is
over all graphs $\Gamma'$ obtained from $\Gamma$ by cutting an edge.

The operator $\gamma$ is realized by the operator of multiplication by
$|E^{\ext}(\Gamma)| \cdot \kappa F_0'''$ on $\omega_\Gamma$.
In fact,
the effect of $\gamma$ is to change each external edge incident at $v$ of a stable graph
$\Gamma$ to an internal edge joining the vertex $v$ to an additional
trivalent vertex of genus $0$.
Let $\Gamma'$ be one of stable graphs obtained from $\Gamma$ in this way,
then by the Feynman rule,
\be
\omega_{\Gamma'}=  \kappa F_0''' \cdot \omega_\Gamma .
\ee
Therefore,
\be
\omega_{\gamma \Gamma}= \sum_{\Gamma'} \omega_{\Gamma'}
= |E^{ext}(\Gamma)| \cdot \kappa F_0''' \cdot \omega_\Gamma.
\ee

The operator $\pd$ is realized by the differential operator $\pd_t$.
In fact, $\omega_\Gamma$ is a product of contributions from the vertices
and the contributions from the internal edges.
Hence $\pd_t \omega_\Gamma$ is a summation of the contributions
of the actions on each vertices and the contributions on each internal edges.
In fact by \eqref{omega},
\ben
\pd_t\omega_{\Gamma}
& = & \pd_t \biggl[(\prod_{v\in V(\Gamma)}F_{g(v)}^{(\val(v))}(t))\kappa^{|E(\Gamma)|} \biggr]\\
& = & \sum_{v\in V(\Gamma)} F_{g(v)}^{(\val(v)+1)}(t)
\cdot \prod_{v'\in V(\Gamma)-\{v\}}F_{g(v')}^{(\val(v'))}(t)) \cdot \kappa^{|E(\Gamma)|} \\
&+& \prod_{v\in V(\Gamma)}F_{g(v)}^{(\val(v))}(t))\cdot |E(\Gamma)| \kappa^{|E(\Gamma)|-1}
\cdot \pd_t\kappa.
\een
On the other hand,
\be
\pd \Gamma = \sum_{v\in V(\Gamma)} \Gamma^v + \sum_{e\in E(\Gamma)}\Gamma^e,
\ee
where $\Gamma^v$ is obtained from $\Gamma$ by attaching an additional half-edge
at a vertex $v \in V(\Gamma)$,
and $\Gamma^e$ is obtained from $\Gamma$ by changing an internal edge $e\in E(\Gamma)$
to two internal edges incident at an addition trivalent vertex $w$ of genus zero.
Therefore,
by the Feynman rule we get:
\ben
\omega_{\pd \Gamma} &=& \sum_{v\in V(\Gamma)} \omega_{\Gamma^v}
+ \sum_{e\in E(\Gamma)} \omega_{\Gamma^e}\\
& = & \sum_{v\in V(\Gamma)} F_{g(v)}^{(\val(v)+1)}(t)
\cdot \prod_{v'\in V(\Gamma)-\{v\}}F_{g(v')}^{(\val(v'))}(t)) \cdot \kappa^{|E(\Gamma)|} \\
&+& \sum_{e\in E(\Gamma)}
\prod_{v\in V(\Gamma)}F_{g(v)}^{(\val(v))}(t)) \cdot
\kappa^{|E(\Gamma)|+1} \cdot F_0'''(t).
\een
Therefore,
in order to have $\omega_{\pd \Gamma}= \pd_t \omega_\Gamma$,
we require the propagator $\kappa$ to be a function of $t$ which satisfies
\be\label{constraint}
\frac{\partial\kappa}{\partial t}=\kappa^2\cdot F_0'''(t),
\ee
here $\kappa$ needs not to be holomorphic. A general solution to this equation is
\ben
\kappa(t)=\frac{1}{C-F_0''(t)},
\een
where $C$ may be either a constant or an anti-holomorphic function of $t$.

\subsection{Realization of the recursion relations}

By applying Feynman rules to the
quadratic recursion relations in Theorem \ref{thm2}
and using the realizations of the operators on graphs discussion in the preceding subsection,
we obtain the recursion relations for $\wF_g$ immediately.

\begin{Theorem}\label{1-recursion}
When the propagator $\kappa$ satisfies the condition
\eqref{constraint},
the free energy $\wF_g$ for $g \geq 2$ satisfies the equations
\be
\partial_\kappa\wF_g=\frac{1}{2}(D_t \partial_t\wF_{g-1}
+\sum_{r=1}^{g-1}\partial_t\wF_{r}\partial_t\wF_{g-r}),
\ee
where $D_t=\partial_t+\kappa F_0'''$ is the covariant derivative.
\end{Theorem}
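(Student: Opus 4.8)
The plan is to obtain the recursion for $\wF_g$ by pushing the abstract quadratic identity \eqref{thm-free} of Theorem \ref{thm2} through the Feynman realization. The assignment $\Gamma\mapsto\omega_\Gamma$ extends $\bQ$-linearly to a map $\omega$ from $\cV$ to the space of functions of $t$, and, since $\omega_\Gamma$ is by construction a product of local weights over the vertices and internal edges of $\Gamma$, this map is multiplicative for the disjoint-union product of $\cV$: $\omega(\Gamma_1\cdot\Gamma_2)=\omega(\Gamma_1)\,\omega(\Gamma_2)$. By definition $\omega(\widehat{\cF}_g)=\wF_g$. So I would apply $\omega$ to both sides of \eqref{thm-free} and translate each operator into the realization recorded in the preceding subsection, namely $\omega(K\Gamma)=\partial_\kappa\omega_\Gamma$, $\omega(\partial\Gamma)=\partial_t\omega_\Gamma$ (valid precisely because $\kappa$ obeys \eqref{constraint}), and $\omega(\gamma\Gamma)=|E^{\ext}(\Gamma)|\cdot\kappa F_0'''\cdot\omega_\Gamma$.

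The easy terms come first. The left side becomes $\omega(K\widehat{\cF}_g)=\partial_\kappa\wF_g$, which is already the left side of the target equation. For the quadratic sum, each factor realizes as $\omega(\partial\widehat{\cF}_r)=\partial_t\wF_r$, and then multiplicativity of $\omega$ gives $\omega(\partial\widehat{\cF}_r\,\partial\widehat{\cF}_{g-r})=\partial_t\wF_r\,\partial_t\wF_{g-r}$, reproducing the last sum on the right.

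The crux will be the mixed term $\cD\partial\widehat{\cF}_{g-1}$, and this is where the factor $|E^{\ext}(\Gamma)|$ in the realization of $\gamma$ must be handled carefully. The key observation is that $\widehat{\cF}_{g-1}=\widehat{\cF}_{g-1,0}$ carries no external edge, so every graph occurring in $\partial\widehat{\cF}_{g-1}$ has \emph{exactly one} external edge. On such a $\Gamma$ with $|E^{\ext}(\Gamma)|=1$, the realization of $\cD=\partial+\gamma$ collapses to $\omega(\cD\Gamma)=\partial_t\omega_\Gamma+\kappa F_0'''\,\omega_\Gamma=D_t\omega_\Gamma$, exactly the covariant derivative $D_t=\partial_t+\kappa F_0'''$. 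It is essential that $\cD$ is applied to a one-external-edge object, for on a graph with more external edges the $\gamma$-contribution would carry the wrong multiplicity. Hence $\omega(\cD\partial\widehat{\cF}_{g-1})=D_t\,\omega(\partial\widehat{\cF}_{g-1})=D_t\partial_t\wF_{g-1}$, and assembling the three translations yields the claimed equation.

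Finally I would check the base case $g=2$, where $\widehat{\cF}_{g-1}=\widehat{\cF}_1$ exists only through the convention \eqref{unstable}: the realization must stay consistent with the special definition $\wF_1=F_1+\frac{1}{2}\log(\det\kappa)$ in \eqref{f0f1}. Using the constraint $\partial_t\kappa=\kappa^2 F_0'''$ one verifies $\partial_t\wF_1=F_1'+\frac{1}{2}\kappa F_0'''=\omega(\widehat{\cF}_{1,1})=\omega(\partial\widehat{\cF}_1)$, so the convention dovetails with $\wF_1$ and the argument above applies uniformly for all $g\geq 2$. Beyond this bookkeeping I do not anticipate further difficulty, since everything reduces to the already-established operator realizations and the multiplicativity of $\omega$.
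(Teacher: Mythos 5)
Your proposal is correct and is essentially the paper's own argument: the paper proves this theorem precisely by applying the Feynman-rule realization to the abstract recursion \eqref{thm-free} of Theorem \ref{thm2}, using the realizations $K\mapsto\partial_\kappa$, $\partial\mapsto\partial_t$ (under the constraint \eqref{constraint}), and $\gamma\mapsto|E^{\ext}(\Gamma)|\cdot\kappa F_0'''$. Your explicit observation that the graphs in $\partial\widehat{\cF}_{g-1}$ have exactly one external edge, so that $\cD$ realizes as the covariant derivative $D_t$, together with the check of the $g=2$ convention for $\wF_1$, only spells out details the paper leaves implicit.
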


\begin{Example}
For the expression (\ref{f2}) of $\wF_2$, we have
\ben
\partial_\kappa\wF_2&=&\frac{1}{2}F_1''+\frac{1}{2}(F_1')^2
+\kappa(\frac{1}{4}F_0^{(4)}+F_1'F_0''')+\frac{5}{8}\kappa^2(F_0''')^2,\\
\partial_t\wF_2&=&F_2'+\kappa[\frac{1}{2}F_1'''+F_1'F_1'']
+\kappa^2(\frac{1}{8}F_0^{(5)}+\frac{1}{2}F_1'F_0^{(4)}
+\frac{1}{2}(F_1')^2F_0'''+F_1''F_0''')\\
&+&\kappa^3[F_1'(F_0''')^2+\frac{2}{3}F_0'''F_0^{(4)}]+\frac{5}{8}\kappa^4(F_0''')^3.
\een
Here in the second equality we have used \eqref{constraint}.
\end{Example}

\begin{Example}
Direct computation gives
\ben
&& D_t\partial_t\wF_1=F_1''+\kappa(\frac{1}{2}F_0^{(4)}+F_1'F_0''')+\kappa^2 (F_0''')^2,
\een
one may check that
\ben
&& \partial_\kappa\wF_2=\frac{1}{2}(D_t \partial_t\wF_{1}+(\partial_t\wF_{1})^2).
\een
\end{Example}

\subsection{The general case}

Now we return to the general case for an arbitrary dimension $N$.
Similar to $N=1$ case,
the operators $K_{ij}$ and $\partial_{i}$
defined in Section \ref{sec3}
are realized by $\partial_{\kappa_{ij}}$ and    $\partial_{t_i}$  respectively.
To realize the operator $\pd_i$ as the differential operator $\pd_{t_i}$,
we need to impose some conditions on the propagators $\kappa_{ij}$
analogous to \eqref{constraint}:
\be
\frac{\partial\kappa_{jk}}{\partial t_i}
=\sum_{l,m}\kappa_{jl}\kappa_{km}\frac{\partial^3 F_0}{\partial t_i\partial t_l\partial t_m},
\ee
or in matrix form:
\be\label{N-constraint}
\frac{\partial\kappa}{\partial t_k}=\kappa\frac{\partial H(F_0)}{\partial t_k}\kappa
\ee
where $H(F_0)=(\frac{\partial^2 F_0}{\partial t_i \partial t_j})$ is the Hessian of $F_0$. A general solution is given by
\ben
\kappa(t)=(C-H(F_0))^{-1},
\een
where $C$ is a symmetric anti-holomorphic or constant matrix
such that $(C- H(F_0))$ is invertible.

The realization of the operator $D_{i}=\partial_{i}+\gamma_i$ is more subtle.
For our purpose,
it suffices to define the operator $D_{t_i}\partial_{t_j}$ as a realization of
the operator $D_i\pd_j$ as follows.

\begin{Definition}
For $\Gamma\in\cG_{g,0}^c$, we define
\be
\begin{split}
D_{t_i}\partial_{t_j}\omega_\Gamma=&\omega_{D_i\partial_j\Gamma}\\
=&\frac{\partial^2}{\partial t_i \partial t_j}\omega_\Gamma
+\sum_{l,m=1}^N \frac{\partial}{\partial t_l}\omega_\Gamma\cdot\kappa_{lm}(t)
\frac{\partial^3 F_0}{\partial t_m \partial t_i \partial t_j},
\end{split}
\ee
where
\ben
\omega_{\Gamma}=\prod_{v\in V(\Gamma)}F_{g(v)}^{(\val_1(v),\cdots,\val_N(v))}(t)
\cdot \prod_{e\in E(\Gamma)}\kappa_{l_1(e)l_2(e)}(t, \bar{t}).
\een
In particular, we see that $D_{t_i}\partial_{t_j}\omega_\Gamma=D_{t_j}\partial_{t_i}\omega_\Gamma$.
\end{Definition}

Then by applying the Feynman rules, Theorem \ref{N-graph} gives us

\begin{Theorem}\label{N-recursion}
When the condition \eqref{N-constraint} is satisfied,
for $g\geq 2$, we have the equations
\be
\partial_{\kappa_{ij}}\wF_g=D_{t_i}\partial_{t_j}\wF_{g-1}+\sum_{r=1}^{g-1}\partial_{t_i}\wF_r\partial_{t_j}\wF_{g-r}
\ee
for $i\not= j$, and
\be
\partial_{\kappa_{ii}}\wF_g=\frac{1}{2}(D_{t_i}\partial_{t_i}\wF_{g-1}+\sum_{r=1}^{g-1}\partial_{t_i}\wF_r\partial_{t_i}\wF_{g-r})
\ee
for the free energy $\wF_g$ defined by (\ref{free-expression}).
\end{Theorem}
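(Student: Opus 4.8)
The plan is to derive both identities by applying the Feynman-rule realization map $\Gamma\mapsto\omega_\Gamma$ to the abstract quadratic recursion of Theorem \ref{N-graph}, which is already an identity in the graph space $\cV^c$. The essential feature is that this realization map is $\bQ$-linear and is multiplicative with respect to disjoint unions of graphs, so it carries each term of the abstract identity to the corresponding analytic term. First I would record that $\wF_g=\sum_{\Gamma\in\cG_{g,0}^c(N)}\frac{1}{|\Aut(\Gamma)|}\omega_\Gamma$ is exactly the image of $\widehat{\cF}_g$ under this map, so that it suffices to check that each of the three operators appearing in Theorem \ref{N-graph} is intertwined with its announced realization.

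Next I would verify the three intertwining relations, which are the $N$-dimensional analogues of the computations already carried out for $N=1$ in the preceding subsection. For the edge-cutting operator, since each internal edge labelled $\{i,j\}$ contributes exactly one factor $\kappa_{ij}$ to $\omega_\Gamma$ (treating the symmetric entry $\kappa_{ij}=\kappa_{ji}$ as a single variable), one has $\partial_{\kappa_{ij}}\omega_\Gamma=\omega_{K_{ij}\Gamma}$: the derivative counts and removes one such factor, matching the sum over all ways of cutting an $\{i,j\}$-edge. For the insertion operator $\partial_i$, differentiating $\omega_\Gamma$ in $t_i$ produces both a sum over vertices (raising $\val_i$ by one, i.e.\ the ``attach an external leg'' part) and a sum over internal edges coming from $\partial_{t_i}\kappa_{l_1 l_2}$; under the constraint \eqref{N-constraint} the latter equals $\sum_{l,m}\kappa_{l_1 l}\kappa_{l_2 m}\,\partial^3_{t_i t_l t_m}F_0$, which reproduces precisely the ``break an edge and insert a trivalent genus-zero vertex'' part of $\partial_i$, giving $\partial_{t_i}\omega_\Gamma=\omega_{\partial_i\Gamma}$. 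Finally, the relation $D_{t_i}\partial_{t_j}\omega_\Gamma=\omega_{\cD_i\partial_j\Gamma}$ holds by the Definition immediately preceding the theorem, in which the correction term $\sum_{l,m}\partial_{t_l}\omega_\Gamma\cdot\kappa_{lm}\,\partial^3_{t_m t_i t_j}F_0$ encodes the action of the $\gamma_i$ part of $\cD_i$.

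With these in hand I would apply the realization map to both identities of Theorem \ref{N-graph}. The linear terms transfer directly: $K_{ij}\widehat{\cF}_g\mapsto\partial_{\kappa_{ij}}\wF_g$ and $\cD_i\partial_j\widehat{\cF}_{g-1}\mapsto D_{t_i}\partial_{t_j}\wF_{g-1}$. For the quadratic terms $\partial_i\widehat{\cF}_r\,\partial_j\widehat{\cF}_{g-r}$, which live in the full (possibly disconnected) space $\cV$, multiplicativity of $\omega$ on disjoint unions together with $\omega_{\partial_i\widehat{\cF}_r}=\partial_{t_i}\wF_r$ yields the product $\partial_{t_i}\wF_r\,\partial_{t_j}\wF_{g-r}$, and summing over $1\le r\le g-1$ reproduces the stated right-hand side. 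The factor $\frac{1}{2}$ in the $i=j$ case and its absence in the $i\neq j$ case are inherited verbatim from Theorem \ref{N-graph}.

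I expect the main obstacle to be the careful bookkeeping in the two verifications that hinge on the propagator constraint: ensuring that the edge-contribution produced by $\partial_{t_i}$ acting on the $\kappa$-factors matches, under \eqref{N-constraint}, exactly the graph operation of inserting a trivalent genus-zero vertex, and that the covariant correction in the Definition of $D_{t_i}\partial_{t_j}$ precisely accounts for the $\gamma_i$ contribution inside $\cD_i$. The symmetry $\kappa_{ij}=\kappa_{ji}$ and the convention that edge-labels are unordered must be tracked consistently, so that no spurious combinatorial factors of $2$ enter and the two cases $i\neq j$ and $i=j$ come out with the correct normalizations.
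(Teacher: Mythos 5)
Your proposal is correct and follows essentially the same route as the paper: the paper proves the $N=1$ intertwining relations for $K$, $\pd$, and $\cD\pd$ in detail (using the constraint \eqref{constraint}), defines $D_{t_i}\pd_{t_j}$ in the general case precisely so that it realizes $\cD_i\pd_j$, and then obtains Theorem \ref{N-recursion} by applying the Feynman-rule realization to the abstract recursion of Theorem \ref{N-graph}, exactly as you do. Your write-up merely makes explicit the linearity and multiplicativity of $\Gamma\mapsto\omega_\Gamma$ and the role of \eqref{N-constraint}, which the paper leaves implicit in its one-line deduction.
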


\section{The Holomorphic Anomaly Equation} \label{sec5}

In the rest of this paper we will present some examples
of our formalism developed above.
In this section we will focus on the case of holomorphic anomaly equations.

As we have mentioned in the Introduction,
the holomorphic anomaly equation was introduced by Bershadsky {\em et al}
in physics literatures \cite{bcov1,bcov2}
in order to compute the Gromov-Witten invariants of the quintic Calabi-Yau threefold.
They developed a method to solve the non-holomorphic free energies $\cF_g(t,\bar{t})$
recursively using the holomorphic anomaly equations
\be
\bar\partial_{\bar{i}}\cF_g=\frac{1}{2}\bar{C}_{\bar{i}\bar{j}\bar{k}}e^{2K}G^{j\bar{j}}G^{k\bar{k}}(D_j D_k\cF_{g-1}+\sum_{r=1}^{g-1}D_j\cF_r D_k\cF_{g-r}), \quad g\geq 2,
\ee
and by formally `taking a limit'
\ben
F_g(t)=\lim_{\bar{t}\to\infty}\cF_g(t,\bar{t})
\een
they got the generating series of Gromov-Witten invariants $F(t)$ of
the threefold on the small phase space.
The non-holomorphic free energies $\cF_g(t,\bar{t})$
are supposed to have modularity while the holomorphic free energies $F_g(t)$ does not,
and this has been exploited to use the theory of modular forms.
Since the ring of modular forms are often polynomial algebras with finitely many  generators,
this suggests some polynomiality properties of the non-holomorphic
free energies \cite{yy}.
This has led to many recent progress on the computation of
Gromov-Witten invariants of quintic Calabi-Yau 3-folds.

Unfortunately the mathematical or geometric definition of
the nonholomorphic free energy
is lacking in the mathematical literature.
According to \cite{bcov2},
the nonholomorphic free energy should be closely related to some geometric structures
on the relevant moduli spaces.
In the genus zero case,
it is known that the free energy is related to the special K\"ahler geometry
of the moduli space \cite{cdgp}.
In the genus one case,
the free energy is related to the $tt^*$-geometry of the moduli space \cite{cv,dub}
and the theory of analytic torsions \cite{bcov1,cfiv}.
In the higher genus case,
the nonholomorphic free energy is supposed to be
a suitable non-holomorphic
section of a holomorphic line bundle on the moduli space \cite{bcov2}.
Such an approach depends heavily on the special properties of Calabi-Yau 3-folds.

It is interesting to further clarify the geometric meaning of non-holomorphic
free energy,
and furthermore, to find analogue of holomorphic
anomaly equation for Gromov-Witten theory of general symplectic manifolds
or other related theories.
The formalism we develop in this paper shed some lights on this problem.
In fact,
our formalism is inspired by the direct integration
approach to holomorphic anomaly equations,
and it does not need to start with a Calabi-Yau 3-fold.

Let us briefly recall the direct integration approach.
According to \cite{bcov2},
the non-holomorphic free energy
should count the contributions from degenerate Riemann surfaces,
and so it should involve the boundary strata of the Deligne-Mumford
moduli spaces of algebraic curves.
Witten \cite{wit1} interpreted the holomorphic anomaly  equation
from the point of view of geometric quantization of symplectic vector spaces
and background independence.
Inspired by this work,
Aganagic, Bouchard and Klemm \cite{abk} obtained via change of polarizations
the following general form for $\wF_g$:
\be
\wF_g(t, \bar{t}) = F_g(t) + \Gamma_g \biggl(\Delta^{IJ},
\pd_{I_1} \cdots \pd_{I_n}F_{r<g}(t)\biggr),
\ee
in  some particular cases,
$\Delta^{IJ}$ can be taken to be $-(\tau -\bar{\tau})^{IJ}$,
where $\tau = (\tau_{ij}) = \big(\frac{\pd^2F_0}{\pd t_i\pd t_j})$.
These particular cases were interpreted for matrix models
using Eynard-Orantin topological recursion
by Eynard, Mari\~no and Orantin \cite{emo}.
Furthermore,
they also represented the partition function as a formal Gaussian integral \cite[(4.27)]{emo}
and presented the Feynman graphs and Feynman rules for the terms that contribute to $\Gamma_g$.
These results have been generalized to other models
by Grimm, Klemm, Mari\~no and Weiss \cite{gkmw}.
These authors reformulated the holomorphic anomaly equation as a
quadratic recursion relation for the derivative of $\wF_g$ with respective to the propagators
$\Delta^{IJ}$ (cf. \cite[(7.50)]{gkmw}:
\be
\frac{\pd \wF_g}{\pd \Delta^{IJ}} =
\frac{1}{2}
D_I\pd_J\wF_{g-1} + \frac{1}{2}
\sum^{g-1}_{r=1}
\pd_I\wF_r\pd_J\wF_{g-r}.
\ee
From the consideration of modularity
of $\wF_g$,
the propagators $\Delta^{IJ}$ has the form
\be
\Delta^{IJ} = \frac{1}{\sqrt{-1}} ((\tau-\bar{\tau})^{-1})^{IJ} +\cE^{IJ},
\ee
where $\cE^{IJ}$ is a holomorphic function.
They also derived the formal Gaussian integral representation of the partition function
and the Feynman expansions for $\wF_g$.

It is clear that our formalism is inspired by \cite{abk, emo, gkmw}.
The results in these works
when the propagators are
\be
\kappa = -(\tau - \bar{\tau})^{-1}
\ee
 are special cases of our formalism.
Indeed,
the condition \eqref{N-constraint} clearly holds for such propagators,
and so our formalism can be applied to the  free energy $F_g$
of Gromov-Witten theory of a compact Calabi-Yau 3-fold to get a sequence $\wF_g$
that satisfies the holomorphic anomaly equations.
It is remarkable that the application of our formalism
does not need the Calabi-Yau condition,
and so it applies to the Gromov-Witten theory of {\em any}
compact symplectic manifold,
any Gromov-Witten type theory as the FJRW theory.

\begin{Example}
For FJRW theory of type $A_1$,
$F_0(t_0) = \frac{t_0^3}{3!}$,
and so we have
\be
\tau = \frac{\pd^2F_0}{\pd t_0^2} = t_0,
\ee
and so
\be
\kappa = \frac{1}{\bar{t}_0 - t_0}.
\ee
The free energy functions $F_g(t_0)$ are generating functions of intersection
numbers on $\Mbar_{g,n}$,
restricted to the small phase space $t_i = 0$ for $i > 0$.
Therefore, for $g \geq 1$,
\be
F_g(t_0)= 0.
\ee
Then our construction yields a sequence $\wF_g(t, \bar{t})$:

\end{Example}

\begin{Example}
For the Gromov-Witten theory of $\bP^1$,
the free energy function restricted to the small phase space is given by
\begin{align*}
F_0(t_0, t_1) &= \frac{t_0^2t_1}{2} +q e^{t_1},\\
F_1(t_0,t_1) &= - \frac{t_1}{24}, \\
F_g(t_0, t_1) &= 0, \;\; g \geq 2,
\end{align*}
where $q$ is the degree tracking parameter.
The period matrix $\tau$ can be directly computed as follows:
\be
\tau = \begin{pmatrix}
t_1 & t_0 \\
t_0 & q e^{t_1}
\end{pmatrix}
\ee
and so the propagator is
\be
\kappa = - \begin{pmatrix}
t_1 - \bar{t}_1 & t_0 - \bar{t}_0 \\
t_0 - \bar{t}_0 & q e^{t_1}- qe^{\bar{t}_1}
\end{pmatrix}^{-1}
\ee
\end{Example}

It is an interesting problem to extend our construction to the big phase space
that includes all the gravitational descendants of primary observables.
It is well-known that Gromov-Witten theory and its various generalizations
in the mathematical literature describes what physicists
refer to as the topological string theory.
The mathematical meaning or definition of the partition function
of what physicists refer to as the string theory needs to be clarified at present.
For example, for Gromov-Witten theory,
the free energy $F_g$ depends on infinitely many coupling constants
$\{t_{i, n}\}_{1\leq i \leq N, n \geq 0}$,
we expect the string theory behind it to have free energy $\wF_g$
that also depends on $\{\bar{t}_{i, n}\}_{1\leq i \leq N, n \geq 0}$.
We hope our construction shed some lights on this problem.

When the propagators take the form
\be
\kappa = -(\tau - \bar{\tau})^{-1}+\cE,
\ee
where $\cE$ is a symmetric matrix whose entries are holomorphic functions,
the condition \eqref{N-constraint} do not hold in general.
This means that we need to modify the diagrammatics we discuss in this paper.
We will do this in a subsequent work \cite{wz3}.

\section{Relationship to Eynard-Orantin Topological Recursion and Quantum Spectral Curves}
\label{sec6}

In this section we write the wave function of the quantum spectral curve
of Gukov-Su{\l}kowski \cite{gs} as a summation over stable graphs,
and interpret it as a particular realization of the abstract QFT.

\subsection{Graph sum for Eynard-Orantin topological recursion}

The theory of Eynard-Orantin topological recursion was developed by
Eynard and Orantin in \cite{eo},
inspired by the theory of matrix models.
The input data of this theory are a Riemann surface $\mathcal{C}$ (spectral curve),
two holomorphic functions $x$ and $y$ on this curve,
and a symmetric bilinear meromorphic 2-form of the 2nd kind $B(p,q)$ (Bergman kernel)
on $\mathcal{C}\times\mathcal{C}$ which has a double pole along the diagonal and no other pole.
Then a family of meromorphic n-differentials $W^{(g)}_n(z_1,\cdots,z_n)$
on $\mathcal{C}^n$ are defined recursively from these input data.
The Eynard-Orantin invariants $W_n^{(g)}$ can be represented
as a sum over trivalent graphs (with both oriented edges and non-oriented edges)
of genus $g$ with $n$ leaves marked by $z_1,\cdots, z_n$ respectively
(see \cite[\S 4.5]{eo}).

In the case of one branch point,
Eynard \cite{ey1} showed that the invariants $W_n^{(g)}$
can be expressed in terms of intersection numbers on the moduli space of stable curves $\Mbar_{g,n}$.
In the case of $N$ branch points,
$W_n^{(g)}$ can be expressed in terms of intersection numbers on the moduli space
of `colored' stable curves $\Mbar_{g,n}^N$, see Eynard \cite{ey2}.
The vertices of trivalent graphs in this case are indexed by $\{1,\cdots, N\}$.
By cutting these trivalent graphs into clusters of constant indices,
summations over these trivalent graphs can be reformulated as summations over
labelled stable graphs of type $(g,n)$,
see \cite[Appendix A.3]{ey2} and Kostov-Orantin \cite{ko}.

Dunin-Barkowski, Orantin, Shadrin, and Spitz reformulated this diagrammatic representation
for the correlation functions $W_n^{(g)}$ of the local topological recursion
as a summation over decorated stable graphs in \cite[\S 3]{doss}.
Now let us recall their results  in the rest of this subsection.

For a fixed positive integer $N$,
define the analytic functions
\ben
x^i(z):=z^2+a_i,
\qquad
y^i(z):=\sum_{k=0}^\infty h_k^i z^k,
\een
and a bi-differential
\ben
B^{ij}(z,z')=\delta_{ij}\cdot\frac{dz\otimes dz'}{(z-z')^2}
+\sum_{k,l=0}^\infty B_{k,l}^{i,j}z^kz'^l dz\otimes dz'
\een
in a neighborhood of $0\in \bC$ for $i,j\in \{1,2,\cdots,N\}$,
where $\{h_k^i\}_{k\in \bN}$ are $N$ families of complex numbers called times,
and $\{B^{i,j}_{k,l}\}_{(k,l)\in \bN^2}$ are
$N\times N$ families of complex numbers called jumps.
The Eynard-Orantin invariants are defined by
\ben
\omega_{0,1}^i(z):=0,\qquad
\omega_{0,2}^{i,j}(z,z'):=B^{i,j}(z,z'),
\een
and the recursion relation
\begin{equation*}
\begin{split}
&\omega_{g,n+1}^{i_0,i_1,\cdots,i_n}(z_0,z_1,\cdots,z_n):=
\sum_{j=1}^N Res_{z\to 0}
\frac{\int_{-z}^z B^{i_0,j}(z_0,\cdot)}{2(y^j(z)-y^j(-z))dx^j(z)}\cdot\\
&\biggl(\omega_{g-1,n+2}^{j,j,i_1,\cdots,i_n}(z,-z,z_1,\cdots,z_n)+
\sum_{\substack{0\leq h\leq g\\A\sqcup B=\{1,\cdots,n\}}}
\omega_{h,|A|+1}^{j,i_A}(z,z_A)\omega_{g-h,|B|+1}^{j,i_B}(-z,z_B)
\biggr)
\end{split}
\end{equation*}
for every $2g-2+n>0$,
where $i_A:=\{i_{a_1},\cdots, i_{a_k}\}$ and $z_A:=\{z_{a_1},\cdots, z_{a_k}\}$
for a subset $A=\{a_1,\cdots,a_k\}$ of $\{1,2,\cdots,n\}$.

Following \cite{doss},
let $\Gamma_{g,n}$ be the set of connected decorated stable graphs of genus $g$,
with $n$ ordinary leaves marked by arguments $z_1,\cdots, z_n$.
Let $V(\Gamma)$, $E(\Gamma)$, $H(\Gamma)$ and $L(\Gamma)$
be the set of vertices, edges, half-edges, and leaves of $\Gamma$ respectively.
The set of leaves $L(\Gamma)=L^*(\Gamma)\sqcup L^\bullet(\Gamma)$,
where $L^*(\Gamma)$ is the set of ordinary leaves,
and $L^\bullet(\Gamma)$ is the set of dilaton leaves.
A half-edge is defined to be either a leaf,
or an internal edge together with a choice of one of the two vertices it is attached to.
Let $i:V(\Gamma)\to \{1,2,\cdots, N\}$ be the markings of vertices,
and this induces the markings on leaves $i:L(\Gamma)\to \{1,2,\cdots, N\}$
by assigning $i(l):=i(v)$ if $l$ is a leaf attached to the vertex $v$.
The marking on a dilaton leaf is required to be greater than one.
For an internal edge $e$,
denote by $v_1(e)$, $v_2(e)$ the two vertices $e$ is attached to,
and by $h_1(e)$, $h_2(e)$ the two corresponding half-edges.
For a vertex $v\in V(\Gamma)$,
denote by $H(v)$ the set of half-edges incident at $v$.
Let $k:H(\Gamma)\to\bZ_{\geq 0}$ be the heights of the half-edges.
Define
\ben
\omega_{g,n}(\vec{z}):=\sum_{\vec{i}}\omega_{g,n}^{\vec{i}}(\vec{z}).
\een

\begin{Theorem}(\cite{ey2}, \cite[Theorem 3.7]{doss})
For $2g-2+n>0$,
\begin{equation}\label{EO-graphsum}
\begin{split}
\omega_{g,n}(\vec{z})=&\sum_{\Gamma\in\Gamma_{g,n}}\frac{1}{|\Aut(\Gamma)|}
\prod_{v\in V(\Gamma)}\biggl(-2h_1^{i(v)}\biggr)^{\chi_{g(v),\val(v)}}
\langle\prod_{h\in H(v)}\tau_{k(h)}\rangle_{g(v),\val(v)}\\
&\cdot\prod_{e\in E(\Gamma)}\check{B}_{k(h_1(e)),k(h_2(e))}^{i(v_1(e)),i(v_2(e))}
\prod_{l\in L^*(\Gamma)}\sum_{j=1}^Nd\xi_{k(l)}^{i(l)}(z_l,j)
\prod_{\lambda\in L^{\bullet}(\Gamma)}\check{h}_{k(\lambda)}^{i(\lambda)},
\end{split}
\end{equation}
where
\ben
&&\chi_{g(v),\val(v)}:=2-2g(v)-\val(v);\\
&&\check{h}_k^i:=2(2k-1)!! h_{2k-1}^i;\\
&&d\xi_d^i(z_a,j):=Res_{z\to 0}\frac{(2d+1)!!dz}{z^{2d+2}}\int^z B^{i,j}(z,z_a);\\
&&\check{B}_{d_1,d_2}^{i,j}:=(2d_1-1)!!(2d_2-1)!!B_{2d_1,2d_2}^{i,j};\\
&&\langle\prod_{i=1}^n \tau_{k_i}\rangle_{g,n}:=
\int_{\Mbar_{g,n}} \psi_1^{k_1}\psi_2^{k_2}\cdots\psi_n^{k_n}.
\een
\end{Theorem}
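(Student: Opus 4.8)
The plan is to argue by induction on the Euler characteristic, i.e.\ on the quantity $2g-2+n$, since the defining recursion for $\omega_{g,n+1}^{i_0,\cdots,i_n}$ expresses it through $\omega_{g-1,n+2}$ and products $\omega_{h,|A|+1}\,\omega_{g-h,|B|+1}$, all of which have strictly smaller $2g-2+n$; the unstable seeds $\omega_{0,1}^i=0$ and $\omega_{0,2}^{i,j}=B^{i,j}$ will enter not as graph vertices but as the data generating the edge- and leaf-weights. First I would dispose of the two stable base cases $(g,n)=(0,3)$ and $(g,n)=(1,1)$: here the right-hand side of \eqref{EO-graphsum} consists of a single one-vertex graph, and using $\langle\tau_0^3\rangle_{0,3}=1$ and $\langle\tau_1\rangle_{1,1}=\frac{1}{24}$ one checks directly against the residues produced by the recursion. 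For the inductive step I would substitute the graph-sum expressions supplied by the induction hypothesis for $\omega_{g-1,n+2}$ and for each factor $\omega_{h,|A|+1}$, $\omega_{g-h,|B|+1}$ into the recursion, and then show that applying $\sum_j \mathrm{Res}_{z\to 0}$ against the kernel reassembles exactly the sum over $\Gamma\in\Gamma_{g,n+1}$.

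The heart of the argument is the evaluation of the residue. Writing $x^j(z)=z^2+a_j$ so that $dx^j=2z\,dz$, and $y^j(z)-y^j(-z)=2\sum_{m\geq 1}h^j_{2m-1}z^{2m-1}$, the recursion kernel becomes a power series in $z$ whose leading behaviour is governed by $h_1^j$ and whose higher coefficients $h_3^j,h_5^j,\dots$ are precisely the source of the dilaton leaves carrying the weights $\check{h}_k^{i(\lambda)}=2(2k-1)!!\,h_{2k-1}^{i(\lambda)}$. I would organize this expansion so that taking the residue against the local Taylor data of the inserted invariants amounts to the Virasoro (Dijkgraaf--Verlinde--Verlinde) recursion for the numbers $\langle\prod\tau_{k(h)}\rangle_{g(v),\val(v)}$; in other words, the local topological recursion on the curve $x=z^2$, $y=\sum h_k z^k$ reproduces $\psi$-class intersection theory on $\Mbar_{g,n}$, which is the Airy/Kontsevich--Witten case. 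The new vertex created by gluing the half-edges at $z$ and $-z$ then acquires exactly the factor $(-2h_1^{i(v)})^{\chi_{g(v),\val(v)}}\langle\prod_{h\in H(v)}\tau_{k(h)}\rangle_{g(v),\val(v)}$, with the heights $k(h)$ of its half-edges read off from the powers of $z$ surviving the residue.

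It then remains to match the remaining decorations and to control the symmetry factors. The bilinear form $\omega_{0,2}^{i,j}=B^{i,j}$ supplies, through its expansion coefficients $B^{i,j}_{2d_1,2d_2}$, exactly the edge weights $\check{B}^{i(v_1(e)),i(v_2(e))}_{k(h_1(e)),k(h_2(e))}$ attached to each internal edge, while the insertion of $\int^z B^{i_0,j}$ in the kernel produces the ordinary-leaf factors $\sum_j d\xi^{i(l)}_{k(l)}(z_l,j)$. The final and most delicate bookkeeping is the passage from labelled gluings to equivalence classes of decorated graphs: summing over the ordered data (which branch $j$, which splitting $A\sqcup B$, which pair of invariants is glued) overcounts each isomorphism class $\Gamma$ by exactly $|\Aut(\Gamma)|$ relative to the weight $1/|\Aut(\Gamma)|$ in \eqref{EO-graphsum}. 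This counting is entirely analogous to the automorphism analysis carried out in the proof of Theorem \ref{thm1}, where gluings of external edges were matched against cuttings of internal edges.

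The main obstacle I anticipate is the residue computation in the inductive step: showing cleanly that the purely analytic operation $\sum_j\mathrm{Res}_{z\to 0}$ applied to the kernel, against products of local expansions of the lower invariants, is equivalent to the combinatorial recursion for the numbers $\langle\prod\tau_k\rangle_{g,n}$, together with the correct emergence of the dilaton-leaf weights from the odd times $h^j_{2m-1}$. Once this dictionary between the kernel and intersection theory is established, the edge- and leaf-factor matching and the automorphism bookkeeping follow by the same type of argument as in Section \ref{sec2}, and the induction closes.
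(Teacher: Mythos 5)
This theorem is not proved in the paper at all: it is imported verbatim from the literature, attributed to \cite{ey2} and to \cite[Theorem 3.7]{doss}, so there is no ``paper's own proof'' to compare against. Judged on its own terms, your outline follows the strategy of Eynard's original arguments in \cite{ey1, ey2} (induction on $2g-2+n$ through the defining recursion, local expansion of the kernel at the branch points, emergence of dilaton leaves from the odd times $h^j_{2m-1}$ and of edge weights from the jumps $B^{i,j}_{k,l}$, followed by automorphism bookkeeping), and that strategy is sound. Note, however, that the proof in \cite{doss} takes a genuinely different and more structural route: it identifies the local topological recursion with Givental's quantization formula for semisimple cohomological field theories, so the graph sum \eqref{EO-graphsum} appears as the Givental graph expansion rather than being reassembled term by term from the residue recursion.

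The one place where your sketch is a plan rather than a proof is exactly the step you flag as the ``main obstacle'': the claim that $\sum_j\mathrm{Res}_{z\to 0}$ applied to the kernel, against the local data of the lower invariants, reproduces the DVV-type recursion for the numbers $\langle\prod\tau_{k(h)}\rangle_{g(v),\val(v)}$ with the stated normalization $(-2h_1^{i(v)})^{\chi_{g(v),\val(v)}}$. This is not a routine computation: it contains the Witten--Kontsevich theorem (the one-branch-point, $y=z$ case) as an essential input, and for general times it further requires showing that the higher coefficients $h^j_3, h^j_5,\dots$ enter only through dilaton-leaf insertions, which is a nontrivial rearrangement of the kernel expansion. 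You should either cite Witten--Kontsevich explicitly at this point and carry out the expansion of $1/(y^j(z)-y^j(-z))$ as a geometric series in $h^j_{2m-1}/h^j_1$, or follow \cite{doss} and route the argument through Givental's formalism. A second, smaller gap: the recursion distinguishes the argument $z_0$, so your induction produces a rooted graph sum; to land on the manifestly symmetric expression \eqref{EO-graphsum} you need to invoke the symmetry of $\omega_{g,n}$ in all of its arguments, which is itself a theorem of \cite{eo} and should be cited rather than assumed.
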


\subsection{Graph sum for quantum spectral curves}

In \cite{gs}, Gukov and Su{\l}kowski proposed a method to
construct the quantum spectral curves using
Eynard-Orantin topological recursion.
In this subsection we first recall their construction,
and then reformulate it as a summation over
decorated stable graphs without markings $z_1,\cdots, z_n$ on ordinary leaves.

Let $\bC\times\bC$ be the complex plane with coordinates $(u,v)$,
equipped with the symplectic form $\omega=\frac{\sqrt{-1}}{\hbar}du\wedge dv$.
Let $A(u,v)$ be a polynomial in $u$ and $v$,
then the algebraic curve
\ben
\mathcal{C}:\quad A(u,v)=0
\een
is a Lagrangian submanifold in $(\bC\times\bC,\omega)$.
Quantization turns the coordinates $u$ and $v$ into operators $\hat u$ and $\hat v$
which satisfy the commutation relation $[\hat v,\hat u]=\hbar$,
and algebras of functions in $u,v$ into a noncommutative algebra of operators.
The quantization of the polynomial $A(u,v)$ is an operator
\be
\widehat A
=\widehat A_0 +\hbar \widehat A_1 +\hbar^2 \widehat A_2 +\cdots,
\ee
where $\widehat A_0$ is obtained from $A$ by replacing $u,v$ by $\hat u,\hat v$ respectively.

Inspired by the matrix model origin of the Eynard-Orantin topological recursion,
Gukov and Su{\l}kowski defined the following Baker-Akhiezer function
(see \cite[\S 2.1]{gs}):
\be
Z(z):=\exp\biggl(\sum_{n=0}^{\infty}\hbar^{n-1}S_n(z)\biggr),
\ee
where
\be  \label{eqn:GS}
\begin{split}
& S_0(z):=\int^z v(z)du(z),\\
& S_1(z):=-\half \log \frac{du}{dz},\\
& S_n(z):=\sum_{2g-1+k=n}\frac{1}{k!}\int^z\cdots\int^z
\omega_{g,k}(z_1,\cdots,z_k),
\quad n\geq 2,
\end{split}
\ee
and $\omega_{g,k}(z_1,\cdots,z_k)$ are the Eynard-Orantin invariants defined by
the spectral curve $u=u(z),v=v(z)$
together with a choice of Bergman kernel $B(p,q)$.
Then these authors conjectured that the quantum spectral curve $\widehat A$
can be obtained by solving the Schr\"odinger equation
\be\label{eq-Schrodinger}
\widehat A Z(z) =0.
\ee

Now let us combine the above construction of $Z(z)$ due to Gukov-Su{\l}kowski
and the graph sum \eqref{EO-graphsum} of Dunin-Barkowski {\em et al} \cite{doss}
to give a diagrammatic representation for the functions $S_n(z)$ ($n\geq 2$).

It follows from \eqref{EO-graphsum} that
\begin{equation}\label{QSC-graphsum}
\begin{split}
&\int^z\cdots\int^z \omega_{g,k}(z_1,\cdots,z_k)\\
=&\sum_{\Gamma\in\Gamma_{g,k}}\frac{1}{|\Aut(\Gamma)|}
\prod_{v\in V(\Gamma)}\biggl(-2h_1^{i(v)}\biggr)^{\chi_{g(v),\val(v)}}
\langle\prod_{h\in H(v)}\tau_{k(h)}\rangle_{g(v),\val(v)}\\
&\cdot\prod_{e\in E(\Gamma)}\check{B}_{k(h_1(e)),k(h_2(e))}^{i(v_1(e)),i(v_2(e))}
\prod_{l\in L^*(\Gamma)}\int^z\biggl(\sum_{j=1}^Nd\xi_{k(l)}^{i(l)}(z_l,j)\biggr)
\prod_{\lambda\in L^{\bullet}(\Gamma)}\check{h}_{k(\lambda)}^{i(\lambda)}.
\end{split}
\end{equation}
Therefore $S_n$ can be represented as a summation over connected decorated stable graphs
of type $(g,k)$ with $2g-1+k=n$.
Notice here the weight of every ordinary leaf $l\in L^*(\Gamma)$ is given by
\be\label{weight-leaf}
\int^z\biggl(\sum_{j=1}^Nd\xi_{k(l)}^{i(l)}(z_l,j)\biggr),
\ee
which is independent of $z_l$.
Thus we now obtain an enumeration problem of decorated stable graphs
where we do not mark $z_1,\cdots,z_n$ on the ordinary leaves,
i.e., we do not distinguish these external edges.

Let $\widetilde\Gamma_{g,n}$ be the set of decorated stable graphs
obtained from decorated stable graphs in $\Gamma_{g,n}$ by forgetting
the markings $z_1,\cdots,z_n$ on ordinary leaves,
then we have the following:
\begin{Theorem}
For every $2g-2+k>0$, we have:
\begin{equation}\label{QSC-graphsum2}
\begin{split}
&\frac{1}{k!}
\int^z\cdots\int^z \omega_{g,k}(z_1,\cdots,z_k)\\
=&\sum_{\widetilde\Gamma\in\widetilde\Gamma_{g,k}}\frac{1}{|\Aut(\widetilde\Gamma)|}
\prod_{v\in V(\widetilde\Gamma)}\biggl(-2h_1^{i(v)}\biggr)^{\chi_{g(v),\val(v)}}
\langle\prod_{h\in H(v)}\tau_{k(h)}\rangle_{g(v),\val(v)}\\
&\cdot\prod_{e\in E(\widetilde\Gamma)}\check{B}_{k(h_1(e)),k(h_2(e))}^{i(v_1(e)),i(v_2(e))}
\prod_{l\in L^*(\widetilde\Gamma)}\int^z\biggl(\sum_{j=1}^N d\xi_{k(l)}^{i(l)}(z,j)\biggr)
\prod_{\lambda\in L^{\bullet}(\widetilde\Gamma)}\check{h}_{k(\lambda)}^{i(\lambda)}.
\end{split}
\end{equation}
\end{Theorem}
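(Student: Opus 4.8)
The plan is to exhibit the forgetful map $\Gamma_{g,k}\to\widetilde\Gamma_{g,k}$ that erases the markings $z_1,\dots,z_k$ on the ordinary leaves, and then to compare the two graph sums fiber by fiber over this map, using the orbit-counting scheme already employed in the proofs of Theorem \ref{thm1} and Lemma \ref{lem1}. The crucial input, recorded in the discussion preceding the statement, is that by \eqref{weight-leaf} the weight attached to each ordinary leaf $l\in L^*(\Gamma)$, namely $\int^z\bigl(\sum_{j=1}^N d\xi_{k(l)}^{i(l)}(z_l,j)\bigr)$, is independent of the argument $z_l$. Consequently the entire summand in \eqref{QSC-graphsum} depends only on the underlying decorated graph $\widetilde\Gamma$, and not on which of $z_1,\dots,z_k$ is attached to which ordinary leaf; I denote this common weight by $W(\widetilde\Gamma)$.

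First I would fix $\widetilde\Gamma\in\widetilde\Gamma_{g,k}$ and analyze its fiber, i.e.\ the set of isomorphism classes of marked graphs $\Gamma\in\Gamma_{g,k}$ forgetting to $\widetilde\Gamma$. Let $A=\Aut(\widetilde\Gamma)$ and let $M$ be the set of bijective labellings of the $k$ ordinary leaves of $\widetilde\Gamma$ by $z_1,\dots,z_k$, so that $|M|=k!$. The group $A$ acts on $M$ through its action on the set of ordinary leaves. Two labellings produce isomorphic marked graphs exactly when they lie in the same $A$-orbit, so the fiber is in bijection with $A\backslash M$; moreover, for a labelling $m$ the automorphism group $\Aut(\Gamma_m)$ of the resulting marked graph is precisely the stabilizer $A_m$, since an automorphism of $\Gamma_m$ is the same as an automorphism of the underlying decorated graph that fixes every marked leaf.

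Next I would carry out the orbit-stabilizer bookkeeping. Since $W$ is constant on the fiber, the contribution of the fiber over $\widetilde\Gamma$ to the left-hand graph sum \eqref{QSC-graphsum} is
\begin{equation*}
W(\widetilde\Gamma)\sum_{[m]\in A\backslash M}\frac{1}{|A_m|}
=W(\widetilde\Gamma)\sum_{[m]\in A\backslash M}\frac{|A\cdot m|}{|A|}
=W(\widetilde\Gamma)\cdot\frac{|M|}{|A|}
=\frac{k!}{|\Aut(\widetilde\Gamma)|}\,W(\widetilde\Gamma),
\end{equation*}
where I used $|A\cdot m|=|A|/|A_m|$ together with $\sum_{[m]}|A\cdot m|=|M|=k!$, the orbits forming a partition of $M$. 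Summing over all $\widetilde\Gamma\in\widetilde\Gamma_{g,k}$ shows that the right-hand side of \eqref{QSC-graphsum} equals $k!$ times the right-hand side of \eqref{QSC-graphsum2}; dividing by $k!$ and invoking \eqref{QSC-graphsum} to identify the left-hand side with $\frac{1}{k!}\int^z\cdots\int^z\omega_{g,k}$ then completes the proof.

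I expect the only real subtlety — hence a point to state carefully rather than a genuine obstacle — to be the identification $\Aut(\Gamma_m)=A_m$: one must verify that an automorphism of a $z$-marked decorated stable graph is the same thing as an automorphism of the unmarked decorated graph fixing each ordinary leaf, and that no extra identifications arise from the decoration data (heights $k(h)$, vertex markings $i(v)$, dilaton leaves), all of which are preserved by the forgetful map by construction. Once this is in place the remaining argument is exactly the $S_k$-equivariant orbit count that shuffles the ordinary-leaf markings.
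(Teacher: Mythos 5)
Your proposal is correct and takes essentially the same approach as the paper: both reduce the statement, via the $z_l$-independence of the leaf weights, to the fiberwise identity $\sum_{\Gamma}1/|\Aut(\Gamma)|=k!/|\Aut(\widetilde\Gamma)|$ over the forgetful map, proved by an orbit--stabilizer count. The paper organizes the count dually (transitive $S_k$-action on the fiber $\mathcal S_{\widetilde\Gamma}$ with stabilizer $H\cong\Aut(\widetilde\Gamma)/\Aut(\Gamma)$), but this is the same argument as your $A$-action on the set of labellings $M$, and your identification $\Aut(\Gamma_m)=A_m$ is the precise content of the paper's claim that $\Aut(\widetilde\Gamma)$ surjects onto $H$ with kernel $\Aut(\Gamma)$.
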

\begin{proof}
Let $\widetilde\Gamma\in \widetilde\Gamma_{g,k}$ be a decorated stable graph of type $(g,k)$,
whose ordinary leaves are unmarked.
Denote by $\mathcal S_{\widetilde\Gamma}$ the set of decorated stable graphs $\Gamma\in\Gamma_{g,k}$,
such that $\widetilde\Gamma$ can be obtained from $\Gamma$
by forgetting the markings $z_1,\cdots, z_k$ on the ordinary leaves of $\Gamma$.

Since we already have \eqref{QSC-graphsum},
it suffices to show that
\be\label{eq-QSC-graph}
\frac{1}{k!}\cdot
\sum_{\Gamma\in\mathcal S_{\widetilde\Gamma}}\frac{1}{|\Aut(\Gamma)|}
=\frac{1}{|\Aut(\widetilde\Gamma)|}.
\ee

Now let us prove the above equality.
First observe that for every $\Gamma,\Gamma'\in \mathcal S_\Gamma$,
we have $|\Aut(\Gamma)|=|\Aut(\Gamma')|$.
Thus we only need to check
\be\label{eq2-QSC-graph}
\frac{1}{k!}\cdot
\frac{|\mathcal S_{\widetilde\Gamma}|}{|\Aut(\Gamma)|}=\frac{1}{|\Aut(\widetilde\Gamma)|}
\ee
for an arbitrary $\Gamma\in \mathcal S_{\widetilde\Gamma}$.

Let the permutation group $\mathcal S_k$ act on the set $\mathcal S_{\widetilde\Gamma}$
by permuting the markings $z_1,\cdots,z_n$ on ordinary leaves.
This action is clearly transitive, thus
\be\label{eq3-QSC-graph}
|\mathcal S_{\widetilde\Gamma}|=\frac{|\mathcal S_k|}{|H|}=\frac{k!}{|H|},
\ee
where $H$ consists of the permutations of the markings that preserves $\Gamma$.

On the other hand, the group $\Aut(\Gamma)$ acts transitively on $H$,
and the invariant subgroup is exactly isomorphic to $\Aut(\widetilde\Gamma)$,
thus
\be\label{eq4-QSC-graph}
\frac{|\Aut(\widetilde\Gamma)|}{|\Aut(\Gamma)|}=|H|.
\ee
Then \eqref{eq3-QSC-graph} and \eqref{eq4-QSC-graph} give us \eqref{eq2-QSC-graph}.
\end{proof}

As a corollary,
we obtain the following graph sum for $S_n(z)$ ($g\geq 2$):

\begin{Corollary}
For every $g\geq 2$, we have
\begin{equation}
\begin{split}
S_n(z)
=&\sum_{\substack{\widetilde\Gamma\in\sqcup\widetilde\Gamma_{g,k}\\2g-1+k=n}}
\frac{1}{|\Aut(\widetilde\Gamma)|}
\prod_{v\in V(\widetilde\Gamma)}\biggl(-2h_1^{i(v)}\biggr)^{\chi_{g(v),\val(v)}}
\langle\prod_{h\in H(v)}\tau_{k(h)}\rangle_{g(v),\val(v)}\\
&\cdot\prod_{e\in E(\widetilde\Gamma)}\check{B}_{k(h_1(e)),k(h_2(e))}^{i(v_1(e)),i(v_2(e))}
\prod_{l\in L^*(\widetilde\Gamma)}\int^z\biggl(\sum_{j=1}^N d\xi_{k(l)}^{i(l)}(z,j)\biggr)
\prod_{\lambda\in L^{\bullet}(\widetilde\Gamma)}\check{h}_{k(\lambda)}^{i(\lambda)}.
\end{split}
\end{equation}
\end{Corollary}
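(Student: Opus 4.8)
The plan is to deduce the Corollary directly from the preceding Theorem, equation \eqref{QSC-graphsum2}, by substituting its graph sum into the definition \eqref{eqn:GS} of $S_n$. Recall that for $n\geq 2$ one has
\[
S_n(z) = \sum_{2g-1+k=n} \frac{1}{k!} \int^z \cdots \int^z \omega_{g,k}(z_1, \dots, z_k),
\]
a finite sum indexed by the pairs $(g,k)$ of nonnegative integers with $2g-1+k=n$. First I would check that the preceding Theorem applies to each summand: from $2g-1+k=n$ with $n\geq 2$ we get $2g-2+k=n-1\geq 1>0$, which is exactly the hypothesis $2g-2+k>0$ under which \eqref{QSC-graphsum2} holds. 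The same inequality guarantees that each admissible $(g,k)$ meets the stability constraints making $\omega_{g,k}$ well defined (namely $k\geq 3$ when $g=0$ and $k\geq 1$ when $g=1$), so the range of summation contains no degenerate terms.

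Next I would insert the expression \eqref{QSC-graphsum2} for $\frac{1}{k!}\int^z\cdots\int^z\omega_{g,k}$ into each term, producing a double summation: an outer sum over the pairs $(g,k)$ with $2g-1+k=n$, and for each such pair an inner sum over decorated stable graphs $\widetilde\Gamma\in\widetilde\Gamma_{g,k}$ with unmarked ordinary leaves. Since the sets $\widetilde\Gamma_{g,k}$ are pairwise disjoint as $(g,k)$ ranges over the solutions of $2g-1+k=n$, these nested sums collapse into a single summation over the disjoint union $\bigsqcup_{2g-1+k=n}\widetilde\Gamma_{g,k}$, which is precisely the index set in the Corollary. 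The weight attached to each $\widetilde\Gamma$---the product of the vertex factors $(-2h_1^{i(v)})^{\chi_{g(v),\val(v)}}\langle\prod_h\tau_{k(h)}\rangle$, the edge factors $\check{B}$, the ordinary-leaf factors $\int^z\bigl(\sum_j d\xi_{k(l)}^{i(l)}(z,j)\bigr)$, and the dilaton-leaf factors $\check{h}$, all divided by $|\Aut(\widetilde\Gamma)|$---is carried over verbatim from \eqref{QSC-graphsum2}, so no recomputation is needed.

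I do not anticipate a substantive obstacle, since the genuine combinatorial content---in particular the passage from marked to unmarked ordinary leaves, which produces the factor $\tfrac{1}{k!}$ through the orbit--stabilizer argument \eqref{eq3-QSC-graph}--\eqref{eq4-QSC-graph}---has already been absorbed into the preceding Theorem. The only points deserving care are purely organizational: confirming that the genus recorded on a graph $\widetilde\Gamma\in\widetilde\Gamma_{g,k}$ coincides with the index $g$ of the sum $2g-1+k=n$, so that the reindexing over the disjoint union is unambiguous, and verifying that every pair $(g,k)$ with $2g-1+k=n$ is counted exactly once. Granting these, the identity asserted in the Corollary follows at once.
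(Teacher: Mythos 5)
Your proposal is correct and matches the paper's (implicit) reasoning: the Corollary is obtained simply by substituting the graph-sum formula \eqref{QSC-graphsum2} into the definition of $S_n(z)$ in \eqref{eqn:GS} and collecting the terms over the disjoint union $\bigsqcup_{2g-1+k=n}\widetilde\Gamma_{g,k}$, exactly as you describe. Your check that $2g-1+k=n$ with $n\geq 2$ forces $2g-2+k>0$ (so the Theorem applies to every summand) is the only point requiring verification, and you handle it correctly; note also that the hypothesis ``$g\geq 2$'' in the Corollary's statement should read ``$n\geq 2$'', as your argument implicitly assumes.
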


\subsection{Recursion relations for $S_n(z)$}

In last subsection we have represented the function $S_n(z)$ ($n\geq 2$)
as a summation over decorated stable graphs
(without markings $z_1,\cdots, z_n$ on ordinary leaves).
Therefore it is natural to understand this summation as a particular realization
of the abstract quantum field theory.

Here the stable graphs we need are those obtained from the connected stable graphs
with labels $\{1,2,\cdots, N\}$ on half-edges (in the sense of \S \ref{sec3})
by adding some dilaton leaves,
and assigning a height $k:H(\widetilde\Gamma)\to\bZ_{\geq 0}$.
From \eqref{QSC-graphsum2} we know that
the weight of such a stable graph $\Gamma$ is
\ben
\omega_{\widetilde\Gamma}&:=&\prod_{v\in V(\widetilde\Gamma)}
\biggl(-2h_1^{i(v)}\biggr)^{\chi_{g(v),\val(v)}}
\langle\prod_{h\in H(v)}\tau_{k(h)}\rangle_{g(v),\val(v)}\\
&&\cdot\prod_{e\in E(\widetilde\Gamma)}\check{B}_{k(h_1(e)),k(h_2(e))}^{i(v_1(e)),i(v_2(e))}
\prod_{l\in L^*(\widetilde\Gamma)}\int^z\biggl(\sum_{j=1}^N d\xi_{k(l)}^{i(l)}(z,j)\biggr)
\prod_{\lambda\in L^{\bullet}(\widetilde\Gamma)}\check{h}_{k(\lambda)}^{i(\lambda)},
\een
if for every vertex $v\in V(\widetilde\Gamma)$,
the half-edges incident at $v$ have the same index $i(v)$.
If not,
we assign $\omega_{\widetilde\Gamma}:=0$ (by the definition of decorated stable graphs).

Denote by $\widetilde\cG_{g,n}^c(N)$ all such stable graphs of type $(g,n)$,
and $\widetilde\cG_{g;l_1,\cdots,l_N}^c(N)\subset\widetilde\cG_{g,n}^c(N)$ the subset
consisting of all such graphs of genus $g$, with $l_j$ external edges
labelled by $j$ for every $j\in\{1,\cdots,N\}$
(we have $l_1+\cdots+\l_N=n$).
The $n$-point function of genus $g$ in this case is
\be
\widetilde F_{g;l_1,\cdots,l_N}(z):=
\frac{1}{n!}\int^z\cdots\int^z \omega_{g,n}^{\vec{i}}(z_1,\cdots,z_n)
=\sum_{\widetilde\Gamma\in\widetilde\cG_{g;l_1,\cdots,l_N}^c(N)}
\frac{1}{|\Aut(\widetilde\Gamma)|}\omega_{\widetilde\Gamma},
\ee
where $\vec{i}=(i_0,\cdots, i_{n-1})$ is a sequence such that
$j$ appears exactly $l_j$ times for every $j\in \{1,2,\cdots, N\}$.
We also define
\be
\widetilde F_{g,n}(z):=
\frac{1}{n!}\int^z\cdots\int^z \omega_{g,n}(z_1,\cdots,z_n)
=\sum_{\widetilde\Gamma\in\widetilde\cG_{g,n}^c}
\frac{1}{|\Aut(\widetilde\Gamma)|}\omega_{\widetilde\Gamma}
\ee
for every $2g-2+n$ and $n>0$.
Then clearly
\be
\widetilde F_{g,n}(z)=\sum_{(l_1,\cdots,l_N)}\widetilde F_{g;l_1,\cdots,l_N}(z),
\ee
and the functions $S_n(z)$ are given by
\be\label{eq-reln-S-F}
S_n(z)=\sum_{2g-1+k=n}\widetilde F_{g,k}(z).
\ee

Notice that the recursion relation for abstract $n$-point functions in Theorem \ref{thm-N-rec-npt}
can be naturally generalized to the case of stable graphs in this case,
therefore we obtain the following recursion relation for $\widetilde F_{g;l_1,\cdots,l_N}(z)$:

\begin{Theorem}
For $2g-2+\sum l_j>0$, we have
\begin{equation*}
\begin{split}
&\widetilde K_{ij}\widetilde F_{g;l_1,\cdots,l_N}
=\widetilde D_i\widetilde D_j\widetilde F_{g-1;l_1,\cdots,l_N}+
\sum_{\substack{g_1+g_2=g\\p_k+q_k=l_k}}
\widetilde D_i\widetilde F_{g_1;p_1,\cdots,p_N}
\widetilde D_j\widetilde F_{g_2;q_1,\cdots,q_N},
\quad i\not=j;\\
&\widetilde K_{ii}\widetilde F_{g;l_1,\cdots,l_N}=
\half\biggl(
\widetilde D_i\widetilde D_i\widetilde F_{g-1;l_1,\cdots,l_N}+
\sum_{\substack{g_1+g_2=g\\p_k+q_k=l_k}}
\widetilde D_i\widetilde F_{g_1;p_1,\cdots,p_N}
\widetilde D_i\widetilde F_{g_2;q_1,\cdots,q_N}\biggr),
\end{split}
\end{equation*}
where $\widetilde K_{ij}$ and $\widetilde D_{i}$
are realizations of the edge-cutting operators $K_{ij}$
and the edge-adding operators $\cD_i$ (see \S \ref{sec:N}) respectively.
\end{Theorem}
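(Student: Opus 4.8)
The plan is to deduce the stated recursion by transporting the purely combinatorial recursion of Theorem \ref{thm-N-rec-npt} through the Feynman realization $\widetilde\Gamma \mapsto \omega_{\widetilde\Gamma}$, exactly as the recursions for $\wF_g$ in Section \ref{sec4} were obtained from the abstract recursions of Section \ref{sec2} and Section \ref{sec3}. The content is thus split into a combinatorial step (upgrading the abstract recursion to the present decorated graphs) and a realization step (checking that the Feynman rule intertwines the graph operations with the operators $\widetilde K_{ij}$ and $\widetilde D_i$).

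First I would upgrade Theorem \ref{thm-N-rec-npt} to the decorated setting. The stable graphs relevant here differ from those of Section \ref{sec:N} only by extra data carried on the half-edges: a height function $k\colon H(\widetilde\Gamma)\to\bZ_{\geq 0}$ together with a distinction between ordinary and dilaton leaves. Since the proofs of Theorem \ref{thm1}, Theorem \ref{N-graph} and Lemma \ref{lem-N-rec-npt} are entirely automorphism-counting arguments — partitioning the choices of a pair of external edges (or of an edge to cut) into the fibers $H_{\widetilde\Gamma}$ and comparing $|H_{\widetilde\Gamma}|/|\Aut(\Gamma)|$ with $N(\widetilde\Gamma)/|\Aut(\widetilde\Gamma)|$ — these arguments carry over verbatim once the heights and leaf-types are regarded as additional labels transported along by the cutting and gluing. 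The only bookkeeping is that cutting an internal $(i,j)$-edge of height data $(k_1,k_2)$ produces two ordinary leaves of heights $k_1,k_2$, and that the edge-adding operator $\cD_i$ must respect the requirement that a dilaton leaf carry index greater than one; both properties are preserved by the operations. This yields the decorated form of Theorem \ref{thm-N-rec-npt}, an identity in the $\bQ$-span of decorated stable graphs.

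Next I would make the realizations $\widetilde K_{ij}$ and $\widetilde D_i$ precise as the operators induced on weights by the combinatorial operators, i.e. defined on the image of the realization map by
\[
\widetilde K_{ij}\,\omega_{\widetilde\Gamma}:=\omega_{K_{ij}\widetilde\Gamma},
\qquad
\widetilde D_i\,\omega_{\widetilde\Gamma}:=\omega_{\cD_i\widetilde\Gamma},
\]
extended $\bQ$-linearly. Here one checks that the Feynman weight is compatible with the graph operations: cutting an $(i,j)$-edge replaces the factor $\check{B}^{i,j}_{k_1,k_2}$ attached to that edge by the two ordinary-leaf factors $\int^z(\sum_{m} d\xi^{i}_{k_1}(z,m))$ and $\int^z(\sum_{m} d\xi^{j}_{k_2}(z,m))$ of the appropriate heights, while all remaining vertex, edge and leaf factors are untouched; and the insertion implemented by $\cD_i=\partial_i+\gamma_i$ produces the corresponding new leaf factor with index $i$. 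With these definitions the realization map intertwines the abstract and realized operators by construction, so applying it to the decorated identity of the previous step and summing against the weights $1/|\Aut(\widetilde\Gamma)|$, so that the two sides assemble into the functions $\widetilde F_{g;l_1,\dots,l_N}(z)$, produces precisely the two displayed recursions.

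I expect the main obstacle to be the verification in the realization step that the Feynman rule is genuinely functorial for the decorated operations — that is, that $\widetilde K_{ij}$ and $\widetilde D_i$ are well defined as operators on the realized functions rather than only on individual graph weights. Concretely, one must confirm that the height-dependent data $\check{B}$, $d\xi$, $\check{h}$ and the vertex factors $(-2h_1^{i(v)})^{\chi_{g(v),\val(v)}}\langle\prod_{h}\tau_{k(h)}\rangle_{g(v),\val(v)}$ recombine consistently when an edge is cut or a leaf is inserted; this is the analogue here of the propagator compatibility condition \eqref{N-constraint} used in the $N$-dimensional realization of Section \ref{sec4}. Once this compatibility is in place, the passage from the combinatorial identity to the analytic recursion for $\widetilde F_{g;l_1,\dots,l_N}(z)$ is formal.
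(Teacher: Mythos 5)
Your proposal is correct and follows essentially the same route as the paper, which simply asserts that the abstract recursion of Theorem \ref{thm-N-rec-npt} "can be naturally generalized" to these decorated stable graphs and then transported through the weight assignment $\widetilde\Gamma\mapsto\omega_{\widetilde\Gamma}$; your two steps (extend the automorphism-counting argument to graphs carrying heights and dilaton leaves, then define $\widetilde K_{ij}$ and $\widetilde D_i$ by $\omega_{\widetilde\Gamma}\mapsto\omega_{K_{ij}\widetilde\Gamma}$, $\omega_{\widetilde\Gamma}\mapsto\omega_{\cD_i\widetilde\Gamma}$) are exactly what the paper leaves implicit. The "main obstacle" you flag is not actually needed for the theorem as stated, since the paper itself defines the realized operators only graph-by-graph and defers the question of expressing them as explicit differential operators to the examples and to future work.
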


Now let us take summation over all sequences $(l_1,\cdots,l_N)$.
We get:

\begin{Corollary}
For evry $2g-2+n>0$, we have
\begin{equation*}
\begin{split}
&\widetilde K_{ij}\widetilde F_{g,n}
=\widetilde D_i\widetilde D_j\widetilde F_{g-1,n}+
\sum_{\substack{g_1+g_2=g\\n_1+n_2=n}}
\widetilde D_i\widetilde F_{g_1,n_1}
\widetilde D_j\widetilde F_{g_2,n_2},
\quad i\not=j;\\
&\widetilde K_{ii}\widetilde F_{g,n}=
\half\biggl(
\widetilde D_i\widetilde D_i\widetilde F_{g-1,n}+
\sum_{\substack{g_1+g_2=g\\n_1+n_2=n}}
\widetilde D_i\widetilde F_{g_1,n_1}
\widetilde D_i\widetilde F_{g_2,n_2}\biggr).
\end{split}
\end{equation*}
\end{Corollary}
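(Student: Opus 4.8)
The plan is to derive the Corollary directly from the preceding Theorem by summing both recursion relations over all multi-indices $(l_1,\dots,l_N)$ subject to $l_1+\cdots+l_N=n$, and then invoking the defining identity $\widetilde F_{g,n}=\sum_{(l_1,\cdots,l_N)}\widetilde F_{g;l_1,\cdots,l_N}$. Since the constraint $\sum_j l_j=n$ turns the hypothesis $2g-2+\sum_j l_j>0$ of the Theorem into the hypothesis $2g-2+n>0$ of the Corollary, the Theorem applies to every summand.

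First I would dispose of the two linear terms. As $\widetilde K_{ij}$ is a linear operator, summing $\widetilde K_{ij}\widetilde F_{g;l_1,\cdots,l_N}$ over $\sum_j l_j=n$ and interchanging the summation with $\widetilde K_{ij}$ yields $\widetilde K_{ij}\widetilde F_{g,n}$, the left-hand side of the Corollary. The same linearity applied to $\widetilde D_i\widetilde D_j$ collapses the sum of the terms $\widetilde D_i\widetilde D_j\widetilde F_{g-1;l_1,\cdots,l_N}$ to $\widetilde D_i\widetilde D_j\widetilde F_{g-1,n}$. At this point one must use the unstable-case conventions inherited from Theorem \ref{thm-N-rec-npt}; since those conventions are defined label-by-label, they sum consistently to the corresponding conventions for $\widetilde F_{g-1,n}$.

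The only genuine bookkeeping is in the quadratic term. Here I would reindex the iterated sum: running first over $(l_1,\dots,l_N)$ with $\sum_k l_k=n$ and then over all splittings $p_k+q_k=l_k$ is a bijection onto the set of pairs of multi-indices $(p_1,\dots,p_N)$, $(q_1,\dots,q_N)$ with $\sum_k p_k+\sum_k q_k=n$. Writing $n_1=\sum_k p_k$ and $n_2=\sum_k q_k$ regroups this as a sum over $n_1+n_2=n$ with $(p)$ and $(q)$ ranging independently. Since multiplication is bilinear and $\widetilde D_i$, $\widetilde D_j$ are linear, the inner double sum factors as
\[
\Bigl(\sum_{\sum_k p_k=n_1}\widetilde D_i\widetilde F_{g_1;p_1,\cdots,p_N}\Bigr)
\Bigl(\sum_{\sum_k q_k=n_2}\widetilde D_j\widetilde F_{g_2;q_1,\cdots,q_N}\Bigr)
=\widetilde D_i\widetilde F_{g_1,n_1}\,\widetilde D_j\widetilde F_{g_2,n_2},
\]
again by the definition of $\widetilde F_{g_i,n_i}$. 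Combining with the sum over $g_1+g_2=g$ produces precisely $\sum_{g_1+g_2=g,\,n_1+n_2=n}\widetilde D_i\widetilde F_{g_1,n_1}\widetilde D_j\widetilde F_{g_2,n_2}$, and the prefactor $\half$ occurring in the $i=j$ relation is untouched by the reindexing.

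I expect the main obstacle to be organizational rather than conceptual, namely confirming that the reindexing above is a genuine bijection and that the unstable-case conventions carried over from Theorem \ref{thm-N-rec-npt} match the label-free conventions for $\widetilde F_{g,n}$ after summation. Once these are in place, matching the three groups of terms on both sides completes the proof.
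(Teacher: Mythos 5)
Your proposal is correct and follows exactly the paper's route: the paper obtains the Corollary by summing the recursion of the preceding Theorem over all sequences $(l_1,\cdots,l_N)$ with $\sum_j l_j = n$ and using $\widetilde F_{g,n}=\sum_{(l_1,\cdots,l_N)}\widetilde F_{g;l_1,\cdots,l_N}$. Your reindexing of the quadratic term and the check of the unstable-case conventions supply the bookkeeping the paper leaves implicit.
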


The above recursion relations for $\widetilde F_{g,n}$ give us
a way to compute $S_n(z)$ recursively.
According to \cite{gs}, the functions $S_n(z)$ are supposed to satisfy some  recursion relations,
and one may expect that the Schr\"odinger equation \eqref{eq-Schrodinger}
for this quantum spectral curve
can be obtained from these recursion relations,
i.e., the operator
$\widehat A=\widehat A_0+\hbar\widehat A_1+\cdots$
can be solved from these relations.

In the next section,
we will present two examples where the quadratic recursion for $S_n(z)$
can be written down explicitly
(see \S \ref{sec-airy}, \S \ref{sec-catalan}).
In these two examples,
the partition functions are known to be some one-dimensional formal Gaussian integrals,
thus our formalism of the abstract QFT and its realization can be applied
directly to this formal Gaussian integral to derive these quadratic recursions.
The edge-cutting and edge-adding operators are realized by some differential operators.
We will see that these recursions are indeed equivalent to the Schr\"odinger equations.

However,
in general it is not easy to find explicit expressions
for these quadratic recursion relations.
For a given E-O topological recursion,
the realization of the edge-cutting and edge-adding operators operators
can be complicated,
and the recursions for $S_n(z)$ may be difficult to write down explicitly.
We hope to return to this problem in the future.

\section{Application to Topological 1D Gravity and Quantum Spectral Curves}\label{sec7}

In this section we present another class of applications
of our formalism.
We will apply it to topological 1D gravity \cite{zhou2}.
The propagator in this case is holomorphic.
By making particular choices of coupling constants,
this gives us examples concerning the quantum spectral curves \cite{gs}.

\subsection{Some results in topological 1D gravity}

In \cite{zhou2}, the second author defined the partition function of the
topological 1D gravity as a formal Gaussian integral:
\be\label{1d-partition-t}
Z^{1D}=\frac{1}{(2\pi\lambda^2)^\frac{1}{2}}\int dx
\exp\big[\frac{1}{\lambda^2}S(x)\big],
\ee
where the action $S(x)$ is
\be\label{1d-action-t}
S(x)=-\frac{1}{2}x^2+\sum_{n\geq 1}t_{n-1}\frac{x^n}{n!}.
\ee
The following explicit formula is given in \cite[Prop. 4.5]{zhou2}:
\be \label{eqn:Formula-for-Z}
Z = \sum_{n \geq 0} \sum_{\sum_{j=1}^k m_j j =2n}
\frac{(2n-1)!!}{\prod_{j=1}^k (j!)^{m_j} m_j!}
\lambda^{2n-2\sum_{j=1}^k m_j} \cdot \prod_{j=1}^k t_{j-1}^{m_j}.
\ee
Using Wick's theorem,
it is also straightforward to write down the Feynman graphs and the Feynman rules£º
\be \label{eqn:Feynman-Z}
Z = \sum_{\Gamma \in \cG} \frac{1}{|\Aut(\Gamma)|} \prod_{v\in V(\Gamma)} \lambda^{\val(v)-2} t_{\val(v)-1},
\ee
where the sum is taken over the set $\cG$ all possible graphs,
with the following Feynman rules:
\bea
&& w(v) = \lambda^{\val(v)-2} t_{\val(v)-1}, \\
&& w(e) = 1.
\eea

One of the results obtained in \cite{zhou2} is that after a natural change of variables,
the free energy functions $F_g$ have nice simple expressions.
This coordinate change can be obtained by the idea of renormalization.
Alternatively,
one can formally apply the stationary phase method and
find the critical point of $S(x)$ and consider the Taylor expansion of $S(x)$
at its critical point.
This gives us the following change of coordinates:
\be\label{1d-coord1}
\begin{split}
&I_0=\sum_{k=1}^{\infty}\frac{1}{k}\sum_{p_1+\cdots+p_k=k-1}\frac{t_{p_1}}{p_1!}\cdots\frac{t_{p_k}}{p_k!},\\
&I_k=\sum_{n\geq 0}t_{n+k}\frac{I_0^n}{n!},\quad k\geq 1,
\end{split}
\ee
and
\be\label{1d-coord12}
t_k=\sum_{n=0}^{\infty}\frac{(-1)^n I_0^n}{n!}I_{n+k},
\ee
where $I_0$ is the critical point of $S(x)$, and $I_n-\delta_{n,1}$ ($n\geq 1$) are
the Taylor coefficients at $x=I_0$.
Using the new coordinate $\{I_k\}_{k\geq 0}$,
the action \eqref{1d-action-t} can be rewritten as
\be\label{1d-action-I}
S(x)=\sum_{k=0}^\infty \frac{(-1)^k}{(k+1)!}(I_k+\delta_{k,1})I_0^{k+1}+
\sum_{n=2}^{\infty}\frac{(I_{n-1}-\delta_{n,2})}{n!}(x-I_0)^n.
\ee
Then the partition function \eqref{1d-partition-t} can be rewritten in terms
of another formal Gaussian integral:
\be
\begin{split}
Z^{1D}=&\exp\biggl[\frac{1}{\lambda^2}\sum_{k=0}^{\infty}\frac{(-1)^k}{(k+1)!}
(I_k+\delta_{k,1})I_0^{k+1}+\frac{1}{2}\log\frac{1}{1-I_1}\biggr]\\
&\cdot\frac{1}{(\pi\lambda^2)^\frac{1}{2}}\int dx
\exp\biggl[\frac{1}{\lambda^2}(-\frac{1}{2}x^2+
\sum_{n\geq 3}\frac{I_{n-1}}{(1-I_1)^\frac{n}{2}}\cdot\frac{x^n}{n!})\biggr].
\end{split}
\ee
This is interpreted in \cite{zhou2} as giving a mean field theory of the topological 1D gravity.
Using this new equivalent theory,
the free energy $F^{1D}=\sum_{g\geq 0}\lambda^{2g-2}F_g^{1D}=\log(Z^{1D})$ of this theory
can be written down as follows:
\be
\begin{split}
&F_0^{1D}=\sum_{k=0}^{\infty}\frac{(-1)^k}{(k+1)!}(I_k+\delta_{k,1})I_0^{k+1},\\
&F_1^{1D}=\frac{1}{2}\log\frac{1}{1-I_1},\\
&F_g^{1D}=\sum_{\sum_{j=2}^{2g-1}\frac{j-1}{2}l_j=g-1}
\langle \tau_2^{l_2}\cdots\tau_{2g-1}^{l_{2g-1}}\rangle_g
\prod_{j=2}^{2g-1}\frac{1}{l_j!}
\biggl(\frac{I_j}{(1-I_1)^{(j+1)/2}}\biggr)^{l_j},g\geq2.
\end{split}
\ee

By rewriting the theory in the new coordinates $\{I_k\}$,
the free energy $F^{1D}$ has a different Feynman diagram expansion \cite[(130)]{zhou2}:
\be \label{eqn:F-1D}
F_g^{1D}=\sum_{\Gamma\in\cG_{g,0}^{c,0}}\frac{1}{|\Aut(\Gamma)|}\prod_{v\in V(\Gamma)}I_{\val(v)-1}\cdot\prod_{e\in E(\Gamma)}\frac{1}{1-I_1},
\ee
for $g\geq 2$,
where $\cG_{g,0}^{c,0}$ is the subset of $\cG_{g,0}^c$ consists of
{\em stable} graphs whose vertices are all of genus $0$.

\subsection{Topological 1D gravity as realization of the diagrammatics of stable graphs}

\label{sec:1D-Rec}

Now we see that the Feynman rules \eqref{eqn:F-1D}
is indeed a special case of the realization of diagrammatics of stable graphs
described in Section \ref{sec4}.
I.e.,
$F_g^{1D}= \wF_g$ ($g \geq 2$) for suitable $\kappa$ and a suitable sequence
$F_g(X)$ for $g \geq 0$.
By comparing \eqref{eqn:F-1D} with the Feynman rules \eqref{omega},
it is clear that for topological 1D gravity,
the propagator is taken to be
\be \label{kappa-I1}
\kappa=\frac{1}{1-I_1}.
\ee
Since in \eqref{eqn:F-1D},
the summation is taken over stable graphs with only genus zero vertices,
so we have $F_g(X) = 0$ for $g \geq 1$,
and
\be \label{I-n-1}
F_0^{(\val(v))}(X)= I_{\val(v)-1},
\ee
where $\val(v)\geq 3$.
In other words,
we need a function $F_0(X)$ such that
\be \label{F-I-n-1}
\frac{d^n}{dX^n} F_0(X) = I_{n-1}
\ee
for $n \geq 3$.
Since $I_{n-1}$ involves an infinite set of indeterminates $t_0, t_1, \dots$,
it is not clear how to choose one of them or a combination of them to be $X$.
We will take a different approach  as follows.
Regard $\{I_k\}_{k\geq 0}$ as some parameters, and $X$ as a new independent variable.
Now we define a function $F_0(X)$ by
\be \label{eqn:F0T}
F_0(X)=\sum_{k=0}^\infty \frac{(-1)^k}{(k+1)!}(I_k+\delta_{k,1})I_0^{k+1}+
\sum_{n=2}^{\infty}\frac{(I_{n-1}-\delta_{n,2})}{n!}(X-I_0)^n.
\ee
Write $\tilde X=X-I_0$, then
\be
F_0(I_0+\tilde X)=\sum_{k=0}^\infty \frac{(-1)^k}{(k+1)!}(I_k+\delta_{k,1})I_0^{k+1}+
\sum_{n=2}^{\infty}\frac{(I_{n-1}-\delta_{n,2})}{n!}\tilde X^n.
\ee
This is equivalent to say
\be\label{1d-evaluation}
\begin{split}
&F_0(I_0)=\sum_{k=0}^\infty \frac{(-1)^k}{(k+1)!}(I_k+\delta_{k,1})I_0^{k+1},\\
&\frac{d F_0(X)}{dX}\biggl|_{X=I_0}=0,\\
&\frac{d^n F_0(X)}{dX^n}\biggl|_{X=I_0}=I_{n-1}-\delta_{n,2},\quad n\geq 2.
\end{split}
\ee

Now using the formalism in Section \ref{sec4}, we can construct a sequence
$\wF_g(X)$ from $F(X)=\lambda^{-2}\cdot F_0(X)$,
and the propagator
\be\label{1d-kappa}
\kappa=-\frac{1}{F_0''(X)}
\ee
which clearly satisfies the condition \eqref{constraint}.
Hence \eqref{kappa-I1}, \eqref{I-n-1}and \eqref{F-I-n-1} hold when $X = I_0$.
For $g\geq 2$ we have
\be
\wF_g(X)=\sum_{\Gamma\in\cG_{g,0}^{c,0}}\frac{1}{|\Aut(\Gamma)|}
\biggl(\prod_{v\in V(\Gamma)}F_0^{(\val(v))}(X)\biggr)\cdot\kappa^{|E(\Gamma)|},
\ee
Then evaluate this expression at $X=I_0$ using \eqref{1d-evaluation} and \eqref{1d-kappa}, we get
\be
\begin{split}
\wF_g(I_0)=&\sum_{\Gamma\in\cG_{g,0}^{c,0}}\frac{1}{|\Aut(\Gamma)|}
\biggl(\prod_{v\in V(\Gamma)}F_0^{(\val(v))}(X)\biggl|_{X=I_0}\biggr)
\cdot(-\frac{1}{F_0''(X)|_{X=I_0}})^{|E(\Gamma)|}\\
=&\sum_{\Gamma\in\cG_{g,0}^{c,0}}\frac{1}{|\Aut(\Gamma)|}
\biggl(\prod_{v\in V(\Gamma)}I_{\val(v)-1}\biggr)\cdot\biggl(\frac{1}{1-I_1}\biggr)^{|E(\Gamma)|},
\end{split}
\ee
this is exactly the Feynman rule \eqref{eqn:F-1D}, i.e.,
we have proved the following

\begin{Theorem} \label{thm:F-1D}
The free energy of the topological 1D gravity
can be realized by the abstract free energy:
\be
F_g^{1D}(I_0,I_1,I_2,\cdots)=\wF_g(X)|_{X=I_0}
\ee
for $g\geq 2$.
Here $F=\lambda^{-2} F_0$ and $\kappa$ are given by \eqref{eqn:F0T} and \eqref{1d-kappa}
respectively.
\end{Theorem}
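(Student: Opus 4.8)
The plan is to verify directly that the general realization developed in Section \ref{sec4}, when specialized to the input data $F=\lambda^{-2}F_0$ with $F_0$ defined by \eqref{eqn:F0T} together with the propagator \eqref{1d-kappa}, reduces term by term to the Feynman expansion \eqref{eqn:F-1D} of the topological 1D gravity free energy once we set $X=I_0$. The whole argument is thus a matching of Feynman weights: no new combinatorics of graphs is needed beyond what is already encoded in the definition \eqref{free-expression} and the weight rule \eqref{omega}, since the two sums range over the same index set of stable graphs.

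First I would establish the evaluation formulas \eqref{1d-evaluation}. Since the first sum in \eqref{eqn:F0T} is independent of $X$ and the second is literally the Taylor expansion of $F_0$ about $X=I_0$, differentiating $n$ times and setting $X=I_0$ annihilates every term except the one of order $n$, yielding $F_0'(I_0)=0$, $F_0''(I_0)=I_1-1$, and $F_0^{(n)}(I_0)=I_{n-1}$ for $n\geq 3$. In particular $\kappa|_{X=I_0}=-1/F_0''(I_0)=1/(1-I_1)$, matching \eqref{kappa-I1}. Next I would check that $\kappa=-1/F_0''$ satisfies the propagator condition \eqref{constraint}: a one-line computation gives $\partial_X\kappa = F_0'''/(F_0'')^2 = \kappa^2 F_0'''$, which confirms that this specialization is a genuine realization of the abstract QFT, so that the recursion of Theorem \ref{1-recursion} also transfers; the free-energy identity itself, however, follows directly from the definition \eqref{free-expression}. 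Because we have chosen $F_g\equiv 0$ for $g\geq 1$, the only vertices carrying nonzero weight are those of genus $0$, so the sum \eqref{free-expression} collapses to the subset $\cG_{g,0}^{c,0}\subset\cG_{g,0}^c$ of stable graphs all of whose vertices have genus zero.

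It then remains to evaluate the surviving expression at $X=I_0$. For a graph $\Gamma\in\cG_{g,0}^{c,0}$ every vertex has valence $\geq 3$ by stability, so each vertex weight $F_0^{(\val(v))}(I_0)$ equals $I_{\val(v)-1}$ by the $n\geq 3$ case of \eqref{1d-evaluation}, while each of the $|E(\Gamma)|$ edges contributes $\kappa|_{X=I_0}=1/(1-I_1)$; substituting into \eqref{free-expression} reproduces exactly \eqref{eqn:F-1D}. The only point requiring care is the conceptual one flagged before the statement: the $I_k$ must be treated as independent formal parameters and $X$ as a genuinely independent variable, so that $F_0(X)$ is a well-defined function whose derivatives at the critical point $X=I_0$ produce the $I_{n-1}$. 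I would emphasize that stability forbids valence-$1$ and valence-$2$ genus-$0$ vertices, so the exceptional values $F_0'(I_0)=0$ and $F_0''(I_0)=I_1-1$ never appear as vertex contributions and only the propagator ever uses $F_0''$. This matching exhausts the content of the theorem, so I expect no serious obstacle beyond careful bookkeeping of the derivative values at the critical point.
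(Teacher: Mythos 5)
Your proposal is correct and follows essentially the same route as the paper: both establish the evaluation formulas \eqref{1d-evaluation} from the Taylor-expansion form of \eqref{eqn:F0T}, verify that $\kappa=-1/F_0''$ satisfies \eqref{constraint} and restricts to $1/(1-I_1)$ at $X=I_0$, observe that taking $F_g=0$ for $g\geq 1$ collapses \eqref{free-expression} to the sum over $\cG_{g,0}^{c,0}$, and match the vertex and edge weights against \eqref{eqn:F-1D}. Your additional remark that stability excludes valence-one and valence-two genus-zero vertices (so the exceptional values $F_0'(I_0)=0$ and $F_0''(I_0)=I_1-1$ never enter as vertex weights) is a worthwhile clarification that the paper leaves implicit.
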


\begin{Example}
$F_2^{1D}$ and $F_3^{1D}$ are explicitly given by:
\begin{flalign*}
\begin{tikzpicture}
\node [align=center,align=center] at (0,0) {$F_2^{1D}=\frac{1}{12}$};
\draw (1,0) circle [radius=0.05];
\draw (2,0) circle [radius=0.05];
\draw (1.05,0)--(1.95,0);
\draw (1.03,0.03) .. controls (1.35,0.3) and (1.65,0.3) ..  (1.97,0.03);
\draw (1.03,-0.03) .. controls (1.35,-0.3) and (1.65,-0.3) ..  (1.97,-0.03);
\node [align=center,align=center] at (2.6,0) {$+\frac{1}{8}$};
\draw (3.5,0) circle [radius=0.05];
\draw (4,0) circle [radius=0.05];
\draw (3.55,0)--(3.95,0);
\draw (3.47,0.03) .. controls (2.9,0.5) and (2.9,-0.5) ..  (3.47,-0.03);
\draw (4.03,0.03) .. controls (4.6,0.5) and (4.6,-0.5) ..  (4.03,-0.03);
\node [align=center,align=center] at (4.9,0) {$+\frac{1}{8}$};
\draw (5.8,0) circle [radius=0.05];
\draw (5.77,0.03) .. controls (5.2,0.5) and (5.2,-0.5) ..  (5.77,-0.03);
\draw (5.83,0.03) .. controls (6.4,0.5) and (6.4,-0.5) ..  (5.83,-0.03);
\end{tikzpicture},&&
\end{flalign*}

\begin{flalign*}
\begin{split}
&\begin{tikzpicture}
\node [align=center,align=center] at (-0.4,0) {$F_3^{1D}=\frac{1}{48}$};
\draw (1,0) circle [radius=0.05];
\draw (0.96,0.02) .. controls (0.4,0.3) and (0.6,-0.6) ..  (0.96,-0.02);
\draw (1.04,0.02) .. controls (1.6,0.3) and (1.4,-0.6) ..  (1.04,-0.02);
\draw (0.97,0.03) .. controls (0.5,0.6) and (1.5,0.6) ..  (1.03,0.03);
\node [align=center,align=center] at (2,0) {$+\frac{1}{48}$};
\draw (2.6,0) circle [radius=0.05];
\draw (3.8,0) circle [radius=0.05];
\draw (2.64,0.02) .. controls (3,0.1) and (3.4,0.1) ..  (3.76,0.02);
\draw (2.64,-0.02) .. controls (3,-0.1) and (3.4,-0.1) ..  (3.76,-0.02);
\draw (2.63,0.03) .. controls (3,0.35) and (3.4,0.35) ..  (3.77,0.03);
\draw (2.63,-0.03) .. controls (3,-0.35) and (3.4,-0.35) ..  (3.77,-0.03);
\node [align=center,align=center] at (4.4,0) {$+\frac{1}{16}$};
\draw (5.3,0) circle [radius=0.05];
\draw (6,0) circle [radius=0.05];
\draw (5.27,0.03) .. controls (4.7,0.5) and (4.7,-0.5) ..  (5.27,-0.03);
\draw (6.03,0.03) .. controls (6.6,0.5) and (6.6,-0.5) ..  (6.03,-0.03);
\draw (5.32,0.04)--(5.98,0.04);
\draw (5.32,-0.04)--(5.98,-0.04);
\node [align=center,align=center] at (7,0) {$+\frac{1}{16}$};
\draw (7.9,0) circle [radius=0.05];
\draw (8.5,0) circle [radius=0.05];
\draw (7.95,0)--(8.45,0);
\draw (8.53,0.03) .. controls (9.1,0.5) and (9.1,-0.5) ..  (8.53,-0.03);
\draw (7.87,0.03) .. controls (7.55,0.8) and (7.2,0) ..  (7.87,0);
\draw (7.87,-0.03) .. controls (7.55,-0.8) and (7.2,0) ..  (7.87,0);
\end{tikzpicture}
\\
&\qquad\quad\begin{tikzpicture}
\node [align=center,align=center] at (0,0) {$+\frac{1}{12}$};
\draw (1,0) circle [radius=0.05];
\draw (1.8,0) circle [radius=0.05];
\draw (0.97,0.03) .. controls (0.4,0.5) and (0.4,-0.5) ..  (0.97,-0.03);
\draw (1.05,0)--(1.75,0);
\draw (1.03,0.03) .. controls (1.3,0.2) and (1.5,0.2) ..  (1.77,0.03);
\draw (1.03,-0.03) .. controls (1.3,-0.2) and (1.5,-0.2) ..  (1.77,-0.03);
\node [align=center,align=center] at (2.3,0) {$+\frac{1}{8}$};
\draw (3.2,0) circle [radius=0.05];
\draw (3.9,0) circle [radius=0.05];
\draw (4.6,0) circle [radius=0.05];
\draw (3.25,0)--(3.85,0);
\draw (3.92,0.04)--(4.58,0.04);
\draw (3.92,-0.04)--(4.58,-0.04);
\draw (3.17,0.03) .. controls (2.6,0.5) and (2.6,-0.5) ..  (3.17,-0.03);
\draw (4.63,0.03) .. controls (5.2,0.5) and (5.2,-0.5) ..  (4.63,-0.03);
\node [align=center,align=center] at (5.6,0) {$+\frac{1}{16}$};
\draw (6.5,0) circle [radius=0.05];
\draw (7.1,0) circle [radius=0.05];
\draw (7.7,0) circle [radius=0.05];
\draw (6.47,0.03) .. controls (5.9,0.5) and (5.9,-0.5) ..  (6.47,-0.03);
\draw (7.73,0.03) .. controls (8.3,0.5) and (8.3,-0.5) ..  (7.73,-0.03);
\draw (7.07,0.03) .. controls (6.6,0.6) and (7.6,0.6) ..  (7.13,0.03);
\draw (6.55,0)--(7.05,0);
\draw (7.15,0)--(7.65,0);
\end{tikzpicture}
\\
&\qquad\quad\begin{tikzpicture}
\node [align=center,align=center] at (0,0) {$+\frac{1}{12}$};
\draw (1,0) circle [radius=0.05];
\draw (1.5,0) circle [radius=0.05];
\draw (2.2,0) circle [radius=0.05];
\draw (1.05,0)--(1.45,0);
\draw (0.97,0.03) .. controls (0.4,0.5) and (0.4,-0.5) ..  (0.97,-0.03);
\draw (1.55,0)--(2.15,0);
\draw (1.53,0.03) .. controls (1.7,0.2) and (2,0.2) ..  (2.17,0.03);
\draw (1.53,-0.03) .. controls (1.7,-0.2) and (2,-0.2) ..  (2.17,-0.03);
\node [align=center,align=center] at (2.7,0) {$+\frac{1}{8}$};
\draw (3.2,-0.2) circle [radius=0.05];
\draw (4.2,-0.2) circle [radius=0.05];
\draw (3.7,0.3) circle [radius=0.05];
\draw (3.22,-0.16)--(4.18,-0.16);
\draw (3.22,-0.24)--(4.18,-0.24);
\draw (3.22,-0.16)--(3.67,0.27);
\draw (4.18,-0.16)--(3.73,0.27);
\draw (3.67,0.33) .. controls (3.2,0.9) and (4.2,0.9) ..  (3.73,0.33);
\node [align=center,align=center] at (4.7,0) {$+\frac{1}{8}$};
\draw (5.2,-0.2) circle [radius=0.05];
\draw (6.2,-0.2) circle [radius=0.05];
\draw (5.7,0.3) circle [radius=0.05];
\draw (5.25,-0.2)--(6.15,-0.2);
\draw (5.25,-0.2)--(5.7,0.25);
\draw (6.15,-0.2)--(5.7,0.25);
\draw (5.2,-0.15)--(5.65,0.3);
\draw (6.2,-0.15)--(5.75,0.3);
\node [align=center,align=center] at (6.8,0) {$+\frac{1}{48}$};
\draw (8.3,-0.1) circle [radius=0.05];
\draw (7.8,-0.1) circle [radius=0.05];
\draw (8.8,-0.1) circle [radius=0.05];
\draw (8.3,0.3) circle [radius=0.05];
\draw (8.27,0.33) .. controls (7.8,0.9) and (8.8,0.9) ..  (8.33,0.33);
\draw (7.85,-0.1)--(8.25,-0.1);
\draw (8.35,-0.1)--(8.75,-0.1);
\draw (8.3,-0.05)--(8.3,0.25);
\draw (7.77,-0.07) .. controls (7.2,0.4) and (7.2,-0.6) ..  (7.77,-0.13);
\draw (8.83,-0.07) .. controls (9.4,0.4) and (9.4,-0.6) ..  (8.83,-0.13);
\end{tikzpicture}
\\
&\qquad\quad\begin{tikzpicture}
\node [align=center,align=center] at (0,0) {$+\frac{1}{24}$};
\draw (1,0) circle [radius=0.05];
\draw (0.5,-0.3) circle [radius=0.05];
\draw (1.5,-0.3) circle [radius=0.05];
\draw (1,0.5) circle [radius=0.05];
\draw (0.55,-0.3)--(1.45,-0.3);
\draw (0.53,-0.27)--(0.97,-0.03);
\draw (1.47,-0.27)--(1.03,-0.03);
\draw (0.5,-0.25)--(0.97,0.47);
\draw (1.5,-0.25)--(1.03,0.47);
\draw (1,0.05)--(1,0.45);
\node [align=center,align=center] at (2.1,0) {$+\frac{1}{16}$};
\draw (2.6,-0.3) circle [radius=0.05];
\draw (2.6,0.3) circle [radius=0.05];
\draw (3.4,-0.3) circle [radius=0.05];
\draw (3.4,0.3) circle [radius=0.05];
\draw (2.6,-0.25)--(2.6,0.25);
\draw (3.4,-0.25)--(3.4,0.25);
\draw (2.62,0.34)--(3.38,0.34);
\draw (2.62,0.26)--(3.38,0.26);
\draw (2.62,-0.34)--(3.38,-0.34);
\draw (2.62,-0.26)--(3.38,-0.26);
\node [align=center,align=center] at (4,0) {$+\frac{1}{16}$};
\draw (4.9,0) circle [radius=0.05];
\draw (5.4,0) circle [radius=0.05];
\draw (5.9,0) circle [radius=0.05];
\draw (6.4,0) circle [radius=0.05];
\draw (4.87,0.03) .. controls (4.3,0.5) and (4.3,-0.5) ..  (4.87,-0.03);
\draw (6.43,0.03) .. controls (7,0.5) and (7,-0.5) ..  (6.43,-0.03);
\draw (4.95,0)--(5.35,0);
\draw (5.95,0)--(6.35,0);
\draw (5.42,0.04)--(5.88,0.04);
\draw (5.42,-0.04)--(5.88,-0.04);
\node [align=center,align=center] at (7.3,0) {$+\frac{1}{8}$};
\draw (8.2,0) circle [radius=0.05];
\draw (8.8,0) circle [radius=0.05];
\draw (9.3,0.3) circle [radius=0.05];
\draw (9.3,-0.3) circle [radius=0.05];
\draw (8.17,0.03) .. controls (7.6,0.5) and (7.6,-0.5) ..  (8.17,-0.03);
\draw (8.25,0)--(8.75,0);
\draw (8.83,0.03)--(9.25,0.3);
\draw (8.83,-0.03)--(9.25,-0.3);
\draw (9.26,0.28)--(9.26,-0.28);
\draw (9.34,0.28)--(9.34,-0.28);
\end{tikzpicture},
\end{split}&&
\end{flalign*}
or,
\be\label{1d-f2f3}
\begin{split}
&F_2^{1D}=\frac{5}{24}I_2^2\kappa^3+\frac{1}{8}I_3\kappa^2,\\
&F_3^{1D}=\frac{1}{48}I_5\kappa^3+(\frac{1}{12}I_3^2+
\frac{7}{48}I_2I_4)\kappa^4+\frac{25}{48}I_2^2I_3\kappa^5+
\frac{5}{16}I_2^4\kappa^6,
\end{split}
\ee
where $\kappa=\frac{1}{1-I_1}$.
\end{Example}

As a corollary to Theorem \ref{1-recursion} and  Theorem \ref{thm:F-1D},
we get a quadratic recursion relation for $F^{1D}$ as follows.
First we have
\be
\pd_\kappa \wF_g = \frac{1}{2}((\pd_{X}+\kappa F_0'''(X))\pd_{X}\wF_{g-1}+
\sum^{g-1}_{r=1} \pd_{X}\wF_r\pd_{X}\wF_{g-r}),
\ee
where $\wF_1 = F_1^{1D} = \half \log \kappa$,
next we restrict  to $X=I_0$ to get:

\begin{Theorem}For every $g\geq 2$,
\be \label{eqn:Q-Rec}
\pd_\kappa F_g^{1D} = \frac{1}{2}\big((d_{X}+\kappa I_2)d_{X}F_{g-1}^{1D}+
\sum^{g-1}_{r=1} d_{X}F_r^{1D}\cdot d_{X}F_{g-r}^{1D}\big),
\ee
where $d_X$ is a differential operator such that
\begin{align} \label{eqn:d-X}
d_X I_k & =I_{k+1},\;\; k \geq 1,  & d_X \kappa & =\kappa^2\cdot I_2,
\end{align}
i.e., $d_X:=\sum_{k\geq 1}I_{k+1}\frac{\pd}{\pd I_k}$.
\end{Theorem}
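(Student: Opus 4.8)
The plan is to obtain \eqref{eqn:Q-Rec} as the restriction to the critical point $X=I_0$ of the general recursion for realized free energies supplied by Theorem \ref{1-recursion}, combined with the identification $F_g^{1D}(I_0,I_1,\dots)=\wF_g(X)|_{X=I_0}$ of Theorem \ref{thm:F-1D}. Concretely, I specialize the realization of Section \ref{sec4} to the $N=1$ input $F=\lambda^{-2}F_0$ with $F_0$ given by \eqref{eqn:F0T} and propagator $\kappa=-1/F_0''(X)$ as in \eqref{1d-kappa}. Since this $\kappa$ satisfies the constraint \eqref{constraint}, Theorem \ref{1-recursion} applies verbatim and yields
\be
\pd_\kappa \wF_g=\half\Big((\pd_X+\kappa F_0''')\,\pd_X\wF_{g-1}+\sum_{r=1}^{g-1}\pd_X\wF_r\,\pd_X\wF_{g-r}\Big),
\ee
with $\wF_1=\half\log\kappa$. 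It then remains to evaluate every term of this identity at $X=I_0$ and recognize the outcome as \eqref{eqn:Q-Rec}.

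The technical heart is a commutation lemma: for any expression $P$ that is a polynomial in $\kappa$ and in the higher derivatives $\{F_0^{(j)}(X)\}_{j\geq 3}$ (also allowing the term $\log\kappa$ needed for $\wF_1$), I would show
\be
(\pd_X P)\big|_{X=I_0}=d_X\big(P|_{X=I_0}\big),
\ee
where $P|_{X=I_0}$ is the polynomial in $\kappa$ and $\{I_k\}_{k\geq 2}$ obtained by substituting $F_0^{(j)}(X)|_{X=I_0}=I_{j-1}$ via \eqref{1d-evaluation}. The proof is the chain rule: the total derivative $\pd_X$ acts by $\pd_X\kappa=\kappa^2 F_0'''$ (the constraint \eqref{constraint}) and $\pd_X F_0^{(j)}=F_0^{(j+1)}$, and after restriction these become $\kappa^2 I_2$ and $I_j=I_{(j-1)+1}$, which are exactly the defining rules $d_X\kappa=\kappa^2 I_2$ and $d_X I_k=I_{k+1}$ of \eqref{eqn:d-X}. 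Matching the two chain-rule expansions term by term gives the lemma; the consistency of $d_X\kappa=\kappa^2 I_2$ with $d_X I_1=I_2$ is automatic since $\kappa|_{X=I_0}=(1-I_1)^{-1}$ so that $\pd\kappa/\pd I_1=\kappa^2$.

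With the lemma in hand the assembly is routine. Because $\pd_\kappa$ treats $\kappa$ as a formal variable and does not touch the $X$-dependence carried by the vertices, it commutes with restriction, giving $(\pd_\kappa\wF_g)|_{X=I_0}=\pd_\kappa F_g^{1D}$. On the right-hand side I apply the lemma once to each factor $\pd_X\wF_r$, so that $(\pd_X\wF_r)|_{X=I_0}=d_X F_r^{1D}$ (including the boundary case $r=1$, where $\wF_1=\half\log\kappa$ gives $d_X F_1^{1D}=\half\kappa I_2$), and twice to the term $\pd_X\pd_X\wF_{g-1}$, yielding $d_X^2 F_{g-1}^{1D}$; the remaining piece $\kappa F_0'''\,\pd_X\wF_{g-1}$ restricts to $\kappa I_2\,d_X F_{g-1}^{1D}$ since $F_0'''|_{X=I_0}=I_2$. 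Collecting terms converts $(\pd_X+\kappa F_0''')\pd_X\wF_{g-1}$ into $(d_X+\kappa I_2)d_X F_{g-1}^{1D}$, which is precisely \eqref{eqn:Q-Rec}.

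The main obstacle I anticipate is the bookkeeping in the commutation lemma, specifically keeping the two kinds of operators distinct: $\pd_X$ is a total derivative in which $\kappa=-1/F_0''(X)$ genuinely depends on $X$, whereas $\pd_\kappa$ and the abstract $d_X$ treat $\kappa$ as an independent variable standing for $(1-I_1)^{-1}$. The delicate point is to verify that the entire $I_1$-dependence hidden inside $\kappa$ is captured by the single rule $d_X\kappa=\kappa^2 I_2$ and that no $\pd/\pd I_1$ contribution is lost when passing from the variables $F_0^{(j)}(X)$ to the $I_k$; the evaluations \eqref{1d-evaluation} together with the constraint \eqref{constraint} are exactly what guarantee this.
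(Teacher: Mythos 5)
Your proposal is correct and follows essentially the same route as the paper: the paper likewise obtains \eqref{eqn:Q-Rec} by applying Theorem \ref{1-recursion} to the realization with $F_0$ from \eqref{eqn:F0T} and $\kappa=-1/F_0''(X)$, and then restricting to $X=I_0$, with $\wF_1=\half\log\kappa$ as the boundary case. The only difference is that you make explicit the commutation lemma $(\pd_X P)|_{X=I_0}=d_X(P|_{X=I_0})$, which the paper leaves implicit; your verification of it via the chain rule, the constraint \eqref{constraint}, and the evaluations \eqref{1d-evaluation} is sound.
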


This theorem provides a way to compute $F_g^{1D}$ recursively
without listing all possible Feynman graphs.

\begin{Example}
For $g=4$, we have
\ben
&& 2\frac{\partial F_4^{1D}}{\partial\kappa}
=(d_{X}+\kappa I_2)\partial_{I_0} F_3^{1D}+2d_X F_1^{1D}\cdot d_X F_3^{1D}
+(d_X F_2^{1D})^2\\
&& \quad =\frac{1}{48}I_7\kappa^3+(\frac{5}{16}I_2I_6+\frac{25}{48}I_3I_5+\frac{21}{64}I_4^2)\kappa^4+(\frac{113}{48}I_2^2I_5+\frac{33}{24}I_3^3+7I_2I_3I_4)\kappa^5\\
&& \quad +(\frac{3115}{144}I_2^2I_3^2+\frac{1127}{96}I_2^3I_4)\kappa^6+\frac{985}{24}I_2^4I_3\kappa^7+\frac{1105}{64}I_2^6\kappa^8,
\een
Then $F_4^{1D}$ can be obtained by integrate with respext with $\kappa$:
\be\label{1d-f4}
\begin{split}
F_4^{1D}=&\frac{1}{384}I_7\kappa^4+(\frac{1}{32}I_2I_6+\frac{5}{96}I_3I_5+\frac{21}{640}I_4^2)\kappa^5\\
&+(\frac{113}{576}I_2^2I_5+\frac{11}{96}I_3^3+\frac{7}{12}I_2I_3I_4)\kappa^6+(\frac{445}{288}I_2^2I_3^2+\frac{161}{192}I_2^3I_4)\kappa^7\\
&+\frac{985}{384}I_2^4I_3\kappa^8+\frac{1105}{1152}I_2^6\kappa^9.
\end{split}
\ee
\end{Example}

\subsection{Example: The Airy function and quantum Airy curve}
\label{sec-airy}

Our first example is the Airy function
\be\label{airy-fn}
Z^A(t)=C\int d\eta\exp\big[ - \lambda^{-2} (-\eta t+\frac{\eta^3}{3})\big],
\ee
where $C$ is a suitable normalization constant.
Let $f(\eta)=-\eta t+\frac{\eta^3}{3}$, then the critical points of $f(\eta)$
is $\eta=\pm t^{\frac{1}{2}}$. Denote by $c$ one of the critical points
and $\tilde\eta=\eta-c$, then
\be
f(\eta)=-(\tilde\eta+c)c^2+\frac{1}{3}(\tilde\eta+c)^3
=-\frac{2}{3}c^3+c\tilde\eta^2+\frac{1}{3}\tilde\eta^3.
\ee
Therefore in this case
\be\label{a1-I}
\begin{split}
&I_0=c,\quad I_1=1-2c,\quad I_2=-2; \\
&I_k=0,\quad k\geq 3.
\end{split}
\ee
And the propagator is
\be\label{airy-propagator}
\kappa=\frac{1}{1-I_1}=\frac{1}{2c}.
\ee

Using the coordinate change \eqref{1d-coord12}, this particular point \eqref{a1-I}
in the space of coupling constants is
\be
\begin{split}
&t_0=c^2,\quad t_1=1, \quad t_2=-2,\\
&t_k=0, \quad k\geq 3.
\end{split}
\ee

In what follows we will use $\frac{\pd}{\pd c}$ to replace the operator $d_T$ since
we have $\frac{\pd I_k}{\pd c}=I_{k+1}$ for $k\geq 1$
and $\frac{\pd \kappa}{\pd c}=\kappa^2 I_2$.
Now using our formalism we can compute the free energies $F_g^{A}$
associated with $Z^A$.
The Feynman graphs of this integral only involves cubic graphs
without external edges, whose vertices are all of genus zero.

\begin{Example}
The free energies $F_0^{A}$ and $F_1^{A}$ are given by
\be
F_0^{A}=-\frac{2}{3}c^3,\quad F_1^{A}=\frac{1}{2}\log\frac{1}{2c}.
\ee
And $F^{A}_2$ and $F^{A}_3$ are given by
\begin{flalign*}
\begin{tikzpicture}
\node [align=center,align=center] at (0.1,0) {$F_2^{A}=\frac{1}{12}$};
\draw (1,0) circle [radius=0.05];
\draw (2,0) circle [radius=0.05];
\draw (1.05,0)--(1.95,0);
\draw (1.03,0.03) .. controls (1.35,0.3) and (1.65,0.3) ..  (1.97,0.03);
\draw (1.03,-0.03) .. controls (1.35,-0.3) and (1.65,-0.3) ..  (1.97,-0.03);
\node [align=center,align=center] at (2.7,0) {$+\frac{1}{8}$};
\draw (3.5+0.2,0) circle [radius=0.05];
\draw (4+0.2,0) circle [radius=0.05];
\draw (3.55+0.2,0)--(3.95+0.2,0);
\draw (3.47+0.2,0.03) .. controls (2.9+0.2,0.5) and (2.9+0.2,-0.5) ..  (3.47+0.2,-0.03);
\draw (4.03+0.2,0.03) .. controls (4.6+0.2,0.5) and (4.6+0.2,-0.5) ..  (4.03+0.2,-0.03);
\end{tikzpicture},&&
\end{flalign*}

\begin{flalign*}
\begin{split}
&\begin{tikzpicture}
\node [align=center,align=center] at (-0.3,0) {$F_3^{A}=\frac{1}{8}$};
\draw (1,0) circle [radius=0.05];
\draw (1.5,0) circle [radius=0.05];
\draw (1.05,0)--(1.45,0);
\draw (2,0.3) circle [radius=0.05];
\draw (2,-0.3) circle [radius=0.05];
\draw (1.53,0.03)--(1.97,0.27);
\draw (1.53,-0.03)--(1.97,-0.27);
\draw (1.96,0.28) -- (1.96,-0.28);
\draw (2.04,0.28) -- (2.04,-0.28);
\draw (0.97,0.03) .. controls (0.4,0.5) and (0.4,-0.5) ..  (0.97,-0.03);
\node [align=center,align=center] at (2.7,0) {$+\frac{1}{16}$};
\draw (3.5+0.2,0) circle [radius=0.05];
\draw (4+0.2,0) circle [radius=0.05];
\draw (4.5+0.2,0) circle [radius=0.05];
\draw (5+0.2,0) circle [radius=0.05];
\draw (3.47+0.2,0.03) .. controls (2.9+0.2,0.5) and (2.9+0.2,-0.5) ..  (3.47+0.2,-0.03);
\draw (5.03+0.2,0.03) .. controls (5.6+0.2,0.5) and (5.6+0.2,-0.5) ..  (5.03+0.2,-0.03);
\draw (3.55+0.2,0) -- (3.95+0.2,0);
\draw (4.55+0.2,0) -- (4.95+0.2,0);
\draw (4.02+0.2,0.04) -- (4.48+0.2,0.04);
\draw (4.02+0.2,-0.04) -- (4.48+0.2,-0.04);
\node [align=center,align=center] at (6.3,0) {$+\frac{1}{16}$};
\draw (6.6+0.4,0.3) circle [radius=0.05];
\draw (6.6+0.4,-0.3) circle [radius=0.05];
\draw (7.3+0.4,0.3) circle [radius=0.05];
\draw (7.3+0.4,-0.3) circle [radius=0.05];
\draw (6.6+0.4,0.25) -- (6.6+0.4,-0.25);
\draw (7.3+0.4,0.25) -- (7.3+0.4,-0.25);
\draw (6.62+0.4,0.34) -- (7.28+0.4,0.34);
\draw (6.62+0.4,0.26) -- (7.28+0.4,0.26);
\draw (6.62+0.4,-0.34) -- (7.28+0.4,-0.34);
\draw (6.62+0.4,-0.26) -- (7.28+0.4,-0.26);
\end{tikzpicture}
\\
&\qquad\quad\begin{tikzpicture}
\node [align=center,align=center] at (-0.1,0) {$+\frac{1}{24}$};
\draw (0.8,0) circle [radius=0.05];
\draw (1.5,0) circle [radius=0.05];
\draw (0.85,0)--(1.45,0);
\draw (2,0.4) circle [radius=0.05];
\draw (2,-0.4) circle [radius=0.05];
\draw (1.53,0.03)--(1.97,0.37);
\draw (1.53,-0.03)--(1.97,-0.37);
\draw (2,0.35) -- (2,-0.35);
\draw (0.83,0.03)--(1.97,0.37);
\draw (0.83,-0.03)--(1.97,-0.37);
\node [align=center,align=center] at (2.9,0) {$+\frac{1}{48}$};
\draw (4+0.2,0) circle [radius=0.05];
\draw (4.7+0.2,0) circle [radius=0.05];
\draw (5.3+0.2,0.4) circle [radius=0.05];
\draw (5.3+0.2,-0.4) circle [radius=0.05];
\draw (4.05+0.2,0)--(4.65+0.2,0);
\draw (4.73+0.2,0.03)--(5.27+0.2,0.37);
\draw (4.73+0.2,-0.03)--(5.27+0.2,-0.37);
\draw (3.97+0.2,0.03) .. controls (3.4+0.2,0.5) and (3.4+0.2,-0.5) ..  (3.97+0.2,-0.03);
\draw (5.33+0.2,0.43) .. controls (5.9+0.2,0.9) and (5.9+0.2,-0.1) ..  (5.33+0.2,0.37);
\draw (5.33+0.2,-0.43) .. controls (5.9+0.2,-0.9) and (5.9+0.2,0.1) ..  (5.33+0.2,-0.37);
\end{tikzpicture},
\end{split}&&
\end{flalign*}
or,
\be\label{a1-f2f3}
F_2^{A}=\frac{5}{48c^3},\quad F_3^{A}=\frac{5}{64c^6}.
\ee
\end{Example}

Using \eqref{airy-propagator}, for $g\geq 1$ we have
\ben
\partial_\kappa F^{A}_g=(-2c^2)\cdot\frac{\partial F^{A}_g}{\partial c},
\een
thus Theorem \ref{1-recursion} give us the following recursion relation.
\begin{Theorem}
For $g\geq 2$ we have
\be\label{a1-recursion}
\frac{\pd F_g^{A}}{\pd c} =-\frac{1}{4c^2}\big[(\frac{\pd}{\pd c}
-\frac{1}{c})\frac{\pd F_{g-1}^{A}}{\pd c}+
\sum_{r=1}^{g-1}\frac{\pd F_{r}^{A}}{\pd c}\frac{\pd F_{g-r}^{A}}{\pd c} \big].
\ee
\end{Theorem}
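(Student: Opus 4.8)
The plan is to obtain \eqref{a1-recursion} by specializing the quadratic recursion for topological 1D gravity \eqref{eqn:Q-Rec} to the Airy coupling constants \eqref{a1-I} and then translating the two differential operators $\pd_\kappa$ and $d_X$ into derivatives in the single variable $c$. By Theorem \ref{thm:F-1D}, together with the identification of the Airy integral \eqref{airy-fn} as topological 1D gravity at $t_0=c^2$, $t_1=1$, $t_2=-2$, $t_k=0$ ($k\geq3$), we already know that $F_g^A$ is the evaluation of $F_g^{1D}$ at the values \eqref{a1-I}; hence \eqref{eqn:Q-Rec} holds for the $F_g^A$ once both sides are evaluated there. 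So the content of the theorem is purely the change of variables from $(\kappa,I_2,I_3,\dots)$ to $c$.

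First I would record the $c$-dependence of the data. At the Airy point the vertex weights $I_k$ with $k\geq2$ are the constants $I_2=-2$ and $I_k=0$ ($k\geq3$), none of which depend on $c$, while $I_1=1-2c$ enters $F_g^{1D}$ only through the propagator $\kappa=\tfrac{1}{1-I_1}=\tfrac{1}{2c}$ by \eqref{airy-propagator}. Writing $E$ for evaluation at \eqref{a1-I} and regarding $F_g^{1D}$ as a polynomial $G_g(\kappa,I_2,I_3,\dots)$, the chain rule gives $\pd_c F_g^A=E(\pd_\kappa G_g)\cdot\pd_c\kappa=-\tfrac{1}{2c^2}\,\pd_\kappa F_g^A$, which is exactly the relation $\pd_\kappa F_g^A=-2c^2\,\pd_c F_g^A$ quoted before the statement. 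This handles the left-hand side.

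Next I would show that the operator $d_X=\sum_{k\geq1}I_{k+1}\pd_{I_k}$ becomes $\pd_c$ after evaluation, i.e. $E(d_X H)=\pd_c(E H)$ for every polynomial $H$ in the $I_k$. Indeed $\pd_c(E H)=E(\pd_{I_1}H)\,\pd_c I_1=-2\,E(\pd_{I_1}H)$ since $\pd_c I_k=0$ for $k\geq2$ at the Airy point, while $E(d_X H)=E(I_2)\,E(\pd_{I_1}H)=-2\,E(\pd_{I_1}H)$ because $E(I_{k+1})=0$ for $k\geq2$; this is the evaluation-level form of the rules $\pd_c I_k=I_{k+1}$ and $\pd_c\kappa=\kappa^2I_2$ stated in the text. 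Applying this identity to the quadratic term gives $E\bigl(\sum_r d_X F_r\,d_X F_{g-r}\bigr)=\sum_r\pd_c F_r^A\,\pd_c F_{g-r}^A$, and applying it to the first term, together with $E(\kappa I_2)=\tfrac{1}{2c}(-2)=-\tfrac{1}{c}$, gives $E\bigl((d_X+\kappa I_2)d_X F_{g-1}\bigr)=\bigl(\pd_c-\tfrac{1}{c}\bigr)\pd_c F_{g-1}^A$.

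Finally I would assemble these pieces: evaluating \eqref{eqn:Q-Rec} at \eqref{a1-I} and substituting the three translations yields $-2c^2\,\pd_c F_g^A=\tfrac{1}{2}\bigl[(\pd_c-\tfrac{1}{c})\pd_c F_{g-1}^A+\sum_{r=1}^{g-1}\pd_c F_r^A\,\pd_c F_{g-r}^A\bigr]$, which is \eqref{a1-recursion} after dividing by $-2c^2$. The point demanding care, and the main obstacle, is the commutation $E\circ d_X=\pd_c\circ E$ and in particular its iterated use in the $(d_X+\kappa I_2)d_X F_{g-1}$ term: one must check that differentiating symbolically in the $I_k$ and then restricting to the Airy locus agrees with differentiating the restricted function in $c$, which rests on the fact that at this locus every surviving coupling $I_k$ ($k\geq2$) is $c$-independent so that all $c$-dependence is funneled through $\kappa$.
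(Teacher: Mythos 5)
Your proposal is correct and follows essentially the same route as the paper, which derives \eqref{a1-recursion} by specializing the 1D-gravity recursion \eqref{eqn:Q-Rec} to the Airy point, noting that $\partial/\partial c$ realizes $d_X$ there (since $\partial_c I_k=I_{k+1}$ and $\partial_c\kappa=\kappa^2 I_2$) and that $\partial_\kappa F_g^{A}=-2c^2\,\partial_c F_g^{A}$. Your explicit check that evaluation at the Airy locus commutes with $d_X$ (and hence with its iterates) is exactly the detail the paper leaves implicit.
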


Note that in this case $F^{A}_{g}$ must be a monomial in $\kappa=\frac{1}{2c}$
of degree $3g-3$, thus $ F^{A}_{g}$ can be solved recursively using this relation. For example,
\be\label{a1-f4f5}
F_4^{A}=\frac{1105}{9216c^9},\quad F_5^{A}=\frac{565}{2048c^{12}},\quad\cdots
\ee
The recursion relation \eqref{a1-recursion} is equivalent to the equation
\be
\big((\hbar\frac{\pd}{\pd t})^2-t\big)Z^A(t)=0
\ee
if we set $\hbar=\lambda^2$.
This is the Schr\"odinger equation of the quantum Airy curve.

Now let us recall some results about the quantum Airy curve \cite{gs, zhou1}
and compare with the above construction.
First let $u(z)=\frac{1}{2}z^2$, $v(z)=z$ be a parametrization of the Airy curve
\ben
A(u,v)=\frac{1}{2}v^2-u=0,
\een
then using the Eynard-Orantin topological recursion \cite{eo} one may get a family of differentials
\ben
W_{g,n}(p_1,\cdots,p_n)=\mathcal{W}_{g,n}(z_1,\cdots,z_n)dz_1\cdots dz_n.
\een
These invariants associated to the Airy curve encode the information of the Witten-Kontsevich tau-function \cite{kon, wit2, zhou3}. Following \cite{gs}, define
\be
Z_{Airy}=\exp(\sum_{n=0}^{\infty}\hbar^{n-1}S_n),
\ee
where  $S_n$ are defined in \eqref{eqn:GS}.
Then $S_k$ can be used to construct the quantization of the Airy curve.
The following result was proposed by Gukov-Su{\l}kowski in \cite{gs}, and proved in \cite{zhou1}.

\begin{Theorem}(\cite{zhou1})\label{thm-airy}
The function $Z_{Airy}$ satisfies
\be\label{airy-eqn}
\hat{A}Z_{Airy}=0,
\ee
where
\be
\hat{A}=\frac{1}{2}\hat{v}^2-\hat{u}=\frac{1}{2}(\hbar\partial_u)^2-u\cdot
\ee
is the quantization of the Airy curve.
\end{Theorem}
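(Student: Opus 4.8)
The plan is to deduce the Schr\"odinger equation for $Z_{Airy}$ from the fact, recorded in the preceding subsection and checked directly by integration by parts, that the Airy integral $Z^A(t)$ of \eqref{airy-fn} is annihilated by $(\hbar\partial_t)^2-t$ once we set $\hbar=\lambda^2$. The key is to identify $Z_{Airy}$, which is assembled from the Eynard-Orantin invariants of the Airy curve through \eqref{eqn:GS}, with the asymptotic expansion of $Z^A$ about a critical point of $f(\eta)=-\eta t+\eta^3/3$, up to the linear rescaling matching the two normalizations of the curve. Once this \emph{identification} is in hand the differential equation transports across it.

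First I would pin down the classical data. With the parametrization $u=\frac12 z^2$, $v=z$, a direct computation from \eqref{eqn:GS} gives $S_0=\int^z v\,du=z^3/3$ and $S_1=-\frac12\log(du/dz)=-\frac12\log z$. Since $\partial_u S_0=v=z=\sqrt{2u}$, the symbol of the leading WKB term reproduces the classical curve $\frac12 v^2-u=0$ under $v\mapsto\hbar\partial_u$; this fixes $\widehat A_0=\frac12(\hbar\partial_u)^2-u$ and is consistent with $\widehat A$ carrying no higher quantum corrections, as befits a second-order operator.

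For $n\geq 2$ I would use the graph sum \eqref{QSC-graphsum2} and the subsequent corollary to write $S_n$ as a sum over decorated stable graphs. For the Airy curve there is a single branch point, so $N=1$, and by Eynard's formula \cite{ey1,ey2} together with the Kontsevich-Witten theorem the invariants $\omega_{g,k}$ are generating series of $\psi$-class intersection numbers. Matching the integrated leaf weights $\int^z\sum_j d\xi^{i(l)}_{k(l)}$, the jump data $\check B$, and the trivalent vertex weights against the $1$D-gravity Feynman rules \eqref{eqn:F-1D}, I would identify the functions $\widetilde F_{g,k}$ of \eqref{eq-reln-S-F} with the derivatives in $u$ of the $1$D-gravity free energies at the Airy coupling point $t_0=c^2$, $t_1=1$, $t_2=-2$, $t_k=0$ for $k\geq 3$ (Theorem \ref{thm:F-1D}), under the dictionary $\kappa=1/(2c)$ and $c=\sqrt u$. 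This gives $Z_{Airy}=Z^A$ as formal WKB series up to normalization and the rescaling $t=2^{1/3}u$; under this rescaling one checks $(\hbar\partial_t)^2-t=2^{1/3}\widehat A$, so $((\hbar\partial_t)^2-t)Z^A=0$ yields $\widehat A Z_{Airy}=0$. Alternatively one may avoid the integral and simply run the quadratic recursion \eqref{a1-recursion} --- equivalently the recursion for $\widetilde F_{g,n}$ in Section \ref{sec6} --- which was already shown to be equivalent to this very Schr\"odinger equation.

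The main obstacle is the combinatorial identification in the third step: proving, at the level of decorated-graph sums, that the Eynard-Orantin weights for the Airy curve reproduce exactly the $1$D-gravity weights. This requires matching the automorphism factors, the integrated leaf contributions, and the Bergman-kernel jumps $\check B$ against the single propagator $\kappa=1/(2c)$ and the unique nonzero vertex $I_2=-2$ of the Airy point. Once this dictionary is established the passage to the differential equation is purely formal.
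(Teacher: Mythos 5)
Your step (i) is fine: integration by parts inside the formal Gaussian integral \eqref{airy-fn} immediately gives $((\hbar\partial_t)^2-t)Z^A=0$, and the rescaling $t=2^{1/3}u$ correctly converts this operator into $2^{1/3}\widehat{A}$. The problem is step (ii), the identification $Z_{Airy}=Z^A$, which carries the entire content of the theorem and for which your proposed mechanism cannot work as described. For the Airy curve there is a single branch point and the Bergman kernel $dz\,dz'/(z-z')^2$ is exact, so all jumps $\check{B}^{1,1}_{k,l}$ vanish and the decorated stable graphs in \eqref{QSC-graphsum2} degenerate to single vertices: $S_n$ is a sum of intersection numbers $\langle\tau_{k_1}\cdots\tau_{k_j}\rangle_{g'}$ over all $(g',j)$ with $2g'-1+j=n$, with no internal edges at all. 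On the other side $F^{A}_g$ is a sum over genus-$g$ \emph{trivalent} closed graphs with propagator $\kappa=1/(2c)$ and the single nonzero vertex $I_2=-2$. These two graph sums have entirely different underlying index sets (even the genera fail to match term by term, since $S_g$ mixes all $(g',j)$ with $2g'-1+j=g$), so there is no matching of automorphism factors, leaf weights and propagators to be carried out; their equality is precisely the statement that the principal specialization of the Witten--Kontsevich tau-function at the Airy times equals the asymptotic expansion of the one-dimensional Airy integral, i.e.\ the rank-one case of Kontsevich's theorem \cite{kon}. That is a result of the same depth as Theorem \ref{thm-airy} itself, and you would need to import it explicitly rather than gesture at a Feynman-rule dictionary.

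Your fallback route is circular. The recursion \eqref{a1-recursion} is derived in the paper (from Theorem \ref{1-recursion}) for the free energies $F^{A}_g$ of the Gaussian integral; the claim that it also governs the $S_n$ of \eqref{eqn:GS} --- equivalently that it ``is equivalent to this very Schr\"odinger equation'' for $Z_{Airy}$ --- presupposes $S_g=F^{A}_g$, which the paper obtains only \emph{after} quoting Theorem \ref{thm-airy} together with the lemma obtained from it by expanding \eqref{airy-eqn} in powers of $\hbar$. Note also that the paper does not prove this theorem but cites \cite{zhou1}, where the argument runs in the opposite direction: one establishes the quadratic recursion for $\partial_u S_n$ directly from the properties of the intersection numbers (the DVV/Virasoro constraints) and then resums it into $\widehat{A}Z_{Airy}=0$. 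If you want a self-contained proof along your lines, the honest route is to first prove the identification $Z_{Airy}=Z^A$, either by invoking the $N=1$ Kontsevich model or by showing that both sides satisfy the same quadratic recursion with the same initial data, with the recursion on the intersection-number side derived independently of the theorem.
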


One computes
\be\label{airy-S}
\begin{split}
&S_0=\frac{1}{3}z^3,\quad S_1=-\frac{1}{2}\log(z),\\
&S_2=\frac{5}{24z^3},\quad
S_3=\frac{5}{16z^6}, \quad S_4=\frac{1105}{1152z^9}, \quad\cdots
\end{split}
\ee
We make a comparison between this and \eqref{a1-f2f3}, \eqref{a1-f4f5},
then we might simply expect $F_g^{A}=S_g$ for $g\geq 2$ once we set $z^3=2c^3$.
In fact, this can be proved by the following recursion relation.

\begin{Lemma}(\cite{zhou1})
For $n>2$, we have
\be
\frac{1}{2}\partial_u^2 S_{n-1}+\partial_u S_0\cdot\partial_u S_n+\partial_u S_1\cdot\partial_u S_{n-1}+\frac{1}{2}\sum_{\substack{i+j=n\\i,j\geq 2}}\partial_u S_i\cdot\partial_u S_j=0.
\ee
\end{Lemma}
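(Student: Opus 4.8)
The plan is to derive the recursion directly from the Schr\"odinger equation $\hat A Z_{Airy}=0$ of Theorem \ref{thm-airy} by inserting the WKB-type ansatz $Z_{Airy}=\exp(W)$ with $W=\sum_{n\geq 0}\hbar^{n-1}S_n$ and matching powers of $\hbar$. First I would use $\hat A=\half(\hbar\partial_u)^2-u$ together with $\partial_u e^{W}=(\partial_u W)e^{W}$ and $\partial_u^2 e^{W}=\big(\partial_u^2 W+(\partial_u W)^2\big)e^{W}$. Dividing $\hat A Z_{Airy}=0$ by the nowhere-vanishing factor $e^{W}$ then reduces the problem to the single scalar identity
\[
\half\hbar^2\big(\partial_u^2 W+(\partial_u W)^2\big)-u=0.
\]

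Next I would substitute the series for $W$, so that $\partial_u^2 W=\sum_{n}\hbar^{n-1}\partial_u^2 S_n$ and $(\partial_u W)^2=\sum_{i,j}\hbar^{i+j-2}\partial_u S_i\,\partial_u S_j$, turning the identity into
\[
\half\sum_{n}\hbar^{n+1}\partial_u^2 S_n+\half\sum_{i,j}\hbar^{i+j}\partial_u S_i\,\partial_u S_j-u=0.
\]
Collecting the coefficient of each power of $\hbar$, the order-$\hbar^0$ term reads $\half(\partial_u S_0)^2-u=0$, which is nothing but the defining equation of the Airy curve (a consistency check, since $\partial_u S_0=v$ and $\half v^2=u$); this is the only place where the inhomogeneous term $-u$ enters. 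For every $m\geq 1$ the coefficient of $\hbar^m$ gives the homogeneous relation
\[
\half\partial_u^2 S_{m-1}+\half\sum_{i+j=m}\partial_u S_i\,\partial_u S_j=0,
\]
the case $m=1$ determining $S_1$.

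Finally, taking $m=n$ with $n>2$, I would split the quadratic sum into the terms carrying a factor $S_0$ or $S_1$ and the remainder. The two terms with $i=0$ or $j=0$ combine into $\partial_u S_0\,\partial_u S_n$, the two terms with $i=1$ or $j=1$ combine into $\partial_u S_1\,\partial_u S_{n-1}$, and what is left is $\half\sum_{i+j=n,\,i,j\geq 2}\partial_u S_i\,\partial_u S_j$, which reproduces exactly the asserted identity.

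The one delicate point — and precisely the reason for the hypothesis $n>2$ — is the symmetry bookkeeping in this last splitting. Pairing the $(i,j)$ and $(j,i)$ contributions in order to absorb the prefactor $\half$ into a coefficient $1$ is legitimate only when the two indices in each distinguished pair are genuinely distinct: the $S_0$ piece requires $n\neq 0$, while the $S_1$ piece requires $n-1\neq 1$, i.e. $n\neq 2$. At $n=2$ the term $(i,j)=(1,1)$ is self-paired and keeps the coefficient $\half$ rather than $1$, so the clean form of the recursion fails and that case must be handled on its own. I expect this index-coincidence analysis to be the only genuinely subtle step; the rest is routine differentiation of $e^{W}$ and collection of power-series coefficients.
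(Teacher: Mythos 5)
Your proposal is correct and follows exactly the route the paper indicates: the paper's entire "proof" is the remark that the lemma is obtained from Theorem 7.3 by expanding $\widehat{A}Z_{Airy}=0$ as a series in $\hbar$ and comparing coefficients, which is precisely your WKB substitution, power matching, and splitting of the quadratic sum. Your additional care about the self-paired term at $n=2$ correctly explains the hypothesis $n>2$, a point the paper leaves implicit.
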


This lemma can be obtained from Theorem \ref{thm-airy} by
expanding \eqref{airy-eqn} directly as a series in $\hbar$ and
comparing the coefficients. Using $z^3=2c^3$ and $u=\frac{1}{2}z^2$,
this recursion is equivalent to \eqref{a1-recursion}.

In conclusion, for the particular choices \eqref{a1-I} of coupling constants,
the procedure to produce $ F^{A}_g$ is equivalent to the solution of
quantization of Airy curve obtained by the Eynard-Orantin topological recursion.

\subsection{Example: The Kontsevich-Penner matrix model for $N=1$}
\label{sec-catalan}

Our next example is the Kontsevich-Penner matrix model for $N=1$ \cite{kon,pen},
which is related to the quantization of the Catalan curve
\ben
x=z+\frac{1}{z},
\een
see \cite{gs, ms}. The Schr\"odinger equation in this case is
\be\label{kon-pen schr}
[\hbar^2\frac{\partial^2}{\partial t^2}+\hbar t\frac{\partial}{\partial t}+(1-\hbar)]Z(t)=0,
\ee
where the partition function is the formal integral
\be
Z^C(t) = \int d\eta \exp\big[-\lambda^{-2}(\frac{1}{2}\eta^2+\eta t+log\eta)\big],
\ee
up to a normalization constant,
and we set $\hbar=\lambda^2$.

Now let us understand this integral using the model of topological 1D gravity.
Write $f(\eta)=\frac{1}{2}\eta^2+\eta t+\log(\eta)$,
we expand $f(\eta)$ at its critical points $c=\frac{-t\pm\sqrt{t^2-4}}{2}$.
Let $\tilde{\eta}=\eta-c$, then
\be
\begin{split}
f(\eta)=&f(\tilde\eta+c)\\
=&(-\frac{c^2}{2}-1+\log(c))+\frac{1}{2}(1-c^{-2})\tilde\eta^2+\frac{\tilde\eta^3}{3c^3}-\frac{\tilde\eta^4}{4c^4}+\frac{\tilde\eta^5}{5c^5}-\cdots
\end{split}
\ee
By taking
\be
\kappa=\frac{1}{1-c^{-2}}
\ee
and
\be
I_k=\frac{(-1)^{k+1}\cdot k!}{c^{k+1}},\quad k\geq 1,
\ee
we see $I_k'=I_{k+1}$ and $\kappa'=\kappa^2\cdot I_2$ where the prime means
taking derivative with respect to $c$.
Thus we get a special case of the theory in Section \ref{sec:1D-Rec},
thus we obtain a sequence of free energies $ F^{C}_g$.
Using the formula for $ F^{C}_g$ for $g=2,3,4$ given by \eqref{1d-f2f3} and \eqref{1d-f4},
we get:
\ben
 F^{C}_0&=&-\frac{c^2}{2}-1+\log(c),\quad F^{C}_1=\frac{1}{2}\log\frac{1}{1-c^{-2}},\\
 F^{C}_2&=&\frac{3}{4(c^2-1)^2}+\frac{5}{6(c^2-1)^3},\\
 F^{C}_3&=&\frac{5}{2(c^2-1)^3}+\frac{10}{(c^2-1)^4}+\frac{25}{2(c^2-1)^5}+\frac{5}{(c^2-1)^6},\\
 F^{C}_4&=&\frac{105}{8(c^2-1)^4}+\frac{507}{5(c^2-1)^5}+\frac{6391}{24(c^2-1)^6}+\frac{767}{2(c^2-1)^7}\\
&+&\frac{985}{4(c^2-1)^8}+\frac{1105}{18(c^2-1)^9}.
\een

The quadratic recursion relation in Theorem \ref{1-recursion} reads
\be
\partial_\kappa F^{C}_g=\frac{1}{2}[(\frac{\partial}{\partial c}+\kappa\cdot I_2)
\frac{\partial}{\partial c} F^{C}_{g-1}
+\sum_{r=1}^{g-1}\frac{\partial}{\partial c} F^{C}_r\cdot
\frac{\partial}{\partial c} F^{C}_{g-r}],\quad g\geq 2.
\ee
To write this relation completely in terms of $\frac{\partial}{\partial c}$,
we need the following lemma.

\begin{Lemma}
For $g\geq 2$, we have
\be
\frac{\partial}{\partial c} F^{C}_g
=(\kappa^2\cdot I_2-\frac{2\kappa}{c})\partial_\kappa F^{C}_g.
\ee
\end{Lemma}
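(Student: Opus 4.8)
The plan is to exploit the explicit graph-sum expression for $F_g^C$ coming from topological 1D gravity, namely \eqref{eqn:F-1D},
\[
F_g^C=\sum_{\Gamma\in\cG_{g,0}^{c,0}}\frac{1}{|\Aut(\Gamma)|}\Big(\prod_{v\in V(\Gamma)}I_{\val(v)-1}\Big)\kappa^{|E(\Gamma)|},
\]
and to regard $F_g^C$ as a genuine polynomial in the independent variables $\kappa,I_2,I_3,\dots$ (the variable $I_1$ enters only through $\kappa=1/(1-I_1)$, and no $I_1$ appears explicitly since stability forces $\val(v)\ge 3$, so every $I$-factor is some $I_{\val(v)-1}$ with $\val(v)-1\ge 2$). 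First I would apply the chain rule to the substitution $c\mapsto(\kappa(c),I_2(c),I_3(c),\dots)$. Using the two facts recorded in this subsection, $dI_k/dc=I_{k+1}$ and $d\kappa/dc=\kappa^2 I_2$, this gives
\[
\frac{\partial}{\partial c}F_g^C=\kappa^2 I_2\,\partial_\kappa F_g^C+\sum_{k\ge 2}I_{k+1}\,\partial_{I_k}F_g^C.
\]

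The next step is to rewrite the $I$-sum as a multiple of $\partial_\kappa F_g^C$. From the explicit values $I_k=(-1)^{k+1}k!/c^{k+1}$ one reads off the relation $I_{k+1}=-\frac{k+1}{c}I_k$, so that
\[
\sum_{k\ge 2}I_{k+1}\,\partial_{I_k}F_g^C=-\frac1c\sum_{k\ge 2}(k+1)I_k\,\partial_{I_k}F_g^C.
\]
Thus it suffices to establish the weighted Euler identity $\sum_{k\ge 2}(k+1)I_k\,\partial_{I_k}F_g^C=2\kappa\,\partial_\kappa F_g^C$. I would prove this monomial by monomial on the graph sum above. For a single graph $\Gamma$ the monomial $\prod_v I_{\val(v)-1}\,\kappa^{|E(\Gamma)|}$ is an eigenvector of $\sum_{k\ge 2}(k+1)I_k\partial_{I_k}$ with eigenvalue $\sum_{v\in V(\Gamma)}\val(v)$ (each factor $I_{\val(v)-1}$ carries weight $(\val(v)-1)+1=\val(v)$), while it is an eigenvector of $2\kappa\,\partial_\kappa$ with eigenvalue $2|E(\Gamma)|$. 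The handshake lemma $\sum_{v}\val(v)=2|E(\Gamma)|$ makes the two eigenvalues equal, so the identity holds termwise and hence for the whole sum.

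Combining the two displays yields
\[
\frac{\partial}{\partial c}F_g^C=\kappa^2 I_2\,\partial_\kappa F_g^C-\frac{2\kappa}{c}\,\partial_\kappa F_g^C=\Big(\kappa^2 I_2-\frac{2\kappa}{c}\Big)\partial_\kappa F_g^C,
\]
as claimed. The computation is essentially bookkeeping; the only point requiring care is keeping the two roles of the variables straight, since $\kappa,I_2,I_3,\dots$ are treated as algebraically independent when forming $\partial_\kappa$ and $\partial_{I_k}$, yet all become functions of the single variable $c$ under the substitution, so that $\partial/\partial c$ is the total derivative. The substantive input is the handshake lemma encoding the edge-versus-valence balance of the stable graphs, which is exactly what converts the weighted $I$-degree into twice the $\kappa$-degree.
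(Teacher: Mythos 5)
Your proof is correct and is essentially the paper's own argument in different packaging: the paper computes the ratio $\pd_c(\text{monomial})/\pd_\kappa(\text{monomial})$ for each monomial $I_{i_1}\cdots I_{i_k}\kappa^l$ using $I_{i_j+1}/I_{i_j}=-(i_j+1)/c$ together with the homogeneity $i_1+\cdots+i_k+k=2l$, which is exactly your handshake identity $\sum_v\val(v)=2|E(\Gamma)|$. Your reformulation as a chain rule plus a weighted Euler-operator eigenvalue computation is a clean equivalent of the same bookkeeping.
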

\begin{proof}
Using the homogeneity we could see that every monomial in $ F^{1D}_g$ is of the form $a_{g,i_1\cdots i_k}I_{i_1}\cdots I_{i_k}\cdot\kappa^l$ with $i_1+\cdots i_k+k=2l$. Thus
\ben
\frac{\frac{\partial}{\partial c}
(I_{i_1}\cdots I_{i_k}\kappa^l)}{\partial_\kappa (I_{i_1}\cdots I_{i_k}\kappa^l)}
&=&\frac{\sum_{j=1}^{k}I_{i_1}\cdots I_{i_j+1}\cdots
I_{i_k}\kappa^l+I_{i_1}\cdots I_{i_k} l\kappa^{l-1}\cdot \kappa^2 I_2}{I_{i_1}
\cdots I_{i_k} l\kappa^{l-1}}\\
&=&\frac{1}{l}\sum_{j=1}^{k}\frac{I_{i_{j}+1}}{I_j}\cdot\kappa+\kappa^2\cdot I_2\\
&=&\frac{1}{l}\sum_{j=1}^{k}\frac{(-1)(i_j+1)}{c}\cdot\kappa+\kappa^2\cdot I_2\\
&=&-\frac{2}{c}\cdot\kappa+\kappa^2\cdot I_2,
\een
which proves the lemma.
\end{proof}

Now using this lemma and
\ben
\kappa^2\cdot I_2-\frac{2\kappa}{c}=-\frac{2c^3}{(c^2-1)^2},
\een
we rewrite the recursion relation as follows.

\begin{Theorem}
For the Kontsevich-Penner model, for $g\geq 2$, we have
\be
\frac{\partial}{\partial c} F^{C}_g=-\frac{c^3}{(c^2-1)^2}
\big[\big(\frac{\partial}{\partial c}-\frac{2}{c(c^2-1)}\big)
\frac{\partial}{\partial c} F^{C}_{g-1}+\sum_{r=1}^{g-1}
\frac{\partial}{\partial c} F^{C}_r\cdot
\frac{\partial}{\partial c} F^{C}_{g-r}\big].
\ee
Or in terms of $t=-(c+\frac{1}{c})$,
\be
(c-\frac{1}{c})\cdot\frac{\partial}{\partial t} F^{C}_g
=\frac{\partial^2}{\partial t^2} F^{C}_{g-1}
+\sum_{r=1}^{g-1}\frac{\partial}{\partial t} F^{C}_r\cdot
\frac{\partial}{\partial t} F^{C}_{g-r}.
\ee
\end{Theorem}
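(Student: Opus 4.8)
The plan is to derive the first displayed identity directly from the quadratic recursion stated just before the Lemma, namely
\be
\partial_\kappa F^{C}_g=\frac{1}{2}\big[(\tfrac{\partial}{\partial c}+\kappa I_2)\tfrac{\partial}{\partial c} F^{C}_{g-1}+\sum_{r=1}^{g-1}\tfrac{\partial}{\partial c} F^{C}_r\cdot\tfrac{\partial}{\partial c} F^{C}_{g-r}\big],
\ee
which is the specialization of Theorem \ref{1-recursion} to the Kontsevich--Penner data, using that $\partial_t$ is realized by $\partial_c$ because $\partial_c I_k=I_{k+1}$ and $\partial_c\kappa=\kappa^2 I_2$ for this choice of $I_k$ and $\kappa$. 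The first task is to trade $\partial_\kappa$ for $\partial_c$ on the left-hand side, and this is exactly what the preceding Lemma supplies: multiplying through by the factor $\kappa^2 I_2-\tfrac{2\kappa}{c}$ turns $\partial_\kappa F^C_g$ into $\partial_c F^C_g$. I would then insert the explicit values $\kappa=\tfrac{c^2}{c^2-1}$ and $I_2=-\tfrac{2}{c^3}$ and check the two scalar identities $\kappa^2 I_2-\tfrac{2\kappa}{c}=-\tfrac{2c^3}{(c^2-1)^2}$ and $\kappa I_2=-\tfrac{2}{c(c^2-1)}$. Substituting these gives the first equation of the theorem immediately.

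For the second form I would change variables to $t=-(c+\tfrac1c)$, with Jacobian $J:=\tfrac{dt}{dc}=-\tfrac{c^2-1}{c^2}$, so that $\partial_c=J\partial_t$. The key computation is that $\partial_t J=\tfrac{2}{c(c^2-1)}$, which is precisely the connection coefficient sitting inside the covariant derivative $\partial_c-\tfrac{2}{c(c^2-1)}$. Applying this combination to $\partial_c F^C_{g-1}=J\partial_t F^C_{g-1}$ and expanding, the term $J(\partial_t J)\partial_t F^C_{g-1}$ produced by differentiating the Jacobian cancels against the connection term $\tfrac{2}{c(c^2-1)}\,J\partial_t F^C_{g-1}$, leaving exactly $J^2\partial_t^2 F^C_{g-1}$; the quadratic sum likewise acquires an overall $J^2$. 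Since $-\tfrac{c^3}{(c^2-1)^2}J^2=-\tfrac1c$ and the left-hand side is $\partial_c F^C_g=-\tfrac{c^2-1}{c^2}\partial_t F^C_g$, multiplying through by $-c$ and using $\tfrac{c^2-1}{c}=c-\tfrac1c$ yields the second displayed equation.

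The only genuinely delicate step is this cancellation of the first-order derivatives: it is the numerical coincidence $\partial_t J=\tfrac{2}{c(c^2-1)}$ that makes the explicit connection term in $\partial_c-\tfrac{2}{c(c^2-1)}$ conspire with the Jacobian term, converting the covariant second derivative $D_c\partial_c$ in the $c$-variable into an honest $\partial_t^2$ in the $t$-variable; everything else is bookkeeping (the specialization of the abstract recursion, the scalar simplifications of $\kappa$ and $I_2$, and tracking powers of $J$). As a consistency check I would verify the resulting relation against the explicit expressions for $F^C_2,F^C_3,F^C_4$ listed above, and observe that the final recursion is equivalent to the Schr\"odinger equation \eqref{kon-pen schr} for the quantized Catalan curve.
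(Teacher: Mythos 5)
Your proposal is correct and follows the same route as the paper: specialize the abstract quadratic recursion (Theorem \ref{1-recursion}) to the Kontsevich--Penner data, use the preceding Lemma to convert $\partial_\kappa$ into $\partial_c$ via the factor $\kappa^2 I_2-\frac{2\kappa}{c}=-\frac{2c^3}{(c^2-1)^2}$, and then change variables to $t=-(c+\frac{1}{c})$, where the cancellation $\partial_t J=\frac{2}{c(c^2-1)}$ turns the covariant second derivative into $\partial_t^2$. All your scalar identities ($\kappa I_2=-\frac{2}{c(c^2-1)}$, $-\frac{c^3}{(c^2-1)^2}J^2=-\frac{1}{c}$) check out, and your write-up is in fact more explicit than the paper's one-line derivation.
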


The second equation in the above theorem
is exactly the same as the recursive formula for
the quantization of the Catalan curve \cite[(4.13)]{ms}.
Similar to the case of quantum Airy curve, the free energies $ F^{C}_g$ here
can also be constructed from the Catalan curve using the Eynard-Orantin topological recursion.

\subsection{Example: Enumeration of stable graphs with genus zero vertices}

In this subsection we apply topological 1D gravity and our formalism to study
the problem of enumeration of stable graphs with only genus zero vertices.
In this model,
all vertices of the Feynman graphs have valence $\geq 3$.
The partition function  is given by
\be \label{eqn:Z-st}
Z^{st}(T, \kappa)= \frac{\kappa^{1/2}}{(2\pi \lambda^2)^{1/2}}
\int dx \exp[\lambda^{-2}(T\sum_{n \geq 3} \frac{x^n}{n!} -\frac{x^2}{2\kappa})].
\ee
Write the free  energy $F^{st} = \log Z^{st}$ in the following expansion:
\be
F^{st}(T, \kappa)= \sum_{g\geq 2} \sum_{d=1}^{2g-2}a_d^g T^d\kappa^{g-1+d}.
\ee
By \eqref{eqn:Formula-for-Z}
\be\label{generating-a}
\begin{split}
F^{st}(T, \kappa)
=\log\biggl(\sum_{k\geq 0}\sum_{l_i\geq 3}\sum_l
\frac{\lambda^{2l-2k}}{k!\cdot l_1!\cdots l_k!}\cdot
(2l-1)!!\cdot T^k \cdot\kappa^l\cdot\delta_{l_1+\cdots+l_k,2l}\biggr).
\end{split}
\ee
The coefficients $a^g_d$ counts the number of connected stable graphs of genus $g$
with $d$ vertices of marked genus $0$,
weighted by the inverse of the orders of their automorphism groups.
Such graphs has $g-1+d$ edges.
Using this interpretation one gets the following explicit formulas for $a^g_d$, $d=1,2$:
\be\label{eg3-agk}
\begin{split}
a^g_1&=\frac{1}{(2g)!!} = \frac{1}{2^gg!},\\
a^g_2&=\frac{1}{2}\sum_{k=1}^{g+1}\sum_{
\substack{k+2l\geq 3\\2g+2-k-2l\geq 3}}
\frac{1}{k!\cdot(2l)!!\cdot(2g+2-2k-2l)!!}.
\end{split}
\ee
By an elementary calculation, one can simplify the second equality to get
\be
a^g_2=\frac{2^{2g}-2^{g-1}-(g+1)^2}{2^g\cdot (g+1)!}.
\ee
Since $a^g_{2g-2}$ counts trivalent graphs of genus $g$,
it equals to $b_{g-1}$ in \cite[(65), (66)]{dyz}, and so
\be
\sum_{g \geq 2} a^g_{2g-2} x^{g-1}
= \log \biggl(\sum_{m=0}^\infty \frac{(6m)!}{(3m)! (2m)!} \big(\frac{x}{288}\big)^m\biggr).
\ee
It is interesting to compute $a^g_d$ for $3 \leq d < 2g-2$.
Formula \eqref{generating-a} and the direct counting of graphs
are not very effective for this purpose.
We will apply the quadratic recursion relations developed in Section \ref{sec:1D-Rec}
to derive some recursion relations for $a^g_d$.

In \eqref{eqn:Q-Rec} and \eqref{eqn:d-X},
we take $I_k = T$ for all $k \geq 1$, and then $d_X$ changes
to an operator $D$ such that
\ben
D T=T,\qquad D\kappa=\kappa^2 T
\een
then we get from \eqref{eqn:Q-Rec} the following recursion relation:
\be \label{eqn:st-Rec}
\pd_\kappa F^{st}_g = \frac{1}{2} \big((D+\kappa T) DF^{st}_{g-1}+
\sum^{g-1}_{r=1} D F^{st}_r\cdot DF^{st}_{g-r} \big),
\ee
where we have to add the following convention:
\be
F_1^{st} = \half \log \kappa.
\ee
From this we recursively:
\ben
F_2^{st}&=&\frac{T\kappa^2}{8}+\frac{5T^2\kappa^3}{24},\\
F_3^{st}&=&\frac{T\kappa^3}{48}+\frac{11T^2\kappa^4}{48}+\frac{25T^3\kappa^5}{48}
+\frac{5T^4\kappa^6}{16},\\
F_4^{st}&=&\frac{T\kappa^4}{384}+\frac{223T^2\kappa^5}{1920}+\frac{515T^3\kappa^6}{576}
+\frac{1373T^4\kappa^7}{576}
+\frac{985T^5\kappa^8}{384}+\frac{1105T^6\kappa^9}{1152}.
\een
We have
\ben
&&DF_g^{st}=\sum_d  a_d^g d T^d \kappa^{g-1+d}
+\sum_d  a_d^g (g-1+d) T^{d+1} \kappa^{g+d},\\
&&D^2 F_g^{st}=\sum_d  a_d^g d^2 T^d \kappa^{g-1+d}
+\sum_d  a_d^g (g-1+d)d T^{d+1} \kappa^{g+d}\\
&&\;\;+\sum_d  a_d^g (g-1+d)(d+1) T^{d+1} \kappa^{g+d}
+\sum_d  a_d^g (g-1+d)(g+d) T^{d+2} \kappa^{g+d+1}.
\een
Then \eqref{eqn:st-Rec} gives us
\ben
&&2 \sum_{d}a_d^g (g-1+d) T^d \kappa^{g-2+d}\\
&=&(D+\kappa T)DF_{g-1}^{st}+\sum_{r=1}^{g-1}DF_r^{st}DF_{g-r}^{st}\\
&=&\sum_d  a_d^{g-1} d^2 T^d \kappa^{g-2+d}
+\sum_d  a_d^{g-1} (g-2+d)d T^{d+1} \kappa^{g-1+d}\\
&&+\sum_d  a_d^{g-1} (g-2+d)(d+1) T^{d+1} \kappa^{g-1+d}\\
&&+\sum_d  a_d^{g-1} (g-2+d)(g-1+d) T^{d+2} \kappa^{g+d}\\
&&+2\biggl[\sum_d  a_d^{g-1} d T^{d+1} \kappa^{g-1+d}+
\sum_d  a_d^{g-1} (g-2+d) T^{d+2} \kappa^{g+d}\biggr]\\
&&+\sum_{r=2}^{g-2}
\biggl(\sum_d  a_d^r d T^d \kappa^{r-1+d}
+\sum_d  a_d^r (r-1+d) T^{d+1} \kappa^{r+d}\biggr)\cdot\\
&&\qquad\quad\biggl(\sum_d  a_d^{g-r} d T^d \kappa^{g-r-1+d}
+\sum_d  a_d^{g-r} (g-r-1+d) T^{d+1} \kappa^{g-r+d}\biggr)
\een
for $g\geq 3$. Therefore the following quadratic recursion relations hold for $a^g_d$:
\ben
a_d^g& =&\frac{1}{2(g-1+d)}\biggl\{
\sum_d  a_d^{g-1} d^2 +\sum_d  a_{d-2}^{g-1} (g-4+d)(g-1+d) \\
&+&\sum_d  a_{d-1}^{g-1} \big[(g-3+d)(2d-1)+2(d-1)\big] \\
&+&\sum_{r=2}^{g-2}\biggl(
\sum_{d_1+d_2=d}a_{d_1}^r a_{d_2}^{g-r}d_1 d_2
+\sum_{d_1+d_2=d-1}a_{d_1}^r a_{d_2}^{g-r}(r-1+d_1) d_2\\
&+&\sum_{d_1+d_2=d-1}a_{d_1}^r a_{d_2}^{g-r}d_1(g-r-1+ d_2)\\
&+&\sum_{d_1+d_2=d-2}a_{d_1}^r a_{d_2}^{g-r}(r-1+d_1) (g-r-1+ d_2) \biggr)\biggr\}.
\een

\subsection{Example: Enumeration of graphs}

In this subsection we study another example:
the model describing enumeration of graphs,
not necessarily stable,
introduced and studied in \cite{dyz}.
The partition function for this case is
\be \label{eqn:gr}
Z(t)=  \frac{1}{\sqrt{2\pi \lambda^2}}
\int d\eta \exp\big[\lambda^{-2}(t\cdot e^\eta-\frac{\eta^2}{2})\big].
\ee
The critical point $\eta=T$ of the function $f(\eta)=t\cdot e^\eta-\frac{\eta^2}{2}$ is given by
\be \label{eqn:Lambert}
T=t\cdot e^T,
\ee
whose solution may be explicitly given by the Lambert series \cite{dyz}:
\be
T=\sum_{d=1}^{\infty}\frac{d^{d-1}}{d!}t^d.
\ee
Write $x=\eta-T$, then the expansion $f(\eta)$ at $T$ is
\ben
f(\eta) & = & (te^T-\frac{T^2}{2})
+\frac{te^T-1}{2}x^2+\sum_{n=3}^{\infty}\frac{te^T}{n!}x^n \\
& = & (T-\frac{T^2}{2}) +\frac{T-1}{2}x^2+\sum_{n=3}^{\infty}\frac{T}{n!}x^n.
\een
Therefore in this case, the propagator is
\be\label{eg3-kappa}
\kappa=\frac{1}{1-T},
\ee
and the point on the space of coupling constants is specified by
\be\label{eg3-I}
I_0=T; \quad I_k=T, \quad k\geq 1,
\ee
or in the coordinates $\{t_k\}$,
\be
t_k=t, \quad k\geq 0.
\ee
By \eqref{eqn:F-1D},
the free energy is
\be
 F^{gr}_g=\sum_{\Gamma\in\cG_{g,0}^{c,0}}\frac{1}{|\Aut(\Gamma)|}\cdot
\frac{T^{|V(\Gamma)|}}{(1-T)^{|E(\Gamma)|}},
\ee
For example,
\ben
F^{gr}_0&=&T-\frac{T^2}{2},\quad F^{1D}_1=\frac{1}{2}\log\frac{1}{1-T},\\
F^{gr}_2&=&\frac{T}{8(1-T)^2}+\frac{5T^2}{24(1-T)^3},\\
F^{gr}_3&=&\frac{T}{48(1-T)^3}+\frac{11T^2}{48(1-T)^4}+\frac{25T^3}{48(1-T)^5}
+\frac{5T^4}{16(1-T)^6},\\
F^{gr}_4&=&\frac{T}{384(1-T)^4}+\frac{223T^2}{1920(1-T)^5}
+\frac{515T^3}{576(1-T)^6}+\frac{1373T^4}{576(1-T)^7}\\
&+&\frac{985T^5}{384(1-T)^8}+\frac{1105T^6}{1152(1-T)^9}.
\een
These match with $\cG_k$ in \cite[(63)]{dyz}.

By comparing \eqref{eqn:gr} with \eqref{eqn:Z-st},
it is clear that for $g \geq 2$,
\be
F_g^{gr} = F^{st}_g(T, \frac{1}{1-T}).
\ee
Therefore,
for $g\geq 2$ the free energy $F_g^{gr}$ can be written in the form
\ben
F_g^{gr}=\sum_{d=1}^{2g-2}\frac{a^g_d\cdot T^d}{(1-T)^{d+g-1}}.
\een
Moreover, note that for $g \geq 2$, $F_g^{gr}$ is of the form
\be
F_g^{gr}=\frac{f_g(T)}{(1-T)^{3g-3}},
\ee
where $f_g(T)$ is a polynomial of degree $\leq 2g-2$ without constant term.
For example,
\ben
F^{gr}_2&=&\frac{T}{24(1-T)^3}(3+2T),\\
F^{gr}_3&=&\frac{T}{48(1-T)^6}(1+8T+6T^2),\\
F^{gr}_4&=&\frac{T}{5760(1-T)^9}(15+594T+2624T^2+2144T^3+164T^4-16T^5),\\
F^{gr}_5&=&\frac{T}{11520(1-T)^{12}}(3+465T+6730T^2+21940T^3+18940T^4\\
&+& 3012T^5-240T^6),\\
F^{gr}_6&=&\frac{T}{2903040(1-T)^{15}}(63+35568T+1349298T^2+11582816T^3+31178616T^4\\
&&+27897072T^5+6526912T^6-266448T^7-36576T^8+2304T^9).
\een
Let
\ben
f_g(T)=\sum_{n=1}^{2g-2} b_n^g\cdot T^n,
\een
then the two type of coefficients
$\{a_k^g\}$ and $\{b_k^g\}$ are related by
\be\label{relation-ab}
b_k^g=\sum_{l=1}^{k}(-1)^{k+l}\binom{2g-2-l}{k-l}a_l^g.
\ee
In particular,
\be
\begin{split}
b_1^g=&a_1^g=\frac{1}{(2g)!!},\\
b_2^g=&a_2^g-(2g-3)a_1^g\\
=&\frac{2^{2g}-2^{g-1}-(g+1)^2}{2^g\cdot (g+1)!}-\frac{2g-3}{(2g)!!}.
\end{split}
\ee

As pointed out in \cite[(64), (65)]{dyz}, $F_g^{gr}$ is of the form
$F_g^{gr}=\sum_{n=g-1}^{3g-3}\lambda_{g,n}\kappa^n$, and
\ben
\lambda_{g,g-1}=\frac{B_g}{g(g-1)},\quad \lambda_{g,3g-3}=b_{g-1},
\een
where $B_g$ are the Bernoulli numbers, and $b_g$ are given by
\ben
\sum_{n=1}^{\infty}b_r x^r=\log\biggl(
\sum_{m=0}^{\infty}\frac{(6m)!}{(3m)!(2m)!}
\big(\frac{x}{288}\big)^m\biggr).
\een
The numbers $\{b_g\}$ are exactly the coefficients appearing in the expansion of
the Airy function \eqref{airy-S}.
Here are some examples of $F_g^{gr}$ written as polynomials in $\kappa$:
\ben
F^{gr}_2&=&\frac {5} {24}\kappa^3 - \frac {7} {24}\kappa^2 + \frac {1} {12}\kappa,\\
F^{gr}_3&=&\frac{5}{16} \kappa^6 - \frac{35}{48} \kappa^5 + \frac{13}{24} \kappa^4 - \frac{1}{8} \kappa^3,\\
F^{gr}_4&=&\frac{1105}{1152} \kappa^9 - \frac{1225}{384} \kappa^8 + \frac{2273}{576 }\kappa^7 - \frac{313}{
 144} \kappa^6 + \frac{227}{480} \kappa^5 - \frac{17}{1440} \kappa^4 - \frac{1}{360} \kappa^3,\\
F^{gr}_5&=&\frac{565}{128} \kappa^{12} -\frac{ 14665}{768} \kappa^{11} +\frac{ 76367}{2304} \kappa^{10} -\frac{ 11191}{
 384} \kappa^9 + \frac{2557}{192} \kappa^8 - \frac{7993}{2880} \kappa^7 \\
 &+ &\frac{37}{320} \kappa^6 +\frac{ 1}{
 48} \kappa^5,\\
F^{gr}_6&=&\frac{82825}{3072} \kappa^{15} - \frac{441245}{3072} \kappa^{14} +\frac{ 493235}{
 1536} \kappa^{13} - \frac{16116187}{41472} \kappa^{12} +\frac{ 2827135}{10368} \kappa^{11} \\
 &- &\frac{1884983}{
 17280} \kappa^{10} +\frac{ 567289}{25920} \kappa^9 - \frac{7489}{6480} \kappa^8 - \frac{10249}{
 60480} \kappa^7 + \frac{47}{10080} \kappa^6 + \frac{1}{1260} \kappa^5,\\
F^{gr}_7&=&\frac{19675}{96} \kappa^{18} - \frac{7969325}{6144} \kappa^{17} +\frac{ 65405005}{
 18432} \kappa^{16} -\frac{ 453853985}{82944} \kappa^{15} \\
 &+& \frac{ 215237149}{41472} \kappa^{14}
  -\frac{ 64035527}{20736} \kappa^{13} +
 \frac{23126555}{20736} \kappa^{12} - \frac{11204309}{51840} \kappa^{11} \\
& + & \frac{1352989}{103680} \kappa^{10} +\frac{ 6481}{4032} \kappa^9
 - \frac{1927}{20160} \kappa^8 - \frac{1}{72} \kappa^7.
\een
Plug $T=1-\frac{1}{\kappa}$ into \eqref{generating-a},
we get the generating series for $\{\lambda_{g,k}\}$:
\be
\begin{split}
&\sum_{g=2}^\infty \lambda^{2g-2}
\sum_{d=g-1}^{3g-3}\lambda_{g,d} \kappa^d\\
=&\log\biggl(\sum_{k\geq 0}\sum_{l_i\geq 3}\sum_l
\frac{\lambda^{2l-2k}}{k!\cdot l_1!\cdots l_k!}\cdot
(2l-1)!!\cdot (\kappa-1)^k \kappa^{l-k}\cdot\delta_{l_1+\cdots+l_k-2l,0}\biggr).
\end{split}
\ee

The relations among $\{a_k^g\}$, $\{b_k^g\}$ and $\{\lambda_{g,k}\}$
are given by \eqref{relation-ab} and
\ben
&& \lambda_{g,3g-3-k}=(-1)^k\sum_{l=k}^{2g-2}\binom{l}{k}b_l^g, \\
&& b^g_n  = (-1)^n \sum_{j=g-1}^{3g-3}\binom{3g-3-j}{n} \lambda_{g,j}, \\
&& \lambda_{g,3g-3-k}=\sum_{l=0}^{k}(-1)^{k+l}\binom{2g-2-l}{k-l}a_{2g-2-l}.
\een

Now we apply the quadratic recursion relations for topological 1D
developed in Section \ref{sec:1D-Rec} to this problem.
For this purpose,
we need to find an operator that satisfies the conditions in \eqref{eqn:d-X}.
We can take $d_X$ to be $D=\frac{\partial}{\partial(\log T)}=T\frac{\partial}{\partial T}$,
because we have
\begin{align*}
I_k &= T, &\kappa &= \frac{1}{1-T},
\end{align*}
the conditions in \eqref{eqn:d-X} are satisfied by $D$.
So we get the following equation from \eqref{eqn:Q-Rec}:
\be \label{eqn:gr-Rec}
\pd_\kappa F^{gr} = \frac{1}{2}((D+\kappa T)DF_{g-1}^{gr}+
\sum^{g-1}_{r=1} DF_r^{gr}DF_{g-r}^{gr}),
\ee

\begin{Lemma}\label{lemma-eg3}
For $g\geq 2$, we have
\be
(\kappa+\kappa^2 T)\partial_\kappa F^{gr}_g-D F^{gr}_g=(g-1) F^{gr}_g.
\ee
\end{Lemma}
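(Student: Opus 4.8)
The plan is to exploit the explicit weighted-homogeneous structure of $F^{gr}_g$ rather than to manipulate the graph sum directly. From the graph-sum formula $F^{gr}_g=\sum_{\Gamma\in\cG_{g,0}^{c,0}}\frac{1}{|\Aut(\Gamma)|}T^{|V(\Gamma)|}\kappa^{|E(\Gamma)|}$ together with the identity $|E(\Gamma)|=g-1+|V(\Gamma)|$, valid for every connected graph of genus $g$ all of whose vertices carry genus $0$ (the genus being the first Betti number $|E|-|V|+1$), every monomial of $F^{gr}_g$ has the shape $T^d\kappa^{g-1+d}$. This is exactly the form $F^{gr}_g=\sum_{d=1}^{2g-2}a^g_d\,T^d\kappa^{d+g-1}$ recorded above. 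First I would state this observation cleanly, treating $T$ and $\kappa$ as two independent formal variables and writing $F^{gr}_g=P_g(T,\kappa)$.

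The crucial algebraic input is then the Euler (weighted-homogeneity) relation: since each monomial $T^d\kappa^{g-1+d}$ has $\kappa$-degree exactly $g-1$ larger than its $T$-degree, applying $\kappa\partial_\kappa$ picks up the factor $g-1+d$ while $T\partial_T$ picks up $d$, so termwise
\[
(\kappa\partial_\kappa-T\partial_T)P_g=(g-1)P_g.
\]
Next I would disentangle the two meanings of ``derivative'' in the statement. The operator $\partial_\kappa$ appearing in the lemma is the realization of the edge-cutting operator, that is, the genuine partial derivative $\partial P_g/\partial\kappa$ with $T$ held fixed. The operator $D=T\partial_T=\partial/\partial(\log T)$, by contrast, is the total derivative along $T$ of the one-variable function obtained after substituting $\kappa=1/(1-T)$; by the chain rule, using $d\kappa/dT=\kappa^2$, it acts on $P_g$ as
\[
D=T\frac{\partial}{\partial T}+T\kappa^2\frac{\partial}{\partial\kappa}.
\]

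With these two expressions in hand the proof reduces to a one-line cancellation. Substituting, the left-hand side of the lemma becomes
\[
(\kappa+\kappa^2 T)\partial_\kappa P_g-\Bigl(T\partial_T+T\kappa^2\partial_\kappa\Bigr)P_g=\kappa\partial_\kappa P_g-T\partial_T P_g,
\]
because the two terms $\kappa^2 T\,\partial_\kappa P_g$ cancel exactly; this is precisely the role of the coefficient $\kappa+\kappa^2T$. Invoking the Euler relation then gives $(g-1)P_g=(g-1)F^{gr}_g$, as required. I do not expect a genuine obstacle here: the only thing demanding care is bookkeeping, namely keeping straight that $\partial_\kappa$ is a partial derivative whereas $D$ is a total derivative, so that the chain-rule term $T\kappa^2\partial_\kappa$ is produced and is available to cancel the $\kappa^2T\partial_\kappa$ coming from the prefactor. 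An equivalent route would avoid $P_g$ altogether and verify the identity monomial-by-monomial on $T^d\kappa^{g-1+d}$ using $D(T^d\kappa^{g-1+d})=dT^d\kappa^{g-1+d}+(g-1+d)T^{d+1}\kappa^{g+d}$ and $\partial_\kappa(T^d\kappa^{g-1+d})=(g-1+d)T^d\kappa^{g-2+d}$, whereupon the $T^{d+1}\kappa^{g+d}$ contributions cancel and the surviving coefficient of $T^d\kappa^{g-1+d}$ is $(g-1+d)-d=g-1$.
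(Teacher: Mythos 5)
Your proof is correct and is essentially the paper's own argument: the paper likewise observes that every monomial of $F^{gr}_g$ has the form $a_{g,k}T^k\kappa^{k+g-1}$ (i.e.\ $|E(\Gamma)|=g-1+|V(\Gamma)|$) and verifies the identity termwise from $\kappa\partial_\kappa(T^k\kappa^{k+g-1})=(k+g-1)T^k\kappa^{k+g-1}$ and $D(T^k\kappa^{k+g-1})=kT^k\kappa^{k+g-1}+(k+g-1)T^{k+1}\kappa^{k+g}$. Your repackaging via the Euler operator $\kappa\partial_\kappa-T\partial_T$ and the chain-rule decomposition $D=T\partial_T+T\kappa^2\partial_\kappa$ is just a cleaner phrasing of the same computation.
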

\begin{proof}
Every monomial in $ F^{gr}_g$ for $g\geq 2$ is of the form $a_{g,k} T^k\kappa^{k+g-1}$.
Then the lemma follows from
\ben
\kappa\partial_\kappa(T^k\kappa^{k+g-1})=(k+g-1)T^k\kappa^{k+g-1}
\een
and
\ben
D(T^k\kappa^{k+g-1})=kT^k\kappa^{k+g-1}+(k+g-1)T^{k}\kappa^{k+g-2}\cdot\kappa^2 T.
\een
\end{proof}

Since $\kappa+\kappa^2 T=\frac{1}{(1-T)^2}$,
\eqref{eqn:gr-Rec} and Lemma \ref{lemma-eg3} give us:

\begin{Theorem}
For $g\geq 2$, we have
\be\label{eg3-eqthm}
(D+g-1) F^{gr}_g=\frac{1}{2(1-T)^2}[(D+\frac{T}{1-T})D F^{gr}_{g-1}
+\sum_{r=1}^{g-1}(D F^{gr}_r)(D F^{gr}_{g-r})].
\ee
\end{Theorem}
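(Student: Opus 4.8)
The plan is to derive \eqref{eg3-eqthm} purely algebraically by combining the already-established recursion \eqref{eqn:gr-Rec} with Lemma \ref{lemma-eg3}, together with the substitutions $\kappa = \frac{1}{1-T}$ and $\kappa T = \frac{T}{1-T}$ valid on the Lambert locus \eqref{eqn:Lambert}. The key observation is that \eqref{eqn:gr-Rec} is naturally phrased with the operator $\pd_\kappa$, which differentiates in the polynomial realization of the edge-cutting operator $K$ (treating $T$ and $\kappa$ as independent), whereas the target relation \eqref{eg3-eqthm} is phrased entirely in terms of $D = T\pd_T$ acting along the curve $\kappa = \kappa(T)$. Lemma \ref{lemma-eg3} is exactly the bridge between these two descriptions, so the entire argument reduces to a short manipulation.

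First I would multiply both sides of \eqref{eqn:gr-Rec} by the factor $\kappa + \kappa^2 T$. On the left this produces $(\kappa + \kappa^2 T)\pd_\kappa F^{gr}_g$, which by Lemma \ref{lemma-eg3} equals $DF^{gr}_g + (g-1)F^{gr}_g = (D+g-1)F^{gr}_g$, precisely the left-hand side of the claimed identity. On the right I would pull the common coefficient out and use $\kappa + \kappa^2 T = \frac{1}{(1-T)^2}$ to convert the prefactor $\frac{1}{2}(\kappa+\kappa^2 T)$ into $\frac{1}{2(1-T)^2}$, while the inner operator $(D+\kappa T)$ becomes $(D+\frac{T}{1-T})$ upon substituting $\kappa T = \frac{T}{1-T}$. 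The bilinear sum $\sum_{r=1}^{g-1}(DF^{gr}_r)(DF^{gr}_{g-r})$ is untouched by these substitutions, so the two sides agree term by term and \eqref{eg3-eqthm} follows.

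The only step demanding care — and the main potential obstacle — is the bookkeeping of which variables are held fixed under each operator. Since $\pd_\kappa$ differentiates in the polynomial representation (with $T$, $\kappa$ independent) while $D=T\pd_T$ differentiates along $\kappa=\kappa(T)$, I must check that multiplying \eqref{eqn:gr-Rec} by $\kappa+\kappa^2 T$ and then invoking Lemma \ref{lemma-eg3} is legitimate, i.e. that both statements refer to the same homogeneous monomials $a_{g,k}T^k\kappa^{k+g-1}$ appearing in $F^{gr}_g$. This is guaranteed because, for $g\geq 2$, the realization in Section \ref{sec:1D-Rec} forces every term of $F^{gr}_g$ into this shape, since each edge contributes one factor of $\kappa$ and the genus constraint fixes the total $\kappa$-degree relative to the $T$-degree; this homogeneity is exactly the input used in the proof of Lemma \ref{lemma-eg3}. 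Once it is in place, no computation beyond the two substitutions above is required.
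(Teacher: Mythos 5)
Your proposal is correct and follows essentially the same route as the paper: multiply the recursion \eqref{eqn:gr-Rec} by $\kappa+\kappa^2 T=\frac{1}{(1-T)^2}$, apply Lemma \ref{lemma-eg3} to convert the left-hand side into $(D+g-1)F^{gr}_g$, and substitute $\kappa T=\frac{T}{1-T}$ on the right. Your added remark on the homogeneity of the monomials $a_{g,k}T^k\kappa^{k+g-1}$ is exactly the input already used in the paper's proof of Lemma \ref{lemma-eg3}, so no new issues arise.
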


This theorem provides a way to compute $F^{gr}_g$ recursively.
In fact, the left-hand side of \eqref{eg3-eqthm} can be rewritten as
\be
\begin{split}
(D+g-1) F^{gr}_g=&T\frac{d}{dT}F^{gr} _g  (T)+(g-1)F^{gr}_g (T)\\
=&T^{2-g}\frac{d}{dT}(T^{g-1}\cdot F^{gr}_g (T)),
\end{split}
\ee
thus $F^{gr}_g$ is determined by the formula
\be\label{rec-integral}
F_g^{gr}(T)=T^{1-g}\cdot\int_0^T \frac{T^{g-2}}{2(1-T)^2}\biggl[(D+\frac{T}{1-T})D F^{gr}_{g-1}
+\sum_{r=1}^{g-1}(D F^{gr}_r)(D F^{gr}_{g-r})\biggr]dT.
\ee

Moreover, for $g \geq 2$, when $F_g^{gr}(T)$ is written as a polynomial
in the propagator $\kappa=(1-T)^{-1}$ using $T=1-\frac{1}{\kappa}$,
the recursion relation \eqref{eg3-eqthm} can be written as follows.

\begin{Theorem}\label{eg3-kappa-rec}
For $g\geq 2$, we have
\be
\begin{split}
\big[(\kappa^2-\kappa)\frac{d}{d\kappa}+g-1\big]F_g^{gr}=&\frac{\kappa^2}{2}\biggl[
(\kappa^2-\kappa)^2\frac{d^2F_{g-1}^{gr}}{d\kappa^2}
+(3\kappa-2)(\kappa^2-\kappa)\frac{dF_{g-1}^{gr}}{d\kappa}\\
&+(\kappa^2-\kappa)^2\sum_{r=1}^{g-1}\frac{dF_{r}^{gr}}{d\kappa}
\cdot\frac{dF_{g-r}^{gr}}{d\kappa}\biggr].
\end{split}
\ee
In particular, for $g\geq 3$,
\be\label{kappa-rec}
\begin{split}
\big[(\kappa^2-\kappa)\frac{d}{d\kappa}+g-1\big]F_g^{gr}=&\frac{\kappa^2}{2}\biggl[
(\kappa^2-\kappa)^2\frac{d^2F_{g-1}^{gr}}{d\kappa^2}
+(4\kappa-3)(\kappa^2-\kappa)\frac{dF_{g-1}^{gr}}{d\kappa}\\
&+(\kappa^2-\kappa)^2\sum_{r=2}^{g-2}\frac{dF_{r}^{gr}}{d\kappa}\cdot
\frac{dF_{g-r}^{gr}}{d\kappa}\biggr].
\end{split}
\ee
\end{Theorem}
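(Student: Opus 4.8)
The plan is to prove this theorem as a direct change of variables in the already-established recursion \eqref{eg3-eqthm}, trading the variable $T$ for the propagator $\kappa=(1-T)^{-1}$. The essential first step is to re-express the operator $D=T\frac{\partial}{\partial T}$ in terms of $\frac{d}{d\kappa}$. Since $\kappa=(1-T)^{-1}$ gives $\frac{d\kappa}{dT}=(1-T)^{-2}=\kappa^2$ and $T=(\kappa-1)/\kappa$, the chain rule yields $\frac{d}{dT}=\kappa^2\frac{d}{d\kappa}$, and hence
\[
D=T\frac{d}{dT}=\frac{\kappa-1}{\kappa}\cdot\kappa^2\frac{d}{d\kappa}=(\kappa^2-\kappa)\frac{d}{d\kappa}.
\]
With this identity in hand the left-hand side $(D+g-1)F_g^{gr}$ transcribes immediately to $\big[(\kappa^2-\kappa)\frac{d}{d\kappa}+g-1\big]F_g^{gr}$.

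For the right-hand side I would first record the elementary identities $\frac{1}{2(1-T)^2}=\frac{\kappa^2}{2}$ and $\frac{T}{1-T}=T\kappa=\kappa-1$, so that $D+\frac{T}{1-T}=(\kappa^2-\kappa)\frac{d}{d\kappa}+(\kappa-1)$. Applying $D$ twice one computes
\[
D^2F_{g-1}^{gr}=(\kappa^2-\kappa)^2\frac{d^2F_{g-1}^{gr}}{d\kappa^2}+(2\kappa-1)(\kappa^2-\kappa)\frac{dF_{g-1}^{gr}}{d\kappa},
\]
and adding $\frac{T}{1-T}DF_{g-1}^{gr}=(\kappa-1)(\kappa^2-\kappa)\frac{dF_{g-1}^{gr}}{d\kappa}$ collapses the two first-order coefficients into $(3\kappa-2)(\kappa^2-\kappa)$. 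Since each factor $DF_r^{gr}$ equals $(\kappa^2-\kappa)\frac{dF_r^{gr}}{d\kappa}$, the quadratic sum acquires the uniform factor $(\kappa^2-\kappa)^2$. Multiplying through by $\frac{\kappa^2}{2}$ then reproduces verbatim the first displayed equation of the theorem.

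To obtain the sharper form \eqref{kappa-rec} for $g\geq 3$, I would isolate the boundary terms $r=1$ and $r=g-1$ of the quadratic sum, which are distinct precisely when $g\geq 3$ and so contribute with multiplicity two. Using $F_1^{gr}=\frac{1}{2}\log\kappa$, so that $\frac{dF_1^{gr}}{d\kappa}=\frac{1}{2\kappa}$, these two terms together produce $\frac{\kappa^2}{2}(\kappa-1)(\kappa^2-\kappa)\frac{dF_{g-1}^{gr}}{d\kappa}$ (using $\frac{(\kappa^2-\kappa)^2}{\kappa}=(\kappa-1)(\kappa^2-\kappa)$), which combines with the existing $(3\kappa-2)$ coefficient via $(3\kappa-2)+(\kappa-1)=4\kappa-3$ to give $(4\kappa-3)(\kappa^2-\kappa)$, leaving only the truncated sum $\sum_{r=2}^{g-2}$. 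The computation is entirely routine; the only points demanding care are the correct bookkeeping of the change of variables, for which the identity $D=(\kappa^2-\kappa)\frac{d}{d\kappa}$ underlies everything, and the treatment of the boundary terms of the sum, in particular the factor of two and the explicit value of $F_1^{gr}$. I do not expect any genuine obstacle beyond this algebra.
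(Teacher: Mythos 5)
Your proposal is correct and is exactly the argument the paper intends: the theorem is obtained from the $T$-variable recursion \eqref{eg3-eqthm} by the substitution $D=(\kappa^2-\kappa)\frac{d}{d\kappa}$, and your bookkeeping of $D^2$, the prefactor $\frac{\kappa^2}{2}$, and the two distinct boundary terms $r=1,\,g-1$ (using $\frac{dF_1^{gr}}{d\kappa}=\frac{1}{2\kappa}$) all checks out. No issues.
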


The equation \eqref{kappa-rec} can be rewritten as
\be
\begin{split}
&\frac{d}{d\kappa} \biggl( (1-\kappa)^{g-1}\frac{F_g^{gr}}{\kappa^{g-1}}\biggr)\\
=&-\frac{1}{\kappa^2} \big( \frac{1}{\kappa} -1 \big)^{g-2}
\biggl[ (g-1)F_g^{gr}+(\kappa^2-\kappa)\frac{dF_g^{gr}}{d \kappa} \biggr]\\
=&-\frac{1}{2}\frac{(1-\kappa)^{g-2}}{\kappa^{g-2}}\biggl[
(\kappa^2-\kappa)^2\frac{d^2F_{g-1}^{gr}}{d\kappa^2}+
(4\kappa-3)(\kappa^2-\kappa)\frac{dF_{g-1}^{gr}}{d\kappa}\\
&\qquad\qquad +(\kappa^2-\kappa)^2\sum_{r=2}^{g-2}\frac{dF_{r}^{gr}}{d\kappa}
\cdot\frac{dF_{g-r}^{gr}}{d\kappa}\biggr].
\end{split}
\ee
Now if we want to integrate this equality with respect to $\kappa$,
we need to determine the constant of integration.
Note that $(1-\kappa)^{g-1}\frac{F_g^{gr}}{\kappa^{g-1}}$ is a polynomial in $\kappa$
whose constant term is $\lambda_{g,g-1}=\frac{B_g}{g(g-1)}$, therefore
\be
\begin{split}
F_g^{gr}=&\frac{\kappa^{g-1}}{(1-\kappa)^{g-1}}\biggl\{ \frac{B_g}{g(g-1)}+ \int_0^\kappa\biggl[-\frac{1}{2}\frac{(1-\kappa)^{g-2}}{\kappa^{g-2}}\biggl(
(\kappa^2-\kappa)^2\frac{d^2F_{g-1}^{gr}}{d\kappa^2}+\\
&(4\kappa-3)(\kappa^2-\kappa)\frac{dF_{g-1}^{gr}}{d\kappa}
+(\kappa^2-\kappa)^2\sum_{r=2}^{g-2}\frac{dF_{r}^{gr}}{d\kappa}
\cdot\frac{dF_{g-r}^{gr}}{d\kappa}\biggr)\biggr]d\kappa\biggr\}.
\end{split}
\ee

Now let us derive some recursion relations for the coefficients $\{\lambda_{g,k}\}$
using Theorem \ref{eg3-kappa-rec}.
Expand the two sides of \eqref{kappa-rec}, we get
\ben
&&\sum_l \big[l(\kappa-1)+(g-1)\big]\lambda_{g,l}\kappa^l\\
&=&\frac{\kappa^2}{2}\biggr[\sum_l(\kappa-1)^2 l(l-1)\lambda_{g-1,l}\kappa^l
+\sum_l (4\kappa-3)(\kappa-1)l\lambda_{g-1,l}\kappa^l\\
&+&\sum_{r=2}^{g-2}\sum_{l_1,l_2}(\kappa-1)^2
l_1 l_2\lambda_{r,l_1}\lambda_{g-r,l_2}\kappa^{l_1+l_2}\biggr],
\een
which gives us
\be\label{lambda-rec}
\begin{split}
&(l-1)\lambda_{g,l-1}+(g-l-1)\lambda_{g,l}\\
=&\frac{1}{2}(l-4)(l-1)\lambda_{g-1,l-4}-\frac{1}{2}(l-3)(2l-1)\lambda_{g-1,l-3}
+\frac{1}{2}(l-2)l\lambda_{g-1,l-2}\\
&+\frac{1}{2}\sum_{r=2}^{g-2}\biggl(\sum_{l_1+l_2=l-4}l_1l_2\lambda_{r,l_1}\lambda_{g-r,l_2}
-2\sum_{l_1+l_2=l-3}l_1l_2\lambda_{r,l_1}\lambda_{g-r,l_2}\\
&+\sum_{l_1+l_2=l-2}l_1l_2\lambda_{r,l_1}\lambda_{g-r,l_2}\biggr).
\end{split}
\ee
Note that $\lambda_{g,n}=0$ unless $g-1 \leq n \leq 3g-3$,
thus the above relation indeed determines $\{\lambda_{g,n}\}$ uniquely.
In fact, first we set $l=3g-2$, then \eqref{lambda-rec} gives
\ben
\lambda_{g,3g-3}
&=&\frac{1}{2}(3g-6)\lambda_{g-1,3g-6}-\frac{(3g-5)(6g-5)}{2(3g-3)}\lambda_{g-1,3g-5}
\\
&+&\frac{(3g-4)(3g-2)}{2(3g-3)}\lambda_{g-1,3g-4}+\frac{1}{2(3g-3)}\sum_{r=2}^{g-2}\biggl(\sum_{l_1+l_2=3g-6}l_1l_2\lambda_{r,l_1}\lambda_{g-r,l_2}
\\
&-&2\sum_{l_1+l_2=3g-5}l_1l_2\lambda_{r,l_1}\lambda_{g-r,l_2}+\sum_{l_1+l_2=3g-4}l_1l_2\lambda_{r,l_1}\lambda_{g-r,l_2}\biggr).
\een
Then set $l=3g-3$, \eqref{lambda-rec} gives
\ben
&&(3g-4)\lambda_{g,3g-4}+(2-2g)\lambda_{g,3g-3}\\
&=&\frac{1}{2}(3g-7)(3g-4)\lambda_{g-1,3g-7}-\frac{1}{2}(3g-6)(6g-7)\lambda_{g-1,3g-6}
\\
&+&\frac{1}{2}(3g-5)l\lambda_{g-1,3g-5}+\frac{1}{2}\sum_{r=2}^{g-2}\biggl(\sum_{l_1+l_2=3g-7}l_1l_2\lambda_{r,l_1}\lambda_{g-r,l_2}
\\
&-&2\sum_{l_1+l_2=3g-6}l_1l_2\lambda_{r,l_1}\lambda_{g-r,l_2}+\sum_{l_1+l_2=3g-5}l_1l_2\lambda_{r,l_1}\lambda_{g-r,l_2}\biggr),
\een
from which we can solve $\lambda_{g,3g-4}$ using $\lambda_{g,3g-3}$ and lower genus $\lambda_{r,n}$. Similarly $\lambda_{g,n}$
can be solved recursively from $n=3g-3$ to $n=g-1$ using \eqref{lambda-rec}.

\vspace{.2in}

{\em Acknowledgements}.
We thank an anonymous referee for helpful suggestions that improve the presentation of this paper. 
The second author is partly supported by NSFC grant 11661131005.

\newpage
\begin{appendices}

\section{Some Explicit Expressions for $\widehat{F}_{g,n}$}\label{app1}

In this appendix, we give more examples of the abstract free energies.

For the case $N=1$, here are the explicit expressions for $\wF_{2,2}$ and $\wF_3$ in terms of Feynman graphs.

\begin{flalign*}
\begin{split}
&
.
\end{split}&&
\end{flalign*}

And in terms of the derivatives of
the holomorphic free energy $F(t)$ together with the propagator $\kappa$,
the explicit expressions for $\wF_3$ and $\wF_4$ are as follows.
\begin{flalign*}
\begin{split}
\wF_3=F_3&+(\frac{1}{2}F_2''+F_1'F_2')\kappa\\
+&[\frac{1}{8}F_1^{(4)}+\frac{1}{4}(F_1'')^2+\frac{1}{2}F_0'''F_2'+\frac{1}{2}F_1'F_1'''+\frac{1}{2}(F_1')^2F_1'']\kappa^2\\
+&[\frac{1}{48}F_0^{(6)}+\frac{1}{4}F_0^{(4)}F_1''+\frac{5}{12}F_0'''F_1'''+\frac{1}{8}F_0^{(5)}F_1'+F_0'''F_1'F_1''\\
&+\frac{1}{4}F_0^{(4)}(F_1')^2+\frac{1}{6}F_0'''(F_1')^3]\kappa^3\\
+&[\frac{1}{12}(F_0^{(4)})^2+\frac{7}{48}F_0'''F_0^{(5)}+\frac{5}{8}(F_0''')^2F_1''+\frac{2}{3}F_0'''F_0^{(4)}F_1'+\frac{1}{2}(F_0''')^2(F_1')^2]\kappa^4\\
+&[\frac{25}{48}(F_0''')^2F_0^{(4)}+\frac{5}{8}(F_0''')^3F_1']\kappa^5+\frac{5}{16}(F_0''')^4\kappa^6.
\end{split}&&
\end{flalign*}

\begin{flalign*}
\begin{split}
\wF_4=F_4&+[\frac{1}{2}(F_2')^2+F_1'F_3'+\frac{1}{2}F_3'']\kappa\\
+&[F_1'F_1''F_2'+\frac{1}{2}(F_1')^2F_2''+\frac{1}{2}F_1''F_2''+\frac{1}{2}F_0'''F_3'+\frac{1}{2}F_1'''F_2'+\frac{1}{2}F_1'F_2'''\\
&+\frac{1}{8}F_2^{(4)}]\kappa^2\\
+&[\frac{1}{2}(F_1')^2(F_1'')^2+\frac{1}{6}(F_1'')^3+\frac{1}{2}F_0'''(F_1')^2F_2'+F_0'''F_1''F_2'+F_0'''F_1'F_2''\\
&+\frac{1}{6}(F_1')^3F_1'''+F_1'F_1''F_1'''+\frac{5}{24}(F_1''')^2+\frac{5}{12}F_0'''F_2'''+\frac{1}{2}F_0^{(4)}F_1'F_2'\\
&+\frac{1}{4}F_0^{(4)}F_2''+\frac{1}{4}(F_1')^2F_1^{(4)}+\frac{1}{4}F_1''F_1^{(4)}+\frac{1}{8}F_0^{(5)}F_2'+\frac{1}{8}F_1^{(5)}F_1'+\frac{1}{48}F_1^{(6)}]\kappa^3\\
+&[\frac{1}{2}F_0'''(F_1')^3F_1''+\frac{3}{2}F_0'''F_1'(F_1'')^2+(F_0''')^2F_1'F_2'+\frac{5}{8}(F_0''')^2F_2''+F_0'''(F_1')^2F_1'''\\
&+\frac{5}{4}F_0'''F_1''F_1'''+\frac{1}{24}F_0^{(4)}(F_1')^4+\frac{3}{4}F_0^{(4)}(F_1')^2F_1''+\frac{3}{8}F_0^{(4)}(F_1'')^2\\
&+\frac{2}{3}F_0'''F_0^{(4)}F_2'+\frac{2}{3}F_0^{(4)}F_1'F_1'''+\frac{2}{3}F_0'''F_1'F_1^{(4)}+\frac{1}{6}F_0^{(4)}F_1^{(4)}+\frac{1}{12}F_0^{(5)}(F_1')^3\\
&+\frac{3}{8}F_0^{(5)}F_1'F_1''+\frac{7}{48}F_0^{(5)}F_1'''+\frac{7}{48}F_0'''F_1^{(5)}+\frac{1}{16}F_0^{(6)}(F_1')^2+\frac{1}{16}F_0^{(6)}F_1''\\
&+\frac{1}{48}F_0^{(7)}F_1'+\frac{1}{384}F_0^{(8)}]\kappa^4\\
+&[\frac{1}{8}(F_0''')^2(F_1)^4+2(F_0''')^2(F_1')^2F_1''+\frac{5}{4}(F_0''')^2(F_1'')^2+\frac{5}{8}(F_0''')^3F_2'\\
&+\frac{15}{8}(F_0''')^2F_1'F_1'''+\frac{7}{12}F_0'''F_0^{(4)}(F_1')^3+\frac{8}{3}F_0'''F_0^{(4)}F_1'F_1''+\frac{25}{24}F_0'''F_0^{(4)}F_1'''\\
&+\frac{1}{3}(F_0^{(4)})^2(F_1')^2+\frac{1}{3}(F_0^{(4)})^2F_1''+\frac{25}{48}(F_0''')^2F_1^{(4)}+\frac{25}{48}F_0'''F_0^{(5)}(F_1')^2\\
&+\frac{7}{12}F_0'''F_0^{(5)}F_1''+\frac{5}{16}F_0^{(4)}F_0^{(5)}F_1'+\frac{21}{640}(F_0^{(5)})^2+\frac{5}{24}F_0'''F_0^{(6)}F_1'\\
&+\frac{5}{96}F_0^{(4)}F_0^{(6)}+\frac{1}{32}F_0'''F_0^{(7)}]\kappa^5\\
+&[\frac{2}{3}(F_0''')^3(F_1')^3+\frac{25}{8}(F_0''')^3F_1'F_1''+\frac{5}{4}(F_0''')^3F_1'''+\frac{109}{48}(F_0''')^2F_0^{(4)}(F_1')^2\\
&+\frac{125}{48}(F_0''')^2F_0^{(4)}F_1''+\frac{11}{8}F_0'''(F_0^{(4)})^2F_1'+\frac{11}{96}(F_0^{(4)})^3+\frac{53}{48}(F_0''')^2F_0^{(5)}F_1'\\
&+\frac{7}{12}F_0'''F_0^{(4)}F_0^{(5)}+\frac{113}{576}(F_0''')^2F_0^{(6)}]\kappa^6\\
+&[\frac{25}{16}(F_0''')^4(F_1')^2+\frac{15}{8}(F_0''')^4F_1''+\frac{185}{48}(F_0''')^3F_0^{(4)}F_1'+\frac{445}{288}(F_0''')^2(F_0^{(4)})^2\\
&+\frac{161}{192}(F_0''')^3F_0^{(5)}]\kappa^7\\
+&[\frac{15}{8}(F_0''')^5F_1'+\frac{985}{384}(F_0''')^4F_0^{(4)}]\kappa^8+\frac{1105}{1152}(F_0''')^6\kappa^9.
\end{split}&&
\end{flalign*}

In particular, by setting $F_g=0$ for all $g\geq 1$ and $F_0^{(k+1)}=I_k$,
we recover the expressions \eqref{1d-f2f3} and \eqref{1d-f4} for $F_3^{1D}$ and $F_4^{1D}$.

\end{appendices}

\end{document}